\documentclass[a4paper,12pt]{preprint}
\usepackage[margin=1.3in]{geometry}  % set the margins to 1in on all sides
\usepackage{times}
\usepackage{hyperref}
\usepackage{breakurl}
\usepackage{comment}
\usepackage{microtype}

\usepackage{graphicx}              % to include figures
\usepackage{amsmath}
\usepackage{amscd}               % great math stuff
\usepackage{amsfonts} 
\usepackage{amssymb}             % for blackboard bold, etc
\usepackage{amsthm}                % better theorem environment
\usepackage{mathrsfs}
\usepackage{amsbsy}
\usepackage{mhequ}
\usepackage{tikz}
\usepackage{wasysym}
\usepackage{centernot}

\numberwithin{equation}{section}

% various theorems, numbered by section
\newtheorem{thm}{Theorem}[section]
\newtheorem{defn}[thm]{Definition}
\newtheorem{lem}[thm]{Lemma}
\newtheorem{prop}[thm]{Proposition}

\newtheorem{assumption}[thm]{Assumption}
\DeclareMathOperator{\id}{id}
\def\Wick#1{\,\colon\!\! #1 \!\colon}

\makeatletter
\def\th@newremark{\th@remark\thm@headfont{\bfseries}}
\makeatletter

\def\eps{\epsilon}
\theoremstyle{newremark}
\newtheorem{rmk}[thm]{Remark}

\definecolor{darkgreen}{rgb}{0.1,0.7,0.1}
\definecolor{darkred}{rgb}{0.7,0.1,0.1}
\addtolength{\marginparwidth}{2.3em}

     %for complex numbers

     %mathematical expectation

     %probability

      % for Real numbers

      % for integer

\newcommand{\bB}{\mathcal{B}}
\newcommand{\cC}{\mathcal{C}}
\newcommand{\dD}{\mathcal{D}}
\newcommand{\eE}{\mathcal{E}}
\newcommand{\fF}{\mathcal{F}}
\newcommand{\gG}{\mathcal{G}}

\newcommand{\iI}{\mathcal{I}}

\newcommand{\lL}{\mathcal{L}}
\newcommand{\mM}{\mathcal{M}}
\newcommand{\nN}{\mathcal{N}}
\newcommand{\oO}{\mathcal{O}}
\newcommand{\pP}{\mathcal{P}}
\newcommand{\qQ}{\mathcal{Q}}
\newcommand{\rR}{\mathcal{R}}

\newcommand{\tT}{\mathcal{T}}
\newcommand{\uU}{\mathcal{U}}
\newcommand{\vV}{\mathcal{V}}
\newcommand{\wW}{\mathcal{W}}
\newcommand{\xX}{\mathcal{X}}

\newcommand{\fC}{\mathfrak{C}}

\newcommand{\fM}{\mathfrak{M}}
\newcommand{\fR}{\mathfrak{R}}

\newcommand{\fK}{\mathfrak{K}}

\newcommand{\CA}{\mathcal{A}}
\newcommand{\CE}{\mathcal{E}}
\newcommand{\CP}{\mathcal{P}}

\newcommand{\CX}{\mathcal{X}}
\def\E{\mathbf{E}}

\def\eps{\epsilon}

\def\powerset{\mathscr{P}}
\def\Cum{\mathbf{E}_c}

\def\one{\mathbf{1}}

\def\emptyset{\mathop{\centernot\ocircle}}

\newcommand{\lfl}{\left\lfloor }  % Left floor
\newcommand{\rfl}{\right\rfloor} % Right floor
\newcommand{\R}{\mathbf{R}}
\newcommand{\T}{\mathbf{T}}

\newcommand{\1}{\mathbf{1}}

\newcommand{\sE}{\mathscr{E}}

\newcommand{\sJ}{\mathscr{J}}
\newcommand{\sL}{\mathscr{L}}
\newcommand{\sM}{\mathscr{M}}

\def\Wck{{\scriptscriptstyle\mathrm{Wick}}}

\newcommand{\wM}{M^{\Wck}}

\newcommand{\wDelta}{\Delta^{\Wck}}
\newcommand{\wPi}{\Pi^{\Wck}}
\newcommand{\wf}{f^{\Wck}}

\newcommand{\ha}{\widehat{a}}

\newcommand{\heE}{\widehat{\eE}}

\newcommand{\hPi}{\widehat{\Pi}}

\colorlet{symbols}{blue!90!black}
\colorlet{testcolor}{green!60!black}

\def\${|\!|\!|}

\usetikzlibrary{shapes.misc}
\usetikzlibrary{shapes.symbols}
\usetikzlibrary{snakes}
\usetikzlibrary{decorations}
\usetikzlibrary{decorations.markings}

\def\drawx{\draw[-,solid] (-3pt,-3pt) -- (3pt,3pt);\draw[-,solid] (-3pt,3pt) -- (3pt,-3pt);}
\tikzset{
	root/.style={circle,fill=testcolor,inner sep=0pt, minimum size=2mm},
	dot/.style={circle,fill=black,inner sep=0pt, minimum size=1mm},
	var/.style={circle,fill=black!10,draw=black,inner sep=0pt, minimum size=
		2mm},
	bvar/.style={circle,fill=black!15,draw=white,inner sep=0pt, minimum size=
		8mm},
	dotred/.style={circle,fill=black!50,inner sep=0pt, minimum size=2mm},
	generic/.style={semithick,shorten >=1pt,shorten <=1pt},
	dist/.style={ultra thick,draw=testcolor,shorten >=1pt,shorten <=1pt},
	testfcn/.style={ultra thick,testcolor,shorten >=1pt,shorten <=1pt,<-},
	testfcnx/.style={ultra thick,testcolor,shorten >=1pt,shorten <=1pt,<-,
		postaction={decorate,decoration={markings,mark=at position 0.6 with {\drawx}}}},
	kepsilon/.style={semithick,shorten >=1pt,shorten <=1pt,densely dashed,->},
	kprimex/.style={semithick,shorten >=1pt,shorten <=1pt,densely dashed,->,
		postaction={decorate,decoration={markings,mark=at position 0.4 with {\drawx}}}},
	kernel/.style={semithick,shorten >=1pt,shorten <=1pt,->},
	multx/.style={shorten >=1pt,shorten <=1pt,
		postaction={decorate,decoration={markings,mark=at position 0.5 with {\drawx}}}},
	kernelx/.style={semithick,shorten >=1pt,shorten <=1pt,->,
		postaction={decorate,decoration={markings,mark=at position 0.4 with {\drawx}}}},
	kernel1/.style={->,semithick,shorten >=1pt,shorten <=1pt,postaction={decorate,decoration={markings,mark=at position 0.45 with {\draw[-] (0,-0.1) -- (0,0.1);}}}},
	kernel2/.style={->,semithick,shorten >=1pt,shorten <=1pt,postaction={decorate,decoration={markings,mark=at position 0.45 with {\draw[-] (0.05,-0.1) -- (0.05,0.1);\draw[-] (-0.05,-0.1) -- (-0.05,0.1);}}}},
	kernelBig/.style={semithick,shorten >=1pt,shorten <=1pt,decorate, decoration={zigzag,amplitude=1.5pt,segment length = 3pt,pre length=2pt,post length=2pt}},
	gepsilon/.style={dotted,semithick,shorten >=1pt,shorten <=1pt},
	renorm/.style={shape=circle,fill=white,inner sep=1pt},
	labl/.style={shape=rectangle,fill=white,inner sep=1pt},
	xi/.style={circle,fill=symbols!10,draw=symbols,inner sep=0pt,minimum size=1.2mm},
	xix/.style={crosscircle,fill=symbols!10,draw=symbols,inner sep=0pt,minimum size=1.2mm},
	xib/.style={circle,fill=symbols!10,draw=symbols,inner sep=0pt,minimum size=1.6mm},
	xibx/.style={crosscircle,fill=symbols!10,draw=symbols,inner sep=0pt,minimum size=1.6mm},
	not/.style={circle,fill=symbols,draw=symbols,inner sep=0pt,minimum size=0.5mm},
	>=stealth,
}

\makeatletter
\def\DeclareSymbol#1#2#3{\expandafter\gdef\csname MH@symb@#1\endcsname{\tikz[baseline=#2,scale=0.15,draw=symbols]{#3}}\expandafter\gdef\csname MH@symb@#1s\endcsname{\scalebox{0.7}{\tikz[baseline=#2,scale=0.15,draw=symbols]{#3}}}}
\def\<#1>{\csname MH@symb@#1\endcsname}
\makeatother

\DeclareSymbol{Xi22}{0.5}{\draw (0,0) node[xi] {} -- (-1,1) node[not] {} -- (0,2) node[xi] {}; }

\def\scal#1{\langle#1\rangle}

\begin{document}

\title{Weak universality of dynamical $\Phi^4_3$: non-Gaussian noise}
\author{Hao Shen$^1$ and Weijun Xu$^2$}
\institute{Columbia University, US, \email{pkushenhao@gmail.com}
\and University of Warwick, UK, \email{weijun.xu@warwick.ac.uk}}

\maketitle

\begin{abstract}
We consider a class of continuous phase coexistence models in three spatial dimensions. The fluctuations are driven by symmetric stationary random fields with sufficient integrability and mixing conditions, but not necessarily Gaussian. We show that, in the weakly nonlinear regime, if the external potential is a symmetric polynomial and a certain average of it exhibits pitchfork bifurcation, then these models all rescale to $\Phi^4_3$ near their critical point. 
\end{abstract}

\section{Introduction} \label{sec:intro}

The dynamical $\Phi^4_d$ model is formally given by
\begin{equation} \label{eq:phi43}
\partial_{t} \Phi = \Delta \Phi - \lambda \Phi^{3} + \xi, 
\end{equation}
where $\xi$ is the space-time white noise in $d$ spatial dimensions, namely the Gaussian random field with covariance formally given by
\begin{align*}
\E \xi(s,x) \xi(t,y) = \delta (s-t) \delta^{(d)} (x-y). 
\end{align*}
This equation was initially derived from the stochastic quantisation of the Euclidean quantum field theory \cite{PW}. It is also believed to be the scaling limit for Kac-Ising model near critical temperature (conjectured in \cite{GLP99}). Note that formally, one could rescale $\Phi$ to turn the coefficient in front of the cubic term to be $1$, but we still retain that $\lambda$ here in order to simplify the scaling later. 

In dimension $d=1$, the equation is well-posed, and it was shown in \cite{BPRS93} that the dynamic Ising model on the real line does rescale to it. 
However, as soon as $d \geq 2$, the equation becomes ill-posed in the sense that the ``solution" $\Phi$ is distribution valued so that the non-linearity $\Phi^{3}$ does not make sense. In $d=2$, the solution could be constructed either via Dirichlet forms (\cite{AR91}) or via Wick ordering (\cite{DPD03}, by turning the term $\Phi^{3}$ into the Wick product $\Wick{\Phi^{3}}$ with respect to the Gaussian structure induced by the linearised equation). Recently, Mourrat and Weber (\cite{Ising2d}) showed that $2$D Ising model with Glauber dynamics and Kac interaction does converge to the Wick ordered solution constructed in \cite{DPD03}. 

The three dimensional case is much more involved as it exhibits further logarithmic divergence beyond Wick ordering. It became amenable to analysis only very recently, either via the theory of regularity structures (\cite{Hai14a}), or via controlled para-products (\cite{GIP12, CC13}), or Wilson's renormalisation group approach (\cite{Antti}). In all these cases, the solution is obtained as the limit of smooth solutions $\Phi_{\epsilon}$ to the equations
\begin{equation} \label{eq:phi43_approx}
\partial_{t} \Phi_{\epsilon} = \Delta \Phi_{\epsilon} - \lambda \Phi_{\epsilon}^{3} + \xi_{\epsilon} + C_{\epsilon} \Phi_{\epsilon}, 
\end{equation}
where $\xi_{\epsilon}$ is a regularisation of the noise $\xi$ at scale $\epsilon$, and $C_{\epsilon} \rightarrow +\infty$ could be chosen such that the limit is \textit{independent} of the regularisation. For $d \geq 4$, one does not expect to get any nontrivial limit (\cite{Aizenman, Frohlich}). 

The three dimensional equation is also believed to be the universal model for phase coexistence near criticality. In the recent work \cite{HaiXu}, Hairer and the second author studied phase coexistence models in three spatial dimensions of the type
\begin{equation} \label{eq:micro_model}
\partial_{t} u = \Delta u - \epsilon V_{\theta}'(u) + \zeta, 
\end{equation}
where $\zeta$ is a space-time Gaussian random field with correlation length $1$, and $V_{\theta}$ is a polynomial potential whose coefficients depends smoothly on a parameter $\theta$. The main result of \cite{HaiXu}, built on analogous results for the KPZ equation in \cite{HQ15}, is that if $V$ is symmetric and its certain average exhibits a pitchfork bifurcation near the origin, then there exists a choice of
\begin{align*}
\theta = a \epsilon \log \epsilon + \oO(\epsilon)
\end{align*}
such that the rescaled process
\begin{equation} \label{eq:rescaled_solution}
u_{\epsilon}(t,x) := \epsilon^{-\frac{1}{2}} u(t/\epsilon^{2}, x/\epsilon), 
\end{equation}
converge to the solution of the $\Phi^4_3$ equation, interpreted as the limit in \eqref{eq:phi43_approx}, and the coefficient $\lambda$ of the cubic term in the limit is a linear combination of all coefficients $a_{j}(0)$'s of the polynomial $V_{0}$. We denote this limit as the $\Phi^4_3(\lambda)$ family (where the term ``family" comes from the fact that one could perturb the finite part of the infinite renormalisation constant, and end up with a one-parameter family of limits). The main purpose of this article is to show that when $\zeta$ is a symmetric, stationary noise with strong mixing but not necessarily Gaussian, then the macroscopic process $u_{\epsilon}$ still converges to the $\Phi^4_3$ family with a proper choice of $\theta$ (depending on $\epsilon$). 

To appreciate the difficulties with non-Gaussian noise, we first note that the macroscopic process $u_{\epsilon}$ satisfies the equation
\begin{equation} \label{eq:rescaled_equation}
\partial_{t} u_{\epsilon} = \Delta u_{\epsilon} - \epsilon^{-\frac{3}{2}} V_{\theta}'(\epsilon^{\frac{1}{2}} u_{\epsilon}) + \zeta_{\epsilon}, 
\end{equation}
where
\begin{equation} \label{eq:rescaled_noise}
\zeta_{\epsilon} (t,x) = \epsilon^{-\frac{5}{2}} \zeta (t/\epsilon^{2}, x/\epsilon)
\end{equation}
is an approximation to the space-time white noise at scale $\epsilon$. Since the right hand side of \eqref{eq:rescaled_equation} involves diverging products of $u_{\epsilon}$ as $\epsilon \rightarrow 0$, we recall from \cite{Hai14a}, \cite{HQ15} and \cite{HaiXu} that the strategy is to lift for each $\epsilon > 0$ the equation \eqref{eq:rescaled_equation} to an abstract space of regularity structures associated with a (random) model, solve the lifted equation as an abstract fixed point problem, and then project down to the ``real" solution with the reconstruction operator. An essential ingredient in this procedure is to bound arbitrarily high moments of the renormalised random models to prove their convergence. 

If the random models are all built from Gaussian noise, then by equivalence of moments, one only needs to check the bounds for the second moments, and those for all higher moments will follow immediately. However, for non-Gaussian noise, one has to bound the $p$-th moments of the random models by hand for all $p$, which involves complicated expressions of generalised convolutions of singular kernels. 

In \cite{HQ15}, the authors developed a graphic machinery for generalised convolutions, and reduced the estimates of these complicated convolutions to the verifications of certain conditions on labelled graphs. The $p$-th moment of a random object is the sum of certain graphs, each obtained by glueing together the nodes of $p$ identical trees in a certain way, and each of the identical trees represents that random object. When $p$ is large, it is in general very hard to keep track of all the conditions that need to be checked. In \cite{KPZCLT} and \cite{AjayHao}, the verification procedure was further reduced to checking the conditions for every sub-tree of a single tree (instead of a large graph consisting of $p$ trees) that represents the random object (see Assumption \ref{ass:ele-graph} and Theorem \ref{th:moment-bound} below). This significantly simplifies the verification; see for example \cite{KPZCLT} in the case of KPZ equation. 

However, in the current problem, Eq. \eqref{eq:rescaled_equation} will involve arbitrarily large powers of $u_{\epsilon}$ in contrast to \cite{KPZCLT} where only the second power appears. Thus, even for a single tree, it will be very difficult to check the conditions in Assumption \ref{ass:ele-graph} for  \textit{every} sub-tree of it. In addition, trees built from the non-linearity with high powers of $u_{\epsilon}$ will in general fail the integrability conditions (1) -- (3) in Assumption \ref{ass:ele-graph}. 

To overcome the second difficulty, we employ the positive ``homogeneities" represented by the multiplication of $\epsilon$, and incorporate them into the ``non-integrable" trees in a certain way to create ``integrable" trees (see \eqref{eq:macro_equation} and the beginning of Section \ref{sec:second_order_bounds}). On the other hand, to systematically and efficiently checking the conditions for these new ``integrable" trees, we make a key observation in Lemma \ref{le:simple-verification} below, which shows that one only needs to check the conditions for very few sub-trees, and the verification for all other sub-trees will automatically follow from that. We expect these treatments and simplifications will apply to other situations, for example proving universality of the KPZ equation for polynomial microscopic growth models with non-Gaussian noise. 

Before we state our main theorem, we first give precise assumptions on the noise $\zeta$ as well as the potential $V_{\theta}$. 

\begin{assumption} \label{ass:noise}
	We assume the random field $\zeta$ defined on $\R \times (\epsilon^{-1} \T)^{3}$ satisfies the following properties: 
	\begin{enumerate}
		\item $\zeta$ is symmetric (in the sense that $\zeta\stackrel{\text{law}}{=}-\zeta$), stationary, continuous, and $\E |\zeta (z)|^{p} < +\infty$ for all $p > 1$. 
		
		\item $\zeta$ is normalised so that $\varrho (z) = \E \zeta(0) \zeta (z)$ integrates to $1$. 
		
		\item For any two compact sets $K_{1}, K_{2} \subset \R \times (\epsilon^{-1} \T)^{3}$ with distance at least $1$ away from each other, the two sigma fields $\fF_{K_{1}}$ and $\fF_{K_{2}}$ are independent,
		where   $\fF_{K_{i}}$ is the $\sigma$-algebra generated by the point evaluations $\{\zeta(z)\,:\, z\in K_i\}$.
	\end{enumerate}
\end{assumption}

Typical examples of such random fields are smeared-out versions of Poisson point processes with uniform space-time intensity. We refer to \cite{KPZCLT} for more details of these and related examples. Given such a random field $\zeta$, we let $\Psi$ to be the stationary solution to the equation
\begin{equation} \label{eq:linearised}
\partial_{t} \Psi = \Delta \Psi + \zeta. 
\end{equation}
Since we are in dimension three, such a stationary solution exists as the square of the heat kernel is integrable at large scales. Let $\mu$ denote the distribution of $\Psi$ evaluated at a space-time point, and define the averaged potential $\scal{V_{\theta}}$ to be
\begin{equation} \label{eq:average_potential}
\scal{V_{\theta}}(x) := \int_{\R} V_{\theta}(x+y) \mu (dy). 
\end{equation}
%That is, $\scal{V_\theta}$ is the average of $V_{\theta}$ against the stationary measure $\mu$. 
Since $\zeta$ is assumed to be symmetric, 
$\mu$ is also symmetric, so $\scal{V_\theta}$ is the convolution
of $V_\theta$ and $\mu$.
Our main assumption on $V$ is the following. 

\begin{assumption} \label{ass:potential}
	Let $\mu$ denote the distribution of the stationary solution to \eqref{eq:linearised}, where $\zeta$ is a random field satisfying Assumption \ref{ass:noise}. Let $V: (\theta,x) \mapsto V_{\theta}(x)$ be a symmetric polynomial in $x$ with coefficients depending smoothly on $\theta$. 
	We assume the averaged potential $\scal{V_{\theta}}$ obtained from $V_{\theta}$ via \eqref{eq:average_potential} has a pitchfork bifurcation near the origin in the sense that
	\begin{equation} \label{eq:pitchfork_v}
	\frac{\partial^{4} \scal{V}}{\partial x^{4}} (0,0) > 0, \qquad \frac{\partial^{2} \scal{V}}{\partial x^{2}} (0,0) = 0, \qquad \frac{\partial^{3} \scal{V}}{\partial \theta \phantom{1} \partial x^{2}} (0,0) < 0. 
	\end{equation}
\end{assumption}

\begin{rmk}
	Note that it is always the derivative $V_{\theta}'$ rather than $V_{\theta}$ that appears in the dynamics, so the constant term of $V$ will not have any role in this context. Suppose
	\begin{equation} \label{eq:average_expression}
		\scal{V_{\theta}}'(x) = \sum_{j=0}^{m} \ha_{j}(\theta) x^{2j+1}, 
	\end{equation}
	then the pitchfork bifurcation assumption \eqref{eq:pitchfork_v} reads
	\begin{equation} \label{eq:pitchfork}
		\ha_{1} > 0, \qquad \ha_{0} = 0, \qquad \ha_{0}' < 0, 
	\end{equation}
	where we have used the notation $\ha_{j} = \ha_{j}(0)$ and $\ha_{j}' = \ha_{j}'(0)$ for simplicity. From now on, we will assume the averaged potential $\scal{V}$ has the expression \eqref{eq:average_expression}, and satisfies the pitchfork bifurcation assumption \eqref{eq:pitchfork}. 
\end{rmk}

Our main theorem can be loosely stated as follows.

\begin{thm} \label{th:main_loose}
Let $\zeta$ and $V$ satisfy Assumptions \ref{ass:noise} and \ref{ass:potential}. Let $u$ (depending on both $\epsilon$ and $\theta$) be the solution to \eqref{eq:micro_model}. Then, there exists $a < 0$ such that for $\theta = a \epsilon |\log \epsilon| + \oO(\epsilon)$, the rescaled process $u_{\epsilon}$ in \eqref{eq:rescaled_solution} converges to the $\Phi^4_3 (\ha_{1})$ family solutions. 
\end{thm}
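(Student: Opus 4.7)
The plan is to follow the regularity structures strategy developed in \cite{Hai14a, HaiXu}, combined with the non-Gaussian moment machinery of \cite{KPZCLT, AjayHao}. The analytic fixed-point part of the proof proceeds essentially as in \cite{HaiXu}; the real work lies in recasting the rescaled equation \eqref{eq:rescaled_equation} so that every relevant symbol is well behaved, and then in proving convergence of the corresponding renormalised random models.

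First, I would rewrite the nonlinearity using the averaged potential. Since $V_\theta'$ is an odd polynomial, the rescaled drift $\epsilon^{-3/2} V_\theta'(\epsilon^{1/2} u_\epsilon)$ is a finite sum of terms proportional to $\epsilon^{j-1} u_\epsilon^{2j+1}$. Substituting $u_\epsilon = \Psi_\epsilon + v_\epsilon$, where $\Psi_\epsilon$ is the stationary solution to the linearised rescaled equation, and replacing each $\Psi_\epsilon^k$ by its ``generalised Wick power'' with respect to the law $\mu$, one reorganises the bare coefficients of $V_\theta'$ into the averaged coefficients $\ha_j(\theta)$ appearing in \eqref{eq:average_expression}, up to terms that acquire additional positive powers of $\epsilon$. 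The pitchfork conditions \eqref{eq:pitchfork} then show that the would-be mass term $\epsilon^{-3/2}\ha_0(\theta)\,u_\epsilon$ can be absorbed by a unique choice $\theta = a\epsilon|\log\epsilon| + \oO(\epsilon)$, where $a$ is fixed by matching the logarithmic counter-terms produced by the stochastic estimates and $a<0$ follows from $\ha_0'(0)<0$; the $j=1$ contribution produces the cubic $\ha_1 u_\epsilon^3$ once a Wick counter-term is subtracted; and every $j\geq 2$ term carries a strictly positive power of $\epsilon$.

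Second, I would lift the resulting equation to an abstract regularity structure and solve it as a fixed-point problem as in \cite{HaiXu}. The stochastic heart of the argument is to show that the models $\hPi^\epsilon \tau$ associated with all symbols $\tau$ of negative homogeneity converge in $L^p$ for arbitrarily large $p$, uniformly in $\epsilon$. Using the graphic framework of \cite{KPZCLT, AjayHao}, the $p$-th moment of such a model is expressed as a sum over graphs obtained by gluing copies of a single elementary tree representing $\tau$, and Theorem \ref{th:moment-bound} reduces the required bound to the sub-tree conditions of Assumption \ref{ass:ele-graph}. This is where the main obstacle sits: because the equation contains arbitrarily high powers of $u_\epsilon$, many sub-trees violate the integrability conditions (1)--(3). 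The remedy is to redistribute the positive powers of $\epsilon$ produced in the previous step onto the offending trees, in the manner sketched around \eqref{eq:macro_equation}, which lowers their effective homogeneities into the integrable range. To avoid an infeasible case-by-case check, I would invoke Lemma \ref{le:simple-verification}, which reduces the verification to a small list of distinguished sub-trees and then propagates automatically to all others.

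Finally, once convergence of the models is established, the abstract fixed-point solution converges to the reconstruction of the limiting model, which solves $\Phi^4_3$ with cubic coefficient $\ha_1$; the residual one-parameter freedom in the finite renormalisation constant produces exactly the $\Phi^4_3(\ha_1)$ family. I expect the most delicate step to be the combinatorial verification via Lemma \ref{le:simple-verification} for the full list of symbols generated by a polynomial nonlinearity of arbitrary degree, since the number of symbol types grows with the degree of $V$ and each must be paired with the correct allocation of positive $\epsilon$-powers so that the distinguished sub-trees actually satisfy Assumption \ref{ass:ele-graph}.
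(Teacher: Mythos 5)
Your roadmap matches the paper's overall strategy: rewrite the rescaled drift in terms of Wick powers so that the averaged coefficients $\ha_j(\theta)$ emerge, lift to a regularity structure and solve the abstract fixed point, prove moment bounds on the renormalised models via the graphic machinery of \cite{KPZCLT,AjayHao} together with the $\epsilon$-allocation and Lemma~\ref{le:simple-verification}, and fix $\theta = a\epsilon|\log\epsilon|+\oO(\epsilon)$ with $a<0$ by matching the logarithmic counter-term against $\ha_0'<0$. All of this is consistent with Sections~\ref{sec:setup}--\ref{sec:limit}.

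However, there is a genuine gap in the passage from \emph{uniform moment bounds} to \emph{convergence of the models}. Theorem~\ref{th:moment-bound} and Assumption~\ref{ass:ele-graph} give bounds of the form $\E|(\hPi_z^\epsilon\tau)(\varphi_z^\lambda)|^p\lesssim\lambda^{p(|\tau|+\theta)}$ uniformly in $\epsilon$, but these do not by themselves establish that $\fM_\epsilon\to\fM$ in the $|\!|\!|\,\cdot\,;\cdot\,|\!|\!|_{\epsilon;0}$ topology, which is what feeds into Theorem~\ref{th:abstract}. Indeed, the paper explicitly points out (remark following Theorem~\ref{th:main_bound}) that a direct bound $\E|(\hPi_z^\epsilon\tau-\hPi_z\tau)(\varphi_z^\lambda)|^p\lesssim\epsilon^\theta\lambda^{p(|\tau|+\theta)}$ for the $\eE$-free $\Phi^4_3$ symbols would be ``much more involved'' for non-Gaussian noise. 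Instead, Theorem~\ref{th:main_bound} proves the comparison estimate against the model built from the \emph{further-mollified} noise $\zeta_{\epsilon,\bar\epsilon}=\zeta_\epsilon*\rho_{\bar\epsilon}$, and Theorem~\ref{th:model_convergence} then closes the argument with a diagonal argument: the triangle inequality $\E|\!|\!|\fM_\epsilon;\fM|\!|\!|\leq\E|\!|\!|\fM_\epsilon;\fM_{\epsilon,\bar\epsilon}|\!|\!|+\E|\!|\!|\fM_{\epsilon,\bar\epsilon};\fM_{0,\bar\epsilon}|\!|\!|+\E|\!|\!|\fM_{0,\bar\epsilon};\fM|\!|\!|$, where the last term is the known Gaussian convergence of \cite{HaiXu}, the middle term is a comparison of smooth models that vanishes as $\epsilon\to 0$ for each fixed $\bar\epsilon$, and the first is controlled uniformly in $\epsilon$ by the mollification bound; one sends $\epsilon\to 0$ and then $\bar\epsilon\to 0$. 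Your sketch leaps from ``Theorem~\ref{th:moment-bound} gives the bounds'' to ``once convergence of the models is established,'' without supplying this mechanism. Absent the mollification comparison and the two-step limit, the proof would need a direct identification of a non-Gaussian limiting model and a rate estimate for the $\Phi^4_3$ symbols, which is precisely what the paper avoids.
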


\begin{rmk}
The symmetry assumption on both the potential and the noise is essential to get $\Phi^4_3$ limit. In \cite{HaiXu}, it was shown that with Gaussian (symmetric) noise but asymmetric potential, the system would in general rescale to either OU process or $\Phi^3_3$, depending on the value of $\theta$. We expect the same phenomena to happen with asymmetric noise even if the potential is symmetric. 
\end{rmk}

Before we proceed, we first briefly explain why the pitchfork bifurcation assumption naturally appears on $\scal{V}$ but not $V$. To see this, we note that if $u$ solves the microscopic equation \eqref{eq:micro_model}, then the rescaled process $u_{\epsilon}$ defined in \eqref{eq:rescaled_solution} solves the equation
\begin{equation} \label{eq:macro_equation}
\partial_{t} u_{\epsilon} = \Delta u_{\epsilon} - \sum_{j=0}^{m} a_{j}(\theta) \epsilon^{j-1} u_{\epsilon}^{2j+1} + \zeta_{\epsilon}, 
\end{equation}
where $a_{j}(\theta)$'s are the coefficients of $V_{\theta}'$, and
\begin{equation} \label{eq:rescaled_field}
\zeta_{\epsilon} (t,x) = \epsilon^{-\frac{5}{2}} \zeta(t/\epsilon^{2}, x/\epsilon)
\end{equation}
is the rescaled noise. Heuristically, as $u_{\epsilon}^{2} \sim \epsilon^{-1}$ as $\epsilon \rightarrow 0$, so each of the terms $\epsilon^{j-1} u_{\epsilon}^{2j+1}$ ($j \geq 1$) has a ``diverging part" of the order $\epsilon^{-1} u_{\epsilon}$, and the combined effect of these divergences is then a constant multiple of that order. It turns out that the value of this constant is nothing but
\begin{align*}
\int_{\R} V_{0}'(y) \mu(dy) =: \ha_{0}, 
\end{align*}
so $\ha_{0} = 0$ in \eqref{eq:pitchfork} is a necessary condition for the right hand side of \eqref{eq:macro_equation} to converge. In fact, as we will see later, with $\ha_{0} = 0$, the Wick renormalisation (for non-Gaussian noise $\zeta_{\epsilon}$) is automatically taken account of, and thus the coefficients of these divergent terms with order $\epsilon^{-1}$ will precisely cancel out each other. 

Also, as in the standard $\Phi^4_3$ case, there will be further logarithmic divergence beyond the Wick ordering. But this divergence can be renormalised by a suitable choice of the small parameter $\theta$ thanks to $\ha_{0}' \neq 0$. Finally, $\ha_{1} > 0$ is needed for the final limit to be the $\Phi^4_3$ family. 

The article is organised as follows. In Section \ref{sec:Wick}, we briefly recall the facts of non-Gaussian random variables and their Wick powers. Section \ref{sec:setup} is devoted to the construction of the regularity structures, the abstract fixed point equation as well as the renormalisation group. Most of the set-up in these sections could be found in \cite{Hai14a}, \cite{KPZCLT}, and \cite{HaiXu}, so we keep the presentation to a minimum and refer to those articles for more details. 

In Section \ref{sec:bounds}, we prove the bounds for the renormalised models constructed in Section \ref{sec:renormalisation} with proper renormalisation constants. These bounds are necessary in proving the convergence of renormalised models. Finally, in Section \ref{sec:limit}, we collect the results in all previous sections to prove the convergence of our models to the desired $\Phi^4_3$ limit at large scales.

\subsection*{Acknowledgements}

{
We would like to thank 
Ajay Chandra for discussions on general moment bounds. WX acknowledges the Mathematical Sciences Research Institute in Berkeley, California for its warm hospitality, and the National Science Foundation for supporting his stay here. 
}

\section{Non-Gaussian Wick powers and averaged potential} \label{sec:Wick}

In this section, we review some basic properties of Wick polynomials for non-Gaussian random variables. We keep the presentation to the minimum here and only give definitions and statements that will be used later. More details can be found in Section 3 of \cite{KPZCLT}. 

\subsection{Joint cumulants} \label{sec:cumulants}

We start with the definition of joint cumulants. Given a collection of random variables $\xX = \{X_\alpha\}_{\alpha \in A}$ for some index set $A$ and a subset $B \subset A$, we write $X_B \subset \xX$ and $X^B$ as short-hands for
\begin{equation}
	X_B = \{X_\alpha\,:\, \alpha \in B\} \;,\qquad X^B = \prod_{\alpha\in B}X_\alpha\;.
\end{equation}
Given a finite set $B$, we furthermore write $\CP(B)$ for the collection of all partitions of $B$, i.e.\ all sets $\pi \subset \powerset(B)$ (the power set of $B$) such that $\bigcup \pi = B$ and such that any two distinct elements of $\pi$ are disjoint. 

\begin{defn} \label{def:cumu}
	Given a collection $\xX$ of random variables as above and any finite set 
	$B \subset A$, we define the cumulant 
	$\Cum(X_B)$ inductively over $|B|$ by
	\begin{equation} \label{e:mome2cumu}
		\E \big( X^B\big)
		= \sum_{\pi \in \CP(B)} \prod_{\bar B\in\pi} \Cum \big(X_{\bar B}\big)\;.
	\end{equation}
\end{defn}

\begin{rmk}
It is straightforward to check that the above definition does determine the cumulants uniquely. 
\end{rmk}

From now on, we will use the notation $\fC_n$ for the $n$th joint cumulant function of the field $\zeta$: 
\begin{equation} \label{e:def-kappa}
	\fC_n (z_1,\ldots,z_n) = \Cum\bigl(\{\zeta(z_1),\cdots, \zeta(z_n)\}\bigr) \;.
\end{equation}
We similarly write $\fC_{n}^{(\epsilon)}$ for the $n$-th cumulant but with $\zeta(z_{j})$'s replaced by $\zeta_{\epsilon}(z_{j})$'s. Note that $\fC_{2n+1} = 0$ since $\zeta$ is assumed to be symmetric. Also, $\fC_2$ is its covariance function.
An important property is that  there exists $C > 0$ such that
\begin{equation} \label{eq:cumu-close}
	\fC_p^{(\epsilon)}(z_{1}, \cdots, z_{p})  = 0
\end{equation}
whenever $|z_{i} - z_{j}| > C \epsilon$ for some $i,j$. 

%A property that will be useful in our problem is that the cumulant is zero when some of the variables are independent from the others. In terms of our random field $\zeta$, we have the following lemma. 
%
%
%\begin{lem} \label{lem:cumu-close}
%	Let $\zeta$ be a random field satisfying Assumption \ref{ass:noise}, and $\zeta_{\epsilon}$ be the rescaled field as in \eqref{eq:rescaled_field}. Then, there exists $C > 0$ depending on the field $\zeta$ only such that
%	\begin{align*}
%	\fC_p^{(\epsilon)}(z_{1}, \cdots, z_{p}) = \Cum\big( \{ \zeta_{\epsilon}(z_{1}), \cdots, \zeta_{\epsilon}(z_{p}) \} \big) = 0
%	\end{align*}
%	whenever $|z_{i} - z_{j}| > C \epsilon$ for some $i,j$. 
%\end{lem}
%
%\begin{rmk}
%	This lemma is a consequence of the assumption on $\zeta$ that disjoint space-time sets generate independent sigma-fields. It is only implicitly used in the statement of Theorem \ref{th:moment-bound} to guarantee that the cumulants represented by hyper-edges in the graph do meet the sufficient condition in order for the bound in that theorem to hold (see Section \ref{sec:graph} for more details on graphic representations of general convolutions).\hao{It seems the reason we assign that specific degree to a hyper-edge is mainly due to the scaling which defines $\zeta_\eps$, not quite due to the above lemma?}\weijun{is it correct now?}
%\end{rmk}

\subsection{Wick polynomials} \label{sec:wick_polynomial}

The notion of Wick products for random variables (not necessarily Gaussian) will play an essential role later. We give a definition below. 

\begin{defn} \label{def:Wick-product}
	Given a collection $\CX = \{X_\alpha\}_{\alpha \in \CA}$ of random variables as before, the Wick product 
	$\Wick{X_A}$ for $A \subset \CA$ is by setting $ \Wick{X_{\emptyset}} = 1$ 
	and postulating that
	\begin{equation} \label{e:defWick}
		%{\partial\over \partial X_k} \Wick{ X_1 X_2 \cdots X_n} = \Wick{X_1 \cdots X_{k-1} X_{k+1} \cdots X_n}
		X^A  = %\sum_{B \subset A} \E(X^{A\setminus B})  \Wick{X_B} =
		\sum_{B \subset A}  \Wick{X_B}  \sum_{\pi \in \CP(A \setminus B)} 
		\prod_{\bar B \in \pi}
		\Cum \big(X_{\bar B}\big)   %\;.
	\end{equation}
	recursively. 
\end{defn}

\begin{rmk}
Again, \eqref{e:defWick} is sufficient to define $\Wick{X_A}$ by induction over the size of $A$. 
By the definition we can easily see that as soon as $A \neq \emptyset$, one always has $\E \Wick{X_A}=0$.
Note also that  taking expectations on both sides of \eqref{e:defWick} yields exactly the identity \eqref{e:mome2cumu}.
\end{rmk}

Note that by \eqref{e:mome2cumu}, we can alternatively write 
$X^A  =\sum_{B \subset A}  \Wick{X_B} \E(X^{A\setminus B})$,
and there is also a formula to express the Wick product in terms of the usual products;
however, \eqref{e:defWick} is actually the identity frequently being used in the paper: 
in fact we will frequently rewrite a product of generally non-Gaussian noises as a sum of terms,
each term containing a Wick product as RHS of \eqref{e:defWick} - called the 
{\it non-Gaussian homogeneous chaos of order $|B|$}.
%\hao{added one sentence on chaos since we will say chaos a lot later}

It is also well-known that there is an alternative characterisation of Wick products via generating functions by
\begin{equation} \label{e:gen-Wick}
	\Wick{X_{1} \cdots X_{n}} = {\partial^n \over \partial t_1\cdots\partial t_n} 
	{\exp \big( \sum_{i=1}^n t_i X_i\big) \over \E \exp \big( \sum_{i=1}^n t_i X_i\big)}
	\Big|_{t_1=\ldots=t_n=0} \;.
\end{equation}
In this case $X_{1} = \cdots =X_{n}$ all with distribution $\mu$, then \eqref{e:gen-Wick} can be reduced to
\begin{equation} \label{eq:identical_wick}
\Wick{X^{n}} = \frac{\partial^{n}}{\partial t^{n}} \bigg( \frac{e^{tX}}{ \E_{\mu} e^{t Y}} \bigg) \bigg|_{t=0}
\end{equation}
where $Y$ is distributed according to $\mu$.
Note that there is no $n$ in the exponential generating function above, as the derivative is taken with respect to the same $t$ rather than $t_{1}, \cdots, t_{n}$ separately. The form of \eqref{eq:identical_wick} also suggests that we can actually define the $n$-th Wick power with respect to a measure $\mu$ as
%\hao{I will look at wikipedia again.}
\begin{equation} \label{eq:gen_Wick}
	W_{n,\mu}(x) = {\partial^n \over \partial t^n} 
	{\exp \big(  t x \big) \over \E_{\mu} \exp \big(  t Y \big)}
	\Big|_{t=0} \;.
\end{equation}
Now, $W_{n,\mu}(\cdot)$ is a polynomial in the variable $x \in \R$, rather than the random variable $X$. 
One can immediately check using the definition \eqref{eq:gen_Wick}
that $\frac{d}{dx}W_{n,\mu}(x)=nW_{n-1,\mu}$,
and this means that the Wick powers form an Appell sequence.
As a comparison to the Wick powers, we recall that the usual monomials can be generated by
\begin{equation} \label{eq:gen_usual}
	x^n= {\partial^n \over \partial t^n} 
	e^{ t x} 
	\Big|_{t=0} \;.
\end{equation}
We then have the following important lemma.  

\begin{lem} \label{lem:ave-potential}
	Let $\mu$ be a probability measure on $\R$ with finite moments of all order, and $V$ be a polynomial. Let
	\begin{align*}
		\scal{V}(x) = \int_{\R} V(x+y) \mu (dy) = \E_{\mu} V(x+Y)
	\end{align*}
	be the average of $V$ against the measure $\mu$. Then, $\scal{V}(x) = x^{n}$ if and only if $V(x) = W_{n,\mu}(x)$. 
\end{lem}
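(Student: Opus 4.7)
The plan is to exploit the generating function characterisations \eqref{eq:gen_Wick} and \eqref{eq:gen_usual} and the fact that the averaging operation $V \mapsto \scal{V}$ commutes with the $t$-derivatives at $t=0$.

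First I would establish the ``if'' direction by direct computation on generating functions. Writing $Z(t) := \E_{\mu} e^{tY}$ (which is an analytic function of $t$ since $\mu$ has finite moments of all orders), the definition \eqref{eq:gen_Wick} gives
\begin{equation*}
W_{n,\mu}(x+y) = \frac{\partial^n}{\partial t^n}\, \frac{e^{t(x+y)}}{Z(t)}\bigg|_{t=0}.
\end{equation*}
Since $W_{n,\mu}$ is a polynomial with bounded coefficients, one may interchange $\E_{\mu}$ with $\partial_t^n|_{t=0}$ to obtain
\begin{equation*}
\scal{W_{n,\mu}}(x) = \frac{\partial^n}{\partial t^n}\, \E_{\mu}\!\left[\frac{e^{t(x+Y)}}{Z(t)}\right]\bigg|_{t=0}
= \frac{\partial^n}{\partial t^n}\, \frac{e^{tx} Z(t)}{Z(t)}\bigg|_{t=0}
= \frac{\partial^n}{\partial t^n}\, e^{tx}\bigg|_{t=0} = x^n,
\end{equation*}
where the last equality uses \eqref{eq:gen_usual}. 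This proves $V = W_{n,\mu} \Rightarrow \scal{V} = x^n$.

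For the converse, I would argue via uniqueness. The averaging map $A : V \mapsto \scal{V}$ is a linear operator on $\R[x]$. In the monomial basis, $A x^k = \E_{\mu}(x+Y)^k = x^k + (\text{lower-degree terms})$ is upper-triangular with $1$'s on the diagonal, and in particular degree-preserving and invertible on every subspace of polynomials of bounded degree. If $\scal{V} = x^n$, then $V$ must be a polynomial of degree exactly $n$, and since we already showed $A W_{n,\mu} = x^n$, the invertibility of $A$ on polynomials of degree $\leq n$ forces $V = W_{n,\mu}$.

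Since both directions are essentially calculus with formal power series, there is no real obstacle here; the only minor point that must be checked cleanly is the interchange of expectation and the $t$-derivative in the first step, but this is immediate from the polynomial nature of $W_{n,\mu}$ and the finite-moment assumption on $\mu$. The same argument can equivalently be phrased by observing that $\{W_{n,\mu}\}_{n\geq 0}$ is an Appell sequence biorthogonal, in the above sense, to the monomials under the convolution operator $A$, so the statement is simply the defining property of this biorthogonality.
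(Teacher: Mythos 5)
Your proof is correct and takes essentially the same route as the paper: the same generating-function computation for the ``if'' direction, and an injectivity argument for the ``only if'' direction. The only difference is that you make explicit the upper-triangular structure of the averaging operator $A$ on the monomial basis, whereas the paper simply asserts that $\int(U-V)(x+y)\,\mu(dy)=0$ for all $x$ forces all coefficients of $U-V$ to vanish ``easily''; your version is a clean way to justify exactly that step.
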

\begin{proof}
	We first prove the ``if" part.  Let $V(x) = W_{n,\mu}(x)$, then by definitions of $W_{n,\mu}$ and $\scal{\cdot}$, we have
	\begin{align*}
		\scal{V}(x) = \E_{\mu} \frac{\partial^{n}}{\partial t^{n}} \Big(\frac{e^{tx} e^{tY}}{\E_{\mu} e^{tZ}} \Big) \Big|_{t=0} = \frac{\partial^{n}}{\partial t^{n}} \Big(\frac{e^{tx} \E_{\mu} e^{tY}}{\E_{\mu} e^{tZ}} \Big) \Big|_{t=0} = \frac{\partial^{n}}{\partial t^{n}} e^{tx} \Big|_{t=0} = x^{n}, 
	\end{align*}
	where for the second equality we have exchanged the differentiation with the expectation with respect to $Y$. For the ``only if" part, suppose $U$ is another polynomial with $\scal{U} = x^{n}$, then $U-V$ is a polynomial such that
	\begin{align*}
	\int_{\R} (U-V)(x+y) \mu(dy) = 0
	\end{align*}
	for all $x \in \R$. It then follows easily that all coefficients of $U-V$ must be $0$, and we have $U=V=W_{n,\mu}$. 
\end{proof}

\section{Regularity structures and abstract equation} \label{sec:setup}

In this section, we build the appropriate regularity structures in order to solve the equation
\begin{equation} \label{eq:real_equation}
\partial_{t} u_{\epsilon} = \Delta u_{\epsilon} - \epsilon^{-\frac{3}{2}} V_{\theta}'(\epsilon^{\frac{1}{2}}u_{\epsilon}) + \zeta_{\epsilon}
\end{equation}
in an abstract space. The set-up follows essentially the same way as in \cite{HaiXu} and \cite{phi4review}. 

\subsection{Regularity structures and admissible models} \label{sec:rs}

Recall that {\it a regularity structure} consists of a pair $(\tT, \gG)$, where $\tT = \bigoplus_{\alpha \in A} \tT_{\alpha}$ is a vector space graded by a (bounded below, locally finite) set $A$ of homogeneities, and $\gG$ is a group of linear transformations on $\tT$ such that for every $\Gamma \in \gG$ and $\tau \in \tT_{\alpha}$, we have $\Gamma \tau - \tau \in \tT_{< \alpha}$. 

Similar as \cite{HaiXu}, the basis elements in $\tT$ are built from the symbols $\1, \Xi$, $\{X_{i}\}_{i=0}^{3}$ and the abstract integration operators $\iI$ and $\eE^{k}$ (unlike \cite{HaiXu}, we only need integer powers of $\eE$ here). In order to reflect the parabolic scaling of the equation, for any multi-index of non-negative integers $k = (k_{0}, \cdots, k_{3})$, we define an abstract monomial of degree $|k|$ as
\begin{equation} \label{eq:parabolic_degree}
X^{k} = X_{0}^{k_{0}} \cdots X_{3}^{k_{3}}, \qquad |k| = 2 k_{0} + \sum_{i=1}^{3} k_{i}, 
\end{equation}
and the parabolic distance between $z,z' \in \R^{4}$ by
\begin{equation} \label{eq:parabolic_distance}
|z-z'| = |z_{0} - z_{0}'|^{\frac{1}{2}} + \sum_{i=1}^{3} |z_{i} - z_{i}'|. 
\end{equation}
In view of the structure of \eqref{eq:real_equation}, we define two sets $\uU$ and $\vV$ such that $X^{k} \in \uU$, $\Xi \in \vV$, and such that for every $k = 1, \cdots, m$, we decree
\begin{equation} \label{eq:graded_vector}
\begin{split}
\tau_{1}, \cdots \tau_{2k+1} \in \uU \qquad &\Rightarrow \qquad \eE^{k-1}(\tau_{1} \cdots \tau_{2k+1}) \in \vV, \\
\tau \in \vV \qquad &\Rightarrow \qquad \iI(\tau) \in \uU. 
\end{split}
\end{equation}
The idea is that $\uU$ consists of the formal symbols in the expansion of the abstract solution of the equation, and $\vV$ consists of symbols that will appear in the expansion of the right hand side of \eqref{eq:real_equation}. We then let $\wW = \uU \cup \vV$ to be the set of basis elements in $\tT$. As for the homogeneities, we set
\begin{align*}
|\Xi| = -\frac{5}{2} - \kappa, \qquad |X^{\ell}| = |\ell|, 
\end{align*}
and extend it to all of $\wW$ decreeing
\begin{align*}
|\tau \bar{\tau}| = |\tau| + |\bar{\tau}|, \quad |\iI(\tau)| = |\tau| + 2, \quad |\eE^{k}(\tau)| = k + |\tau|. 
\end{align*}
Here, $\kappa$ is a small but fixed positive number, and $|\ell|$ denotes the parabolic degree of the multi-index defined in \eqref{eq:parabolic_degree}. According to \eqref{eq:graded_vector}, we can keep adding new basis elements into $\uU$ and $\vV$, but for the purpose of this article, it suffices to restrict our regularity structures to elements with homogeneities less than $\frac{3}{2}$, i.e., $\tT = \bigoplus_{\alpha < \frac{3}{2}} \tT_{\alpha}$.

The basis vectors in $\wW$ generated this way with negative homogeneities other than $\Xi$ are of the form (with shorthand $\Psi=\iI(\Xi)$): 
\begin{equs} \label{e:simbols}
\tau  = \eE^{k-1} \Psi^{2k+1-n} \;, \qquad \qquad \qquad
	 &|\tau| =\frac12(n-3) -(2k+1-n)\,\kappa \;;\\
\tau = \eE^{\lfl (k-1)/2 \rfl} \big( \Psi^{k} \iI(\eE^{\lfl \ell/2  \rfl - 1} \Psi^{\ell}) \big) \;, \quad
	&|\tau| = \big\lfloor \frac{k-1}{2} \big\rfloor 
		+ \big\lfloor \frac{\ell}{2}  \big\rfloor 
		- \big(k+\ell \big)\big(\frac12+\kappa\big)+1
\end{equs}
provided $\kappa$ is sufficiently small, where in the first case $n \in \{0,1,2,3\}$, and in the second one the situation when $k$ odd and $\ell$ even is excluded. 

As for the transition group $\gG$, we define the extended regularity structure $\tT_{\text{ex}}$ 
%\hao{If only used $\tT_{\text{ex}}$ once, just say it's defined in [?] after where we used it.}\weijun{only use is later in renormalisation group. added a remark there.}
and a linear map $\Delta: \tT_{\text{ex}} \rightarrow \tT_{\text{ex}} \otimes \tT_{+}$ in the same way as in \cite[Section 3.1]{HQ15}. For any linear functional $g$ on $\tT_{+}$, one can obtain a linear map $\Gamma_{g}: \tT_{\text{ex}} \rightarrow \tT_{\text{ex}}$ by setting $\Gamma_{g} \tau = (\id \otimes g) \Delta \tau$. The transition group $\gG$ is defined by the collection of all $\Gamma_{g}$'s for linear functional $g$ on $\tT$ with the further property that $g(\tau \bar{\tau}) = g(\tau) g(\bar{\tau})$. The restriction of $\gG$ on $\tT$ gives the regularity structure $(\tT, \gG)$. 

\begin{rmk}
	The graded vector space of extended regularity structure $\tT_{\text{ex}}$ is the linear span of symbols in
	\begin{align*}
	\wW \cup \{ \tau_{1}, \cdots, \tau_{2m+1}: \tau_{j} \in \uU \}. 
	\end{align*}
	The main advantage of introducing $\tT_{\text{ex}}$ is that $\eE^{k}$ can be viewed as a linear map defined on $\tT_{\text{ex}}$. As a consequence, the definition of the renormalisation group in Section \ref{sec:renormalisation} will be significantly simplified. 
\end{rmk}

Given a regularity structure $(\tT, \gG)$, {\it a model for it} is a pair $(\Pi, F)$ of functions
\begin{align*}
\Pi: \R^{1+3} &\rightarrow \lL(\tT, D') \quad \quad F: \R^{1+3} \rightarrow \gG \\
z &\mapsto \Pi_{z} \qquad \qquad \qquad \quad z \mapsto F_{z}
\end{align*}
satisfying the identity
\begin{align*}
\Pi_{z} F_{z}^{-1} = \Pi_{\bar{z}} F_{\bar{z}}^{-1}, \qquad \forall z,\bar{z} \in \R^{1+3}
\end{align*}
as well as the bounds
\begin{equation} \label{eq:model_analytic}
|(\Pi_{z} \tau)(\varphi_{z}^{\lambda})| \lesssim \lambda^{|\tau|}, \qquad |F_{z}^{-1} \circ F_{\bar{z}}|_{\sigma} \lesssim |z-\bar{z}|^{|\tau| - |\sigma|}
\end{equation}
uniformly over all test functions $\varphi \in \bB$, all space-time points $z,\bar{z}$ in a compact domain $\fK$, and all $|\tau| \in \wW$. Here, the set $\bB$ denotes the space of test functions
\begin{align*}
\bB = \big\{  \varphi: \| \varphi \|_{\cC^{2}} \leq 1, \text{supp}(\varphi) \subset B(0,1) \big\},
\end{align*}
where $B(0,1)$ is the unit ball centred at origin, and $\varphi_{t',x'}^{\lambda}(t,x) = \lambda^{-5} \varphi\big((t-t')/\lambda^{2}, (x-x')/\lambda\big)$. We will also use the notation $f_{z}$ for the multiplicative element in $\tT_{+}^{*}$ (the adjoint of $\tT_{+}$) such that $F_{z} = \Gamma_{f_{z}}$. 

The norm of a model $\fM = (\Pi, f)$ is defined to be the smallest proportionality constant such that both bounds in \eqref{eq:model_analytic} hold, which we will denote by $|\!|\!| \fM |\!|\!|$. The norm in general depends on the compact domain $\fK$, but we omit it from the notation just for simplicity. Since in most situations the transition group $F$ is completely determined by $\Pi$, we will also sometimes write $|\!|\!| \Pi |\!|\!|$ instead of $|\!|\!| \fM |\!|\!|$. 

In addition to these minimal requirements, we also impose the \text{admissibility} on our models. Let $K: \R^{1+3} \setminus \{ 0 \} \rightarrow \R$ be a function that coincides the heat kernel on $\{|z| \leq 1\}$, smooth everywhere except the origin, and annihilates all polynomials on $\R^{1+3}$ up to parabolic degree $2$. The existence of such a kernel can be easily checked (see for example \cite[Section 5]{Hai14a}). 

With this truncated heat kernel $K$, we then let $\sM$ to be the set of \text{admissible models} $(\Pi, f)$ such that one has
\begin{equation} \label{eq:model_polynomial}
(\Pi_{z} X^{k})(\bar{z}) = (\bar{z}-z)^{k}, \qquad f_{z}(X^{k}) = (-z)^{k}, \qquad \forall k
\end{equation}
as well as 
%\hao{say something on $\sJ_{\ell}$ and $\sE$? Or substitute everything with $\sE$ by its right hand side}\weijun{added a brief remark below. otherwise there will be a quite messy expression later...}
%\hao{So we don't want to plug the first identity about $f_{z} (\sJ_{\ell} \tau)$ into the second identity?}\weijun{yeah, i would prefer leave it like this, as these symbols will appear again later in the group.}
\begin{equation} \label{eq:model_integration}
\begin{split}
f_{z} (\sJ_{\ell} \tau) &= - \int D^{\ell} K(z-\bar{z}) (\Pi_{z} \tau) (d \bar{z}), \qquad 0 \leq |\ell| < |\tau| + 2 \\
\big( \Pi_{z} \iI(\tau) \big) (\cdot)  &= (K * \Pi_{z} \tau)(\cdot) + \sum_{0 \leq |\ell| < |\tau|+2} \frac{(\cdot - z)^{\ell}}{\ell !} \cdot f_{z} (\sJ_{\ell} \tau), 
\end{split}
\end{equation}
and we set $f_{z} (\sJ_{\ell} \tau) = 0$ if $|\ell| \geq |\tau| + 2$. Note that the above notion of admissible models do not impose any restrictions on the operator $\eE^{k}$. 

Given a smooth function $\zeta$ and $\epsilon \geq 0$, we now build {\it the canonical model} $\sL_{\epsilon}(\zeta) = (\Pi^{\epsilon},f^{\epsilon})$ as follows. Define the action of $\sL_{\epsilon}(\zeta)$ on $X^{k}$ according to \eqref{eq:model_polynomial}, and set set
\begin{align*}
(\Pi_{z} \Xi) (\bar{z}) = \zeta (\bar{z}). 
\end{align*}
We then extend the definition to all of $\wW$ by imposing the admissibility restriction \eqref{eq:model_integration}, as well as setting recursively
\begin{equation} \label{eq:canonical_multiplication}
(\Pi_{z}^{\epsilon} \tau \bar{\tau})(\bar{z}) = (\Pi_{z}^{\epsilon} \tau)(\bar{z}) \cdot (\Pi_{z}^{\epsilon} \bar{\tau})(\bar{z})
\end{equation}
and
\begin{equation} \label{eq:model_epsilon}
\begin{split}
f_{z}^{\epsilon}(\sE^{k}_{\ell} \tau) &= -\epsilon^{k} (D^{\ell} \Pi_{z} \tau)(z), \\
(\Pi_{z}^{\epsilon} \eE^{k} \tau)(\bar{z}) &= \epsilon^{k} (\Pi_{z}^{\epsilon} \tau)(\bar{z}) + \sum_{0 \leq |\ell| < k + |\tau|} \frac{(\bar{z}-z)^{\ell}}{\ell !} \cdot f_{z}^{\epsilon} (\sE^{k}_{\ell} \tau), 
\end{split}
\end{equation}
where we adopt that $\sE^{k}_{\ell} \tau$ exists only when $|\ell| < k + |\tau|$. Note that the right hand sides of \eqref{eq:canonical_multiplication} and \eqref{eq:model_epsilon} make sense only if $\Pi_{z}^{\epsilon} \tau$ is a smooth function for every $\tau$, and this is indeed the case if the input $\zeta$ is smooth. 

\begin{rmk}
	The action of models on the basis elements $\sJ_{\ell} \tau$ and $\sE^{k}_{\ell} \tau$ are essentially the same as the $\ell$-th derivative of the realisations for $\iI(\tau)$ and $\eE^{k}(\tau)$. But we use a different notation for these elements since they will play a different role than their corresponding elements in $\tT$. 
\end{rmk}

\subsection{$\epsilon$-dependent spaces and the abstract fixed point equation}

Since the solution to standard $\Phi^4_3$ equation possesses regularity below $-\frac{1}{2}$, in a fixed point argument, in order to continue local solutions, we need to be able to treat initial data of the same regularity. However, if the initial data belongs to $\cC^{\eta}$ with $\eta < -\frac{1}{2}$, then $u_{\epsilon}(t, \cdot) \sim t^{-\frac{\eta}{2}}$ for small time $t$, and for any fixed $\epsilon$, the term $\epsilon^{k-1} u_{\epsilon}^{2k+1}$ becomes non-integrable as soon as $k \geq 2$. 

On the other hand, the case we are mostly interested in is the limiting equation as $\epsilon \rightarrow 0$. The idea implemented in \cite{HQ15} to achieve this goal is to employ cancellations between positive power of $\epsilon$ and singularities in $t$ in the process as $\epsilon \rightarrow 0$ while retaining uniform (in $\epsilon$) bounds. This leads to the definition of $\epsilon$-dependent models and spaces. The definitions below mainly follow \cite{HQ15} and \cite{HaiXu}. 

For each $\epsilon \geq 0$, let $\sM_{\epsilon}$ be the collections of models in $\sM$ with norm
\begin{equation} \label{eq:e_norm}
|\!|\!| \Pi |\!|\!|_{\epsilon}:= |\!|\!| \Pi |\!|\!| + \| \Pi \|_{\epsilon} + \sup_{t \in [0,1]} \|\rR \iI(\Xi)(t, \cdot) \|_{\cC^{\eta}}, 
\end{equation}
where $|\!|\!| \cdot |\!|\!|$ is the standard norm on modelled distributions, $\rR$ is the reconstruction operator associated to the underlying model (as defined in \cite[Section 3]{Hai14a}), and $\| \cdot \|_{\epsilon}$ is defined by
\begin{equation} \label{eq:e_partial_norm}
\begin{split}
\| \Pi \|_{\epsilon} &:= \sup_{\tau} \sup_{z} \sup_{k,\ell} \epsilon^{|\ell|-k-|\tau|} |f_{z}(\sE^{k}_{\ell}(\tau))| \\
&\phantom{11}+ \sup_{|\tau| \in \uU} \sup_{z} \sup_{\psi} \lambda^{-\beta} \epsilon^{\beta - |\tau|} |(\Pi_{z}\tau)(\psi_{z}^{\lambda})|, 
\end{split}
\end{equation}
where $\beta = \frac{6}{5}$, and the supremum is taken over all $\psi \in \bB$ that further annihilates affine functions. We also let $\sM_{0}$ denote the set of admissible models such that $f_{z}(\sE^{k}_{\ell} \tau) = 0$ for all $\tau \in \wW$. Note that for any positive $\epsilon$ and $\bar{\epsilon}$, $\sM_{\epsilon}$ and $\sM_{\bar{\epsilon}}$ consists of exactly the same set of models, but with norms at different scales. We compare two models $\Pi^{\epsilon} \in \sM_{\epsilon}$ and $\Pi \in \sM_{0}$ by
\begin{equation} \label{eq:e_norm_compare}
|\!|\!| \Pi^{\epsilon}; \Pi |\!|\!|_{\epsilon;0} := |\!|\!| \Pi^{\epsilon}; \Pi |\!|\!| + \| \Pi^{\epsilon} \|_{\epsilon} + \sup_{t \in [0,1]} \| \rR^{\epsilon} \iI(\Xi)(t, \cdot) - \rR \iI(\Xi)(t, \cdot) \|_{\cC^{\eta}}. 
\end{equation}
We also define the $\epsilon$-weighted function space $\cC^{\gamma,\eta}_{\epsilon}$ and modelled distribution space $\dD^{\gamma,\eta}_{\epsilon}$ in exactly the same way as \cite[Section 3.1]{HaiXu}. One way to understand the space $\cC^{\gamma,\eta}_{\epsilon}$ is that functions in it behaves like $\cC^{\eta}$ on scales larger than $\epsilon$, but like $\cC^{\gamma}$ on scales smaller than $\epsilon$. It is also suggestive to think of $\dD^{\gamma,\eta}_{\epsilon}$ as the analogue for modelled distributions. 

Let $P$ denote the heat kernel, and $K$ be its truncation as defined in Section \ref{sec:rs}. Then by \cite[Section 4]{Hai14a}, there exists an operator $\pP$ on $\dD^{\gamma,\eta}_{\epsilon}$ such that
\begin{align*}
\rR (\pP f) = P * \rR f.
\end{align*}
 Finally, we define the operator $\heE^{k}$ (mapping $\dD^{\gamma,\eta}_{\epsilon}$ to $\dD^{\gamma',\eta'}_{\epsilon}$ for some other $\gamma', \eta'$) by
\begin{align*}
(\heE^{k} U)(z) = \eE^{k} U(z) - \sum_{\ell} \frac{X^{\ell}}{\ell !} f_{z}(\sE^{k}_{\ell} U(z)). 
\end{align*}
Since the only use of these spaces and the operator $\heE_{k}$ in this article is in the statement of Theorem \ref{th:abstract} below, we do not repeat their properties here, but refer to \cite{HQ15} and \cite{HaiXu} for more details. 

The following theorem is identical to \cite[Theorem 3.12]{HaiXu}. It gives the existence, uniqueness and convergence of abstract solutions to the lift of the equation \eqref{eq:real_equation} in the regularity structures. 

\begin{thm} \label{th:abstract}
	Let $m \geq 1$, $\gamma \in (1,\frac{6}{5})$, $\eta \in (-\frac{2m+2}{4m+3},-\frac{1}{2})$, and $\kappa > 0$ be sufficiently small. Let $\phi_{0} \in \cC^{\gamma,\eta}_{\epsilon}$, and consider the equation
	\begin{equation} \label{eq:abstract}
	\Phi = \pP \1_{+} \bigg( \Xi - \sum_{j=1}^{m} \lambda_{j} \heE^{j-1} \Phi^{2j+1} - \lambda_{0} \Phi \bigg) + \widehat{P} \phi_{0}. 
	\end{equation}
	Then, for every sufficiently small $\epsilon$ and every model in $\sM_{\epsilon}$, there exists $T > 0$ such that Equation \eqref{eq:abstract} has a unique solution in $\dD^{\gamma,\eta}_{\epsilon}$ up to time $T$. Moreover, $T$ can be chosen uniformly over any fixed bounded set of initial data in $\cC^{\gamma,\eta}_{\epsilon}$, any bounded set of models in $\sM_{\epsilon}$, and all sufficiently small $\epsilon$. 
	
	Let $\phi_{0}^{(\epsilon)}$ be a sequence of elements in $\cC^{\gamma,\eta}_{\epsilon}$ such that $\| \phi_{0}^{(\epsilon)}; \phi_{0} \|_{\gamma,\eta;\epsilon} \rightarrow 0$ for some $\phi_{0} \in \cC^{\eta}$, $\Pi^{\epsilon} \in \sM_{\epsilon}, \Pi \in \sM_{0}$ be models such that $|\!|\!| \Pi^{\epsilon}; \Pi |\!|\!|_{\epsilon;0} \rightarrow 0$, and let $\lambda_{j}^{(\epsilon)} \rightarrow \lambda_{j}$ for each $j$. If $\Phi \in \dD^{\gamma,\eta}_{0}$ solves the fixed point problem \eqref{eq:abstract} up to time $T$ with model $\Pi$, initial data $\phi_{0}$ and coefficients $\lambda_{j}$, then for all small enough $\epsilon$, there is a unique solution $\Phi^{(\epsilon)} \in \dD^{\gamma,\eta}_{\epsilon}$ to \eqref{eq:abstract} up to the same time $T$ with initial data $\phi_{0}^{(\epsilon)}$, model $\Pi^{\epsilon}$ and coefficients $\lambda_{j}^{(\epsilon)}$'s, and we have 
	\begin{align*}
	\lim_{\epsilon \rightarrow 0}  |\!|\!| \Phi^{(\epsilon)}; \Phi |\!|\!|_{\gamma,\eta;\epsilon} \rightarrow 0. 
	\end{align*}
\end{thm}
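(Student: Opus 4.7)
The plan is to solve the abstract fixed point problem \eqref{eq:abstract} via the Banach contraction mapping theorem in the space $\dD^{\gamma,\eta}_\epsilon$, exactly as in \cite[Theorem~3.12]{HaiXu}. Denote by $\mathcal{M}_T$ the nonlinear operator on the right-hand side of \eqref{eq:abstract}, viewed as a map on $\dD^{\gamma,\eta}_\epsilon$ restricted to space-time slabs of length $T$. The goal is to show that for all sufficiently small $\epsilon$, $\mathcal{M}_T$ is a strict contraction on a suitable closed ball, and that this contraction is jointly continuous in the model, initial data, and the coefficients $\lambda_j$.

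The building blocks are: (i) the multiplication estimate, giving $\Phi^{2j+1} \in \dD^{\gamma_j,(2j+1)\eta}_\epsilon$ for suitable $\gamma_j$ when $\Phi \in \dD^{\gamma,\eta}_\epsilon$; (ii) the abstract integration operator $\pP$ associated with the truncated heat kernel, which gains two degrees of regularity and also provides a factor of $T^{\kappa'}$ from the short-time Schauder-type estimate, a fact crucial for contractivity; (iii) multiplication by $\1_+$, which costs a small amount of weight but is admissible under the constraint $\eta > -\frac{2m+2}{4m+3}$; and (iv) the operator $\heE^{j-1}$, which systematically subtracts the diverging low-homogeneity parts of $\Phi^{2j+1}$ so that the product of the $\epsilon^{j-1}$ factor (hidden in the admissibility for $\sE^k_\ell$) with the remainder remains bounded uniformly in $\epsilon$ in the $\epsilon$-norm \eqref{eq:e_partial_norm}. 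A careful bookkeeping of the parabolic degrees, using the explicit formulas in \eqref{e:simbols}, shows that with $\gamma \in (1,\tfrac{6}{5})$, $\eta \in (-\tfrac{2m+2}{4m+3},-\tfrac{1}{2})$, and $\kappa$ small, every term on the right-hand side of \eqref{eq:abstract} lands back in $\dD^{\gamma,\eta}_\epsilon$.

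Given these estimates, a standard argument gives the local existence: one picks a ball of a fixed radius in $\dD^{\gamma,\eta}_\epsilon$ around $\widehat{P}\phi_0$ and chooses $T$ small enough (depending only on bounds for $\$\Pi\$_\epsilon$, $\|\phi_0\|_{\gamma,\eta;\epsilon}$, and the $\lambda_j$) so that $\mathcal{M}_T$ maps this ball into itself and is a contraction there. Because all relevant estimates are formulated in the $\epsilon$-uniform norms, $T$ can be chosen uniformly over bounded sets of data and over all small enough $\epsilon$. The continuity in the data then follows from the Lipschitz dependence of $\mathcal{M}_T$ on the model, on $\phi_0$, and on the $\lambda_j$'s: one compares the fixed points for $(\Pi^\epsilon, \phi_0^{(\epsilon)}, \lambda_j^{(\epsilon)})$ and $(\Pi, \phi_0, \lambda_j)$ by running the contraction argument in the ``double'' norm $|\!|\!| \cdot;\cdot|\!|\!|_{\epsilon;0}$ of \eqref{eq:e_norm_compare}, which is what ensures the $\epsilon \to 0$ comparison between models on $\sM_\epsilon$ and those on $\sM_0$ makes sense.

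The main obstacle is the verification that the composition $\pP\,\1_+\,\heE^{j-1}(\,\cdot\,)^{2j+1}$ is well-defined and contractive on $\dD^{\gamma,\eta}_\epsilon$ uniformly in $\epsilon$. The difficulty is twofold: first, the multiplication $\Phi^{2j+1}$ produces symbols of very negative homogeneity (arbitrarily so as $j$ increases), and one must check that the loss in regularity is always more than compensated by the regularisation from $\pP$ together with the positive power of $\epsilon$ absorbed by $\heE^{j-1}$; second, near time $0$, the $\cC^\eta$ initial data with $\eta < -\tfrac{1}{2}$ forces a singular behaviour in $t$ which must be reconciled with the non-integrable polynomial nonlinearity—this is precisely the role of the $\epsilon$-weighted spaces $\dD^{\gamma,\eta}_\epsilon$ and the explicit upper bound $\eta > -\tfrac{2m+2}{4m+3}$. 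Once these arithmetic constraints are verified (as in \cite{HQ15,HaiXu}), the rest is a routine application of the contraction mapping argument and essentially no new work is required beyond recalling those estimates.
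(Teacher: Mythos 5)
Your proposal is correct and takes essentially the same approach as the paper's, which simply asserts the theorem is identical to \cite[Theorem~3.12]{HaiXu} and provides no independent proof. Your sketch accurately reproduces the underlying contraction-mapping argument from \cite{HQ15,HaiXu}: the multiplication and abstract Schauder estimates, the role of $\heE^{j-1}$ in absorbing powers of $\epsilon$, the uniformity of $T$ via $\epsilon$-uniform norms, and the comparison of fixed points in the double norm $\$\cdot;\cdot\$_{\epsilon;0}$, so it faithfully describes what the cited reference does.
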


\subsection{The renormalisation group} \label{sec:renormalisation}

If $\Phi$ solves the abstract equation \eqref{eq:abstract} with the canonical model $\sL_{\epsilon}(\zeta_{\epsilon})$, then $u_{\epsilon} = \rR^{\epsilon} \Phi$ solves the PDE
\begin{align*}
\partial_{t} u_{\epsilon} = \Delta u_{\epsilon} - \sum_{j=1}^{m} \lambda_{j} \epsilon^{j-1} u_{\epsilon}^{2j+1} - \lambda_{0} u_{\epsilon} + \zeta_{\epsilon}. 
\end{align*}
But as $\epsilon \rightarrow 0$, the models $\sL_{\epsilon}(\zeta_{\epsilon})$ simply do not converge, and hence there is no clear meaning to define the limit of $u_{\epsilon}$. This means that we need to perform suitable \text{renormalisations} on the equation. 

The purpose of this section is to build a family $\fR$ of linear maps on $\tT_{\text{ex}}$ so that when restricted to basis elements in $\tT$, their action on admissible models yields the ``renormalised models" (see \eqref{eq:renormalised_model} below) that will converge, and the reconstructed solution to the fixed point equation \eqref{eq:abstract} satisfies some modified PDE. 

For these purposes, the type of linear transformations $M \in \fR$ we consider will be the composition of two different maps $M_{0}$ and $\wM$, which acts on the original model $(\Pi, f)$ in the following way: 
\begin{align*}
(\Pi, f) \stackrel{\wM}{\longmapsto} (\wPi, \wf) \stackrel{M_{0}}{\longmapsto} (\Pi^{M}, f^{M}). 
\end{align*}
Here, $\wM$ will act on the models as ``Wick renormalisation", induced by a probability measure on $\R$, and $M_{0}$ has the interpretation as ``mass renormalisation" in quantum field theory, and will be parametrised by finitely many parameters.

We start with the description of the Wick renormalisation map. For any probability measure $\mu$ on $\R$ with all moments finite, we define the linear map $\wM_{\mu}$ by setting
\begin{align*}
\wM_{\mu} \Xi = \Xi, \qquad \wM_{\mu} X^{k} = X^{k}
\end{align*}
and
\begin{align*}
\wM_{\mu} \Psi^{n} = W_{n,\mu}(\Psi), \qquad \Psi = \iI(\Xi)
\end{align*}
mapping $\Psi^{n}$ into the Wick polynomial induced by $\mu$. We furthermore require that $\wM_{\mu}$ commutes with the abstract integration maps $\iI$ and $\eE^{k}$ as well as the multiplication by $X^{k}$, and extend it to the whole of $\tT_{\text{ex}}$ by 
\begin{equation} \label{eq:leibniz}
\wM_{\mu} (\tau \iI (\bar{\tau})) = \wM_\mu (\tau) \iI (\wM_\mu \bar{\tau}). 
\end{equation}
For the map $\wM_\mu$ defined above, it follows from \cite{Hai14a, HQ15, HaiXu} that there is a unique pair of linear maps
\begin{align*}
\wDelta: \tT_{\text{ex}} \rightarrow \tT_{\text{ex}} \otimes \tT_{+}, \qquad \widehat{M}^{\text{Wick}}: \tT_{+} \rightarrow \tT_{+}
\end{align*}
such that
\begin{equation} \label{eq:wick_association}
\begin{split}
\widehat{M}^{\text{Wick}} \sJ_{\ell} &= \mM (\sJ_{\ell} \otimes \id) \wDelta, \\
\widehat{M}^{\text{Wick}} \sE^{k}_{\ell} &= \mM (\sE^{k}_{\ell} \otimes \id) \wDelta, \\
(\id \otimes \mM) (\Delta \otimes \id) \wDelta &= (\wM_\mu \otimes \widehat{M}^{\text{Wick}}) \Delta, \\
\widehat{M}^{\text{Wick}}(\tau_{1} \tau_{2}) &= (\widehat{M}^{\text{Wick}} \tau_{1}) (\widehat{M}^{\text{Wick}} \tau_{2}), \quad \widehat{M}^{\text{Wick}} X^{k} = X^{k},   
\end{split}
\end{equation}
where $\mM: \tT_{+} \rightarrow \tT_{+}$ denotes the multiplication in Hopf algebra. Given an admissible model $(\Pi, f)$, we define its \textit{Wick renormalised model} $(\wPi, \wf)$ by
\begin{equation} \label{eq:wick_model}
\wPi_{z}\tau = (\Pi_{z} \otimes f_{z}) \wDelta \tau, \qquad \wf_{z} (\sigma) = f_{z} (\widehat{M}^{\text{Wick}} \sigma). 
\end{equation}
The following proposition ensures that the new pair $(\wPi, \wf)$ is indeed an admissible model. 

\begin{prop} \label{pr:upper-triangular}
The unique map $\wDelta$ defined in \eqref{eq:wick_association} has the following upper-triangular property: for every $\tau \in \tT$, one has
\begin{equation} \label{eq:upper-triangular}
\wDelta \tau = \tau \otimes \1 + \sum \tau^{(1)} \otimes \tau^{(2)}, 
\end{equation}
where each term in the sum satisfies $|\tau^{(1)}| > |\tau|$. As a consequence, the new pair $(\wPi, \wf)$ is an admissible model. 
\end{prop}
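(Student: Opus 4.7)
The plan is to establish the upper-triangular property \eqref{eq:upper-triangular} by induction on a suitable complexity measure of $\tau$ (e.g.\ the total number of occurrences of $\Xi$ and the $X_i$'s), and then to deduce admissibility from it. The driving observation is that $\wM_\mu$ acts as the identity on $\Xi$, on polynomials $X^k$, and more generally commutes with $\iI$, $\eE^k$ and multiplication by $X^k$; its only non-trivial effect is on pure noise products, where $\wM_\mu \Psi^n = W_{n,\mu}(\Psi) = \Psi^n + \sum_{0 \leq j < n} c^{(n)}_j \Psi^j$. Since $|\Psi| = -\tfrac12 - \kappa < 0$, every correction $\Psi^j$ with $j < n$ has strictly larger homogeneity than $\Psi^n$. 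Consequently, for every basis element $\tau$, the difference $\wM_\mu \tau - \tau$ is a linear combination of basis elements of strictly higher homogeneity.

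The base cases $\tau \in \{\1, X^k, \Xi\}$ are immediate because $\wM_\mu$ is the identity on them, which together with the fourth line of \eqref{eq:wick_association} forces $\wDelta \tau = \tau \otimes \1$. For the inductive step I would treat the three generating operations separately. If $\tau = \iI(\sigma)$ or $\tau = \eE^k(\sigma)$, the first two identities in \eqref{eq:wick_association} express $\widehat{M}^{\text{Wick}}$ on the $\sJ_\ell$ and $\sE^k_\ell$ generators of $\tT_+$ in terms of $\wDelta$; substituting these together with the standard recursions for $\Delta\,\iI(\sigma)$ and $\Delta\,\eE^k(\sigma)$ (cf.\ \cite{Hai14a, HQ15}) into the third identity yields an explicit formula for $\wDelta \iI(\sigma)$ (respectively $\wDelta \eE^k(\sigma)$) in terms of $\wDelta \sigma$. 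The induction hypothesis applied to $\sigma$, combined with the fact that $\iI$ and $\eE^k$ shift homogeneity by $+2$ and $+k$ respectively, and with the Wick observation above, gives the upper-triangular property for $\iI(\sigma)$ and $\eE^k(\sigma)$. For products $\tau = \tau_1 \tau_2$, the multiplicativity of $\widehat{M}^{\text{Wick}}$ together with the multiplicative structure of the standard coproduct $\Delta$ closes the induction once one replaces any resulting $\Psi^n$ subfactor by $W_{n,\mu}(\Psi)$ and invokes additivity of the homogeneity under multiplication.

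Once \eqref{eq:upper-triangular} is established, admissibility of $(\wPi, \wf)$ follows. Unfolding \eqref{eq:wick_model} using $\wDelta \tau = \tau \otimes \1 + \sum \tau^{(1)} \otimes \tau^{(2)}$ gives $\wPi_z \tau = \Pi_z \tau + \sum (\Pi_z \tau^{(1)}) \, f_z(\tau^{(2)})$, and each of the finitely many summands is bounded by $\lambda^{|\tau^{(1)}|} \lesssim \lambda^{|\tau|}$ when tested against $\varphi_z^\lambda$ by admissibility of $(\Pi, f)$ and the strict inequality $|\tau^{(1)}| > |\tau|$; this produces the first bound in \eqref{eq:model_analytic}, while the second is obtained analogously for $\wf$. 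The admissibility constraints \eqref{eq:model_polynomial}--\eqref{eq:model_integration} are preserved because $\wM_\mu$ fixes $X^k$ and $\Xi$ and commutes with $\iI$ and $\eE^k$.

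The main obstacle will be the bookkeeping in the inductive step for $\iI(\sigma)$ and $\eE^k(\sigma)$: one must verify that every Wick correction introduced as the recursion is unfolded really does land in the first tensor slot with strictly greater homogeneity, while the element appearing in the second slot genuinely belongs to $\tT_+$ (i.e.\ has non-negative homogeneity). This is of the same flavour as the analogous statements in \cite{Hai14a, HQ15, HaiXu}, but requires care because the Wick corrections inside integrated trees interact non-trivially with the Hopf-algebraic shuffle via the third identity in \eqref{eq:wick_association}.
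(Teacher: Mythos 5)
Your proposal is correct and takes essentially the same route as the paper: both arguments hinge on the observation that $\wM_\mu$ only perturbs $\Psi^n$ by Wick-correction terms $\Psi^j$ ($j<n$) of strictly higher homogeneity (because $|\Psi|<0$), both exploit that $\wM_\mu$ commutes with $\iI$, $\eE^k$ and $X^k$-multiplication and satisfies the Leibniz rule \eqref{eq:leibniz}, and both then propagate the upper-triangular property through the recursion by appealing to the same algebraic structure treated in \cite{Hai14a, HQ15, HaiXu}. The paper's proof is even terser than yours, simply noting $\wDelta\Psi^k = \wM_\mu\Psi^k \otimes \1$ and then invoking those references for the inductive closure under $\iI$, $\eE^k$, $X^k$ and for the admissibility of $(\wPi,\wf)$; your additional remarks about unfolding \eqref{eq:wick_model} and the $\lambda^{|\tau^{(1)}|}\lesssim\lambda^{|\tau|}$ bound are correct and simply make explicit the step the paper leaves implicit.
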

\begin{proof}
If $\wDelta$ satisfies the upper-triangular property, then one can follow the same argument in \cite{Hai14a, HQ15} to verify that the pair $(\wPi, \wf)$ defined in \eqref{eq:wick_model} indeed satisfies all the requirements for an admissible model. It then remains to verify \eqref{eq:upper-triangular} for every basis vector $\tau$. 

The case for $\tau = \Xi$ and $X^{k}$ is trivial. One can also verify by hand that $\wDelta \Psi^{k} = \wM_\mu \Psi^{k} \otimes 1$, so the property also follows for $\tau = \Psi^{k}$. Now, since $\wM_\mu$ commutes with $\iI$ and $\eE^{k}$ as well as satisfies the rule \eqref{eq:leibniz}, it has exactly the same algebraic structures as in \cite{HQ15, HaiXu}, so it follows from the same line of argument that the upper-triangular property is preserved under the operations
\begin{align*}
\tau \mapsto \iI(\tau), \qquad \tau \mapsto \eE^{k} (\tau), \qquad \tau \mapsto X^{k} \tau. 
\end{align*}
It then extends to all basis vectors, and the claim follows. 
\end{proof}

%Note that at this stage, we do not require $\mu$ to be symmetric. The symmetry of $\mu$ will only be used when we identify the limit of the renormalised models in the last section. 

We now turn to the description of the ``mass renormalisation" map $M_{0}$. Define the generators $L_{k,\ell}$ for $(k,\ell)$ both even or odd by
\begin{align*}
L_{2k,2\ell}: \qquad &\eE^{k-1} \big( \Psi^{2k} \iI (\eE^{\ell-1} \Psi^{2\ell}) \big) \mapsto \1, \\
&\eE^{k-1} \big( \Psi^{2k} \iI (\eE^{\ell-1} \Psi^{2\ell+1}) \big) \mapsto (2\ell+1) \cdot \Psi,
\end{align*}
\begin{align*}
L_{2k-1,2\ell+1}: \qquad &\eE^{k-1} \big( \Psi^{2k-1} \iI (\eE^{\ell-1} \Psi^{2\ell+1}) \big) \mapsto  \1, \\
&\eE^{k-1} \big( \Psi^{2k} \iI (\eE^{\ell-1} \Psi^{2\ell+1}) \big) \mapsto (2k) \cdot \Psi. 
\end{align*}
Given these generators, we then consider the map $M_{0}$ of the form
\begin{align*}
M_{0} = \exp \bigg( - \sum_{k,\ell \geq 1} C_{k,\ell} L_{k,\ell} \bigg). 
\end{align*}
$M_{0}$ is then parametrised by the set of constants $\{C_{k,\ell}\}$. For any admissible model $(\bar{\Pi}, \bar{f})$, we define the action of $M_{0}$ by
\begin{equation} \label{eq:mass_model}
\bar{\Pi}_{z}^{M_{0}} \tau := \bar{\Pi}_{z} M_{0} \tau, \qquad \bar{f}_{z}^{M_{0}} (\tau) := \bar{f}_{z}(\tau).  
\end{equation}
Since the only basis elements in the regularity structures that $M_{0}$ has a non-trivial effect on are of the form $\eE^{a} \Psi^{k} \big( \iI \big( \eE^{b} \Psi^{\ell} \big) \big)$, it is immediate to see that $M_{0}$ commutes with the elements in the transition group in the sense that $M_{0} \Gamma \tau = \Gamma M_{0} \tau$ for any $\Gamma \in \gG$ and $\tau \in \tT$. As a consequence, one could easily deduce that the model $(\bar{\Pi}^{M_{0}}, \bar{f}^{M_{0}})$ also belongs to $\sM$. 

Combining \eqref{eq:wick_model} and \eqref{eq:mass_model}, we can then define the new model $(\Pi^{M}, f^{M})$ under the action of $M = (\wM_{\mu}, M_{0})$ by
\begin{equation} \label{eq:renormalised_model}
\Pi^{M}_{z} \tau = (\Pi_{z} \otimes f_{z}) \wDelta (M_{0} \tau), \qquad f^{M}(\sigma) = f_{z} (\widehat{M}^{\text{Wick}} \sigma). 
\end{equation}
It follows from Proposition \ref{pr:upper-triangular} and the arguments right after \eqref{eq:mass_model} that $(\Pi^{M}, f^{M})$ is admissible as long as $(\Pi, f)$ is. With the above definitions of the renormalisation maps, we then have the following theorem. 

\begin{thm} \label{th:renormalised_eq}
	Let $\phi_{0} \in \cC^{1}, \epsilon \geq 0$ and $\zeta$ be a smooth space-time function. Let $(\Pi, f) = \sL_{\epsilon} (\zeta)$ be the canonical model, $M = (\wM_{\mu}, M_{0})$ be the renormalisation maps defined as above, and $(\Pi^{M}, f^{M}) = M \sL_{\epsilon} (\zeta)$ be the renormalised model as in \eqref{eq:renormalised_model}. If $\Phi \in \dD^{\gamma,\eta}_{\epsilon}$ is the fixed point solution to \eqref{eq:abstract} with the canonical model $\sL_{\epsilon}(\zeta)$, and $\rR^{M}$ is the associated map with the renormalised model, then the function $u = \rR^{M} \Phi$ solves the classical PDE
	\begin{align*}
	\partial_{t} u = \Delta u - \sum_{j=3}^{m} \lambda_{j} \epsilon^{j-1} W_{2j+1, \mu}(u) - \lambda_{0} u - C u + \zeta
	\end{align*}
	with initial data $\phi_{0}$, where
	\begin{align*}
	C = \sum_{k, \ell = 1}^{m} \lambda_{k} \lambda_{\ell} \bigg( (2k+1)(2k) C_{2k-1,2\ell+1} + (2k+1)(2\ell+1) C_{2k,2\ell} \bigg). 
	\end{align*}
\end{thm}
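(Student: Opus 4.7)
The plan is to apply the reconstruction operator $\rR^M$ associated with the renormalised model to both sides of the abstract fixed point equation \eqref{eq:abstract} and read off the classical PDE satisfied by $u = \rR^M \Phi$, following the template of \cite[Sec.~8]{Hai14a} and \cite[Thm.~3.14]{HaiXu}. Using the intertwining $\rR^M \pP = P * \rR^M$ together with $\rR^M \widehat{P}\phi_0 = P * \phi_0$, $\rR^M \Phi = u$, and $\rR^M \Xi = \zeta$ (since neither $\wM_\mu$ nor $M_0$ acts on $\Xi$), the task reduces to computing $\rR^M(\heE^{j-1} \Phi^{2j+1})$ for each $j$, as this is the only non-trivial non-linear input. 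Because $\zeta$ is smooth, $\Phi$ is classical and the resulting identity is a genuine PDE rather than a limiting statement.

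The analysis of $\rR^M(\heE^{j-1} \Phi^{2j+1})$ splits naturally into the contributions of $\wM_\mu$ and $M_0$. Iterating the fixed point equation once, the modelled distribution $\Phi$ admits a local expansion $\Phi = \Psi + (\text{polynomial})\cdot\mathbf{1} - \sum_\ell \lambda_\ell \iI(\eE^{\ell-1}\Psi^{2\ell+1}) + \cdots$ modulo symbols of homogeneity at least $\gamma$. Raising to the $(2j+1)$st power and applying $\heE^{j-1}$ produces, up to this truncation, the ``diagonal'' symbol $\eE^{j-1}\Psi^{2j+1}$ on which only the Wick map acts non-trivially, together with ``cross'' symbols of the form $\eE^{k-1}(\Psi^{a}\iI(\eE^{\ell-1}\Psi^{b}))$ with $(a,b)$ of matching parities --- precisely the basis elements on which the generators $L_{k,\ell}$ act non-trivially.

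For the diagonal part, by \eqref{eq:renormalised_model} and the multiplicativity of the canonical model, $\Pi_z^M (\eE^{j-1}\Psi^n) = \epsilon^{j-1} W_{n,\mu}(\Pi_z \Psi)$ modulo polynomial pieces, so reconstruction yields the Wick power $\epsilon^{j-1} W_{2j+1,\mu}(u)$ at leading order. For the cross part, the action of $M_0 = \exp(-\sum_{k,\ell} C_{k,\ell} L_{k,\ell})$ extracts a scalar multiple of $\mathbf{1}$ or $\Psi$ from each relevant basis element. Tracking the combinatorics, the symbol $\eE^{k-1}(\Psi^{2k}\iI(\eE^{\ell-1}\Psi^{2\ell+1}))$ arises in $-\lambda_k \heE^{k-1} \Phi^{2k+1}$ with coefficient $\lambda_k \lambda_\ell (2k+1)$ (the factor $(2k+1)$ coming from the binomial choice of which of the $2k+1$ factors of $\Phi$ to expand via $\iI$, and $-\lambda_\ell$ from this expansion), while both $L_{2k,2\ell}$ and $L_{2k-1,2\ell+1}$ map this symbol to a multiple of $\Psi$ with coefficients $(2\ell+1)$ and $(2k)$ respectively. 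Reconstructing $\Psi$ as $u$ at leading order and summing over $k,\ell$ produces exactly the stated formula for $C$; combining with the Wick contributions and the trivial $-\lambda_0 u$ term, the mild formulation gives the claimed PDE.

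The principal obstacle is the combinatorial bookkeeping in the third paragraph: one must verify that the homogeneity budget $\gamma < 6/5$ captures exactly the symbols on which $M_0$ acts non-trivially (no more, no fewer, once polynomial decorations and further iterations are accounted for), and that the multiplicities $(2k+1)(2\ell+1)$ and $(2k+1)(2k)$ are correct without missed symmetry factors or double-counting. Once this identification is pinned down, substitution into the mild formulation and the reading off of the associated classical PDE are routine.
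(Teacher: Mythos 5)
Your general strategy — apply $\rR^M$ to both sides of the fixed point equation, expand $\Phi$ locally around its $\Psi$ component, raise to the $(2j+1)$st power, and track the $\wM_\mu$ and $M_0$ contributions separately — is the right one and matches what the paper does (deferring the detailed computation to \cite[Section~5.4]{HQ15}). However, there is a genuine gap in your treatment of the Wick (diagonal) contribution. You claim that ``reconstruction yields the Wick power $\epsilon^{j-1}W_{2j+1,\mu}(u)$ at leading order,'' but the theorem asserts an exact classical PDE: the nonlinear term must be $W_{2j+1,\mu}(u)$ on the nose, with $u = \rR^M\Phi$ the full reconstructed function, not merely its leading chaos component.

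When you expand $\Phi$ locally as $\Psi$ plus a classical part $\phi$ times $\1$ plus higher-homogeneity symbols, and raise to the $(2j+1)$st power, the Wick map $\wM_\mu$ acts monomial by monomial, giving after reconstruction a sum of the form $\sum_{k} \binom{2j+1}{k}W_{k,\mu}(\rR\Psi)\,\phi^{2j+1-k}$. To identify this with $W_{2j+1,\mu}(\rR\Psi+\phi) = W_{2j+1,\mu}(u)$ one needs the binomial identity for Appell sequences,
\[
W_{n,\mu}(x+y) = \sum_{k=0}^{n}\binom{n}{k}W_{k,\mu}(x)\,y^{n-k},
\]
which follows from $W_{n,\mu}' = nW_{n-1,\mu}$. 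This is precisely the ingredient the paper singles out as ``the key of the proof'' before deferring the rest of the computation to \cite{HQ15}; without it the sum does not close up, and the phrase ``at leading order'' remains a statement about only the highest-chaos term rather than about the full nonlinearity. Your bookkeeping of the cross symbols and of the mass-renormalisation constant $C$ is on the right track, but the Appell/binomial resummation is the step you need to pin down rather than leave implicit.
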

\begin{proof}
	The key of the proof is to note that $W_{n,\mu}$ defined in \eqref{eq:gen_Wick} is an Appell sequence, namely
	\begin{align*}
	W_{n,\mu}'(x) = n W_{n-1,\mu}(x). 
	\end{align*}
Of the the characteristics of  Appell sequence is  the following identity
	\begin{equation} \label{eq:polynomial}
	W_{n,\mu}(x+y) = \sum_{k=0}^{n} \begin{pmatrix} n \\ k \end{pmatrix} W_{k,\mu}(x) \cdot y^{n-k}
	\end{equation}
    as in the case of Hermite polynomial. Once \eqref{eq:polynomial} is established, the rest of the proof follows exactly the same way as \cite[Section 5.4]{HQ15} since the renormalisation map $M$ defined here has the same structure as in \cite{HQ15}. 
\end{proof}

We have now shown that the group of transformations we build does map $\sM$ to $\sM$, and have derived the form of the modified PDEs under renormalised models. The rest of the article is devoted to the proof of the convergence of these renormalised models (with suitably chosen renormalisation constants) and identification of the limit of the modified equations.

\section{Bounds on the renormalised models} \label{sec:bounds}

The main goal of this section to prove the bounds for high moments of the objects $|(\hPi_{z}^{\epsilon} \tau)(\varphi_{z}^{\lambda})|$, where $\hPi^{\epsilon}$ is the renormalised model as introduced in Section \ref{sec:renormalisation} with suitable measure $\mu_{\epsilon}$ and constants $C_{k,\ell}$'s. 

Let $\zeta$ be the random field satisfying Assumption \ref{ass:noise}, and $\zeta_{\epsilon}(t,x) = \epsilon^{-\frac{5}{2}} \zeta(t/\epsilon^{2}, x/\epsilon)$. Also, fix a space-time mollifier $\rho$ and for any $\bar{\epsilon} > 0$, set $\rho_{\bar{\epsilon}}(t,x) = \bar{\epsilon}^{-5} \rho (t/\bar{\epsilon}^{2}, x/\bar{\epsilon})$. Let
\begin{equation} \label{eq:smooth_noise}
\zeta_{\epsilon, \bar{\epsilon}} := \zeta_{\epsilon} * \rho_{\bar{\epsilon}}. 
\end{equation}
We can thus write $K * \zeta_{\epsilon, \bar{\epsilon}} = (K * \rho_{\bar{\epsilon}}) * \zeta_{\epsilon}$, and the kernel $K_{\bar{\epsilon}} := K * \rho_{\bar{\epsilon}}$ approximates $K$ with the bounds
\begin{equation} \label{eq:kernel_difference}
|K_{\bar{\epsilon}}(z) - K(z)| \lesssim \bar{\epsilon}^{\delta} |z|^{-3-\delta}, \qquad |DK_{\bar{\epsilon}}(z) - DK(z)| \lesssim \bar{\epsilon}^{\delta} |z|^{-4-\delta}
\end{equation}
for all sufficiently small $\delta$, uniformly over $\bar{\epsilon} < 1$ and $|z| < 1$. Later, we will use this bound to compare the difference between models built from $\zeta_{\epsilon}$ and $\zeta_{\epsilon,\bar{\epsilon}}$. 

Let $\mu_{\epsilon}$ and $\mu_{\epsilon, \bar{\epsilon}}$ denote the distributions of stationary solutions $\Psi_{\epsilon}$, $\Psi_{\epsilon, \bar{\epsilon}}$ to the equations
\begin{align*}
\partial_{t} \Psi_{\epsilon} = \Delta \Psi_{\epsilon} + \zeta_{\epsilon}, \qquad \partial_{t} \Psi_{\epsilon, \bar{\epsilon}} = \Delta \Psi_{\epsilon, \bar{\epsilon}} + \zeta_{\epsilon, \bar{\epsilon}}. 
\end{align*}
Let $M_{\epsilon} = (\wM_{\mu_{\epsilon}}, M_{0}^{(\epsilon)})$ and $M_{\epsilon, \bar{\epsilon}} = (\wM_{\mu_{\epsilon, \bar{\epsilon}}}, M_{0}^{(\epsilon, \bar{\epsilon})})$ be the renormalisation maps built in Section \ref{sec:renormalisation}, where the constants $C_{k,\ell}^{(\epsilon)}$ and $C_{k,\ell}^{(\epsilon,\bar{\epsilon})}$'s are chosen as in Section \ref{sec:values} below. The main theorem of this section is then the following. 

\begin{thm} \label{th:main_bound}
Let $\fM_{\epsilon} = (\hPi^{\epsilon}, \widehat{f}^{\epsilon}) = M_{\epsilon}\sL_{\epsilon}(\zeta_{\epsilon})$ and $\fM_{\epsilon, \bar{\eps}} = (\hPi^{\epsilon,\bar{\epsilon}}, \widehat{f}^{\epsilon,\bar{\epsilon}}) = M_{\epsilon, \bar{\eps}} \sL_{\epsilon}(\zeta_{\epsilon, \bar{\epsilon}})$, where $\sL_{\epsilon}(\cdot)$ is the canonical lift of the corresponding input to the space $\sM_{\epsilon}$, and the constants $C_{k,\ell}^{(\epsilon)}$ and $C_{k,\ell}^{(\epsilon, \bar{\epsilon})}$'s in defining the renormalisation groups are as set in Section \ref{sec:values} below. Then, there exists $\theta > 0$ such that for every $\tau \in \wW$ with $|\tau| < 0$ and every $p > 1$, we have
\begin{equation} \label{eq:main_bound}
\E |(\hPi_{z}^{\epsilon} \tau)(\varphi_{z}^{\lambda})|^{p} \lesssim \lambda^{p(|\tau| + \theta)}, \qquad \E |(\hPi_{z}^{\epsilon} \tau - \hPi_{z}^{\epsilon, \bar{\eps}}\tau)(\varphi_{z}^{\lambda})|^{p} \lesssim \bar{\eps}^{\theta} \lambda^{p(|\tau|+\theta)}, 
\end{equation}
where both bounds hold uniformly over all $\epsilon, \bar{\epsilon} \in (0,1)$, all test functions $\varphi \in \bB$ and all space-time points $z \in \R^{4}$. 
\end{thm}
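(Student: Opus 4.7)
The plan is to establish \eqref{eq:main_bound} by expanding each evaluation $(\hPi_z^{\epsilon}\tau)(\varphi_z^{\lambda})$ as a finite sum of integrals indexed by cumulant partitions of the underlying noise, then applying the graphical moment-bound machinery of \cite{HQ15, AjayHao, KPZCLT} (summarised in Theorem \ref{th:moment-bound}) to estimate each such integral. The choice of renormalisation constants $C_{k,\ell}^{(\epsilon)}$ in Section \ref{sec:values}, together with the Wick twist $\wM_{\mu_\epsilon}$, is made precisely so that the divergent diagonal contributions arising from this expansion are cancelled; what remains is a sum over labelled trees whose kernels are controlled by the graphical bounds.

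Concretely, I would first write $\hPi_z^{\epsilon}\tau$ explicitly in terms of the truncated kernel $K$, the Wick powers $W_{n,\mu_\epsilon}(\Psi_\epsilon)$ coming from $\wM_{\mu_\epsilon}$, and the $\epsilon$-factors produced by the $\eE^k$-operators under the canonical lift $\sL_\epsilon$, then expand $\E|(\hPi_z^{\epsilon}\tau)(\varphi_z^{\lambda})|^p$ via \eqref{e:mome2cumu} into integrals parametrised by partitions pairing copies of $\zeta_\epsilon$ through joint cumulants $\fC_n^{(\epsilon)}$. By \eqref{eq:cumu-close} each cumulant forces its arguments within distance $C\epsilon$, so these integrals become weighted convolutions of truncated heat kernels and localised cumulant kernels tested against $\varphi_z^{\lambda}$. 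To each such integrand I associate a labelled tree (respectively, a labelled graph obtained by gluing $p$ copies of that tree) in the sense of Assumption \ref{ass:ele-graph}; Theorem \ref{th:moment-bound} then yields the claimed $\lambda^{p(|\tau|+\theta)}$ bound provided the integer labels of the relevant sub-trees satisfy the homogeneity and integrability conditions there. The key simplification is Lemma \ref{le:simple-verification}, which reduces the potentially enormous verification over all sub-trees to a short distinguished list.

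For sub-trees whose naive labels violate the integrability conditions in Assumption \ref{ass:ele-graph} --- unavoidable since $\tau\in\wW$ may contain factors $\Psi^{2k+1}$ of arbitrarily high degree --- I would exploit the positive prefactors $\epsilon^{k-1}$ present in \eqref{eq:macro_equation}. Following the strategy indicated at the start of Section \ref{sec:second_order_bounds}, these prefactors are absorbed into the offending edges, sharpening singular kernels by positive powers of $\epsilon$ and converting the composite tree into an integrable one in the regime $\lambda\gtrsim\epsilon$. In the complementary regime $\lambda\lesssim\epsilon$ one falls back on the auxiliary piece $\|\Pi\|_\epsilon$ in \eqref{eq:e_partial_norm}, whose definition was tailored to cover exactly these small scales. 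Together these give the uniform bound with a small exponent $\theta>0$. The second bound in \eqref{eq:main_bound} follows by a telescoping argument: the difference $\hPi^{\epsilon}-\hPi^{\epsilon,\bar\eps}$ expands into analogous tree integrals in which either one kernel $K$ is replaced by $K_{\bar\eps}-K$ or one cumulant $\fC^{(\epsilon)}$ is replaced by $\fC^{(\epsilon,\bar\eps)}-\fC^{(\epsilon)}$; the bound \eqref{eq:kernel_difference} and its cumulant analogue supply the factor $\bar\eps^{\theta}$, after which the same graphical estimates close the argument.

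The main obstacle will be the verification step itself. Even granting Lemma \ref{le:simple-verification}, one must identify and check the distinguished list of sub-trees for every $\tau$ in the two infinite families listed in \eqref{e:simbols}, uniformly in $\epsilon$ and $p$, and for those sub-trees rescued only by the $\epsilon$-absorption trick one must argue carefully that the homogeneity bookkeeping in Assumption \ref{ass:ele-graph} correctly accommodates the transferred positive scaling. Matching this analysis to the specific choice of constants $C_{k,\ell}^{(\epsilon)}$ in Section \ref{sec:values} --- and verifying that the divergent scaling of each such constant exactly cancels the corresponding tree contribution --- is where the bulk of the technical work will lie.
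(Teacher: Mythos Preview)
Your outline is essentially the paper's own strategy---chaos expansion, $\epsilon$-allocation into edge degrees, reduction via Lemma~\ref{le:simple-verification}, and Theorem~\ref{th:moment-bound}---but there are two inaccuracies worth correcting.

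First, there is no regime split $\lambda\gtrsim\epsilon$ versus $\lambda\lesssim\epsilon$, and the norm $\|\Pi\|_\epsilon$ from \eqref{eq:e_partial_norm} plays no role in the proof of Theorem~\ref{th:main_bound}. The point is that every edge adjacent to a noise vertex carries a kernel bounded by $(|z|+\epsilon)^{-a_e}$ rather than $|z|^{-a_e}$; hence multiplying by $\epsilon^{\alpha}$ produces a kernel of degree $a_e-\alpha$ uniformly in $z$, and Theorem~\ref{th:moment-bound} then gives the bound for all $\lambda\in(0,1)$ at once. The auxiliary quantity $\|\Pi\|_\epsilon$ is used only later, in the proof of Theorem~\ref{th:model_convergence}.

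Second, in the difference bound the cumulants do not change: since $\zeta_{\epsilon,\bar\epsilon}=\zeta_\epsilon*\rho_{\bar\epsilon}$, one writes $K*\zeta_{\epsilon,\bar\epsilon}=K_{\bar\epsilon}*\zeta_\epsilon$ and the hyper-edges still represent $\fC_n^{(\epsilon)}$. The telescoping therefore replaces only instances of $K$ by $K-K_{\bar\epsilon}$, and \eqref{eq:kernel_difference} alone supplies the factor $\bar\epsilon^\theta$.

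Finally, you should be aware that the contributions in the $0$-th and $1$-st homogeneous chaos (the cases $p+q\leq 1$ in the paper's notation) are \emph{not} handled by a direct application of Theorem~\ref{th:moment-bound}; Assumption~\ref{ass:ele-graph} actually fails for those graphs before renormalisation. The paper treats them separately in Section~\ref{sec:second_order_bounds} onward: after subtracting the constants $C_{k,\ell}^{(\epsilon)}$ one obtains explicit kernels which are collapsed via Lemma~\ref{lem:collapse} to small graphs, and only then does one check Assumption~\ref{ass:ele-graph} (or bound the resulting deterministic integral directly in the $0$-th chaos case). Your closing paragraph gestures at this, but it is a genuinely distinct step rather than part of the Lemma~\ref{le:simple-verification} verification.
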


We will mainly focus on the first bound in \eqref{eq:main_bound}, and will briefly discuss how the second bound follows from the first one via \eqref{eq:kernel_difference}. Also, since we are in a translation invariant setting, we will set $z=0$ without loss of generality.

\begin{rmk}
Ideally, one would like to find a limiting model $\hPi$ and prove a bound of the type
\begin{equation} \label{eq:ideal_bound}
\E |(\hPi_{z}^{\epsilon} \tau - \hPi_{z} \tau)(\varphi_{z}^{\lambda})|^{p} \lesssim \epsilon^{\theta} \lambda^{p(|\tau| + \theta)}. 
\end{equation}
for some positive $\theta$. In fact, the natural candidate for such a limiting model is the one in $\sM_{0}$ whose action on basis vectors without appearance of $\eE$'s coincides with the $\Phi^4_3$ model. In fact, we will actually prove this bound below for basis vectors that contain at least one appearance of $\eE$, in which case we have $\hPi_{z} \tau = 0$. We expect that the bound \eqref{eq:ideal_bound} still holds for standard $\Phi^4_3$ basis elements (those without appearance of $\eE$'s), but the proof would be much more involved due to the non-Gaussian noise. Later when we identify the $\Phi^4_3$ model as the limit, we make use of the convergence result available in the Gaussian case (\cite{HaiXu}) as well as the diagonal argument used \cite{KPZCLT} to circumvent the bound \eqref{eq:ideal_bound} in non-Gaussian case. 
\end{rmk}

\subsection{Graphic notation} \label{sec:graph}

Roughly speaking the random variables of the type $(\hPi_{0}^{\epsilon}\tau)(\varphi_{0}^{\lambda})$
in Theorem~\ref{th:main_bound} are all integrations of  convolutions or products of noises $\zeta_\eps$ and kernels $K$, $\varphi^\lambda$ and cumulant functions. We will apply
Definition~\ref{def:Wick-product} to rewrite a product of $\zeta_\eps$ as a sum of Wick products.
 Following \cite{HQ15, HaiXu, KPZCLT}, we introduce graphic notations to represent these integrals.

We denote by 
node \tikz[baseline=-3] \node [dot] {}; a space-time variable in $\R \times \R^{3}$ to be integrated out. The special green node \tikz[baseline=-3] \node [root] {}; denotes the origin $0$.
A bold green arrow 
\tikz[scale=0.7,baseline=-0.2] \draw[testfcn] (1,0.2) to (0,0.2); 
represents $\varphi^\lambda$, i.e. a generic test function rescaled by $\lambda$. 
Each plain arrow \tikz[scale=0.7,baseline=-0.2] \draw[kernel] (0,0.2) to (1,0.2); represents the kernel $K(z'-z)$, 
and a barred arrow \tikz[scale=0.7,baseline=-0.2] \draw[kernel1] (0,0.2) to (1,0.2); represents a factor $K(z'-z)-K(-z)$,
where $z$ and $z'$ are starting and end points of the arrow.
As for cumulants, we follow the notation in \cite{KPZCLT}: a gray ellipse with $p$ points inside
 \begin{tikzpicture}[scale=0.4,baseline=0] 
 \filldraw [lightgray] (0,0.3) ellipse (22pt and 18pt); 
 \node at (-0.3,0) [dot] {}; 
 \node at (-0.3,0.6) [dot] {}; 
 \node at (0.3,0) [dot] {}; 
 \node at (0.3,0.6) [dot] {}; 
 \end{tikzpicture}
 ($p=4$ here) represents the cumulant $\fC_{p}^{(\epsilon)}(z_{1}, \cdots, z_{p})$. 

% and each \tikz[baseline=-3] \node [var] {}; denotes the noise $\zeta_{\epsilon}(\cdot)$.  

To represent the Wick products we will need another type of special vertices $\tikz \node [var] {};$ in our graphs. Each instance of $\tikz \node [var] {};$ stands for an integration variable $x$, as well as a factor
$\zeta_\eps(x)$.
Furthermore, if more than one such vertex appear, then the corresponding product of 
$\zeta_\eps$ is always understood as a Wick product 
 $\Wick{ \zeta_\eps(x_1) \cdots \zeta_\eps(x_n)}$, where the $x_i$ are the integration variables represented by 
{\it all} of the special vertices $\tikz \node [var] {};$ appearing in the graph.

With these notations, for the canonical model 
$(\Pi^{\epsilon}, f^{\epsilon}) = \sL_{\epsilon}(\zeta_{\epsilon})$, we can apply \eqref{e:defWick} and write each $(\hPi_{0}^{\epsilon}\tau)(\varphi_{0}^{\lambda})$
as a sum of terms (``non-Gaussian chaos") each represented by a graph. For example, for $\tau = \eE \Psi^{4}$, we have
\begin{align*}
(\Pi_{0}^{\epsilon} \eE \Psi^{4})(\varphi_{0}^{\lambda}) \phantom{1} = \phantom{1} \epsilon \phantom{1} \bigg( \phantom{1}
\begin{tikzpicture} [scale=0.7,baseline=-9]
\node at (0,0) [dot] (center) {}; 
\node at (0,-1.2) [root] (below) {}; 
\node at (-1.2,0.9) [var] (farleft) {}; 
\node at (-0.4,1.3) [var] (left) {}; 
\node at (0.4,1.3) [var] (right) {}; 
\node at (1.2,0.9) [var] (farright) {}; 
\draw[kernel] (farleft) to (center); 
\draw[kernel] (left) to (center); 
\draw[kernel] (right) to (center); 
\draw[kernel] (farright) to (center); 
\draw[testfcn] (center) to (below); 
\end{tikzpicture}
+ \phantom{1} 6
\begin{tikzpicture} [scale=0.7,baseline=-9]
\filldraw[lightgray] (0,1.35) ellipse (13pt and 10pt); 
\node at (0,0) [dot] (center) {}; 
\node at (0,-1.2) [root] (below) {}; 
\node at (-1.2,0.9) [var] (farleft) {}; 
\node at (-0.15,1.4) [dot] (left) {}; 
\node at (0.15,1.4) [dot] (right) {}; 
\node at (1.2,0.9) [var] (farright) {}; 
\draw[kernel] (farleft) to (center); 
\draw[kernel, bend right=40] (left) to (center); 
\draw[kernel, bend left=40] (right) to (center); 
\draw[kernel] (farright) to (center); 
\draw[testfcn] (center) to (below); 
\end{tikzpicture}
+ \phantom{1} 6
\begin{tikzpicture} [scale=0.7,baseline=-9]
\filldraw[lightgray] (-0.95,1.1) ellipse (12pt and 12pt); 
\filldraw[lightgray] (0.95,1.1) ellipse (12pt and 12pt); 
\node at (0,0) [dot] (center) {}; 
\node at (0,-1.2) [root] (below) {}; 
\node at (-1,0.9) [dot] (farleft) {}; 
\node at (-0.9,1.3) [dot] (left) {}; 
\node at (0.9,1.3) [dot] (right) {}; 
\node at (1,0.9) [dot] (farright) {}; 
\draw[kernel,bend right=30] (farleft) to (center); 
\draw[kernel,bend left=30] (left) to (center); 
\draw[kernel,bend right=30] (right) to (center); 
\draw[kernel,bend left=30] (farright) to (center); 
\draw[testfcn] (center) to (below); 
\end{tikzpicture}
+ \phantom{1}
\begin{tikzpicture} [scale=0.7,baseline=-9]
\filldraw[lightgray] (0,1.2) ellipse (18pt and 12pt); 
\node at (0,0) [dot] (center) {}; 
\node at (0,-1.2) [root] (below) {}; 
\node at (-0.25,1.4) [dot] (farleft) {}; 
\node at (-0.25,1) [dot] (left) {}; 
\node at (0.25,1) [dot] (right) {}; 
\node at (0.25,1.4) [dot] (farright) {}; 
\draw[kernel, bend right = 80] (farleft) to (center); 
\draw[kernel] (left) to (center); 
\draw[kernel] (right) to (center); 
\draw[kernel, bend left = 80] (farright) to (center); 
\draw[testfcn] (center) to (below); 
\end{tikzpicture} \phantom{1} \bigg), 
\end{align*}
where the first and second terms above have the full expressions
\begin{align*}
I_{1} = &\int \varphi_{0}^{\lambda}(z) \prod_{j=1}^{4} K(z-x_{j}) \Wick{\prod_{j=1}^{4} \zeta_{\epsilon}(x_{j})} dx dz, \\
I_{2} = 6 &\int \varphi_{0}^{\lambda}(z) \prod_{j=1}^{4} K(z-x_{j}) \cdot \fC_{2}^{(\epsilon)}(x_{1},x_{2}) \Wick{\zeta_{\epsilon}(x_{3}) \zeta_{\epsilon}(x_{4})} dx dz, 
\end{align*}
and we used $x = (x_{1}, \cdots, x_{4})$. If $\hPi^{\epsilon}$ is the renormalised model, then $(\hPi^{\epsilon}_{0} \tau)(\varphi_{0}^{\lambda})$ consists only the first term above. Similarly, for $\tau = \Psi^{2} \iI(\Psi^{2})$, we have
\begin{align*}
(\Pi_{0}^{\epsilon} \Psi^{2} \iI(\Psi^{2}))(\varphi_{0}^{\lambda}) \phantom{1} &= 
\begin{tikzpicture} [scale=0.7,baseline=-9]
\node at (-0.8,1) [var] (left) {}; 
\node at (0.8,1) [var] (right) {}; 
\node at (0,0) [dot] (center) {}; 
\node at (0,-1.3) [dot] (below) {}; 
\node at (0.8,-0.3) [var] (belowright) {}; 
\node at (-0.8,-0.3) [var] (belowleft) {}; 
\node at (-1.5,-1.5) [root] (zero) {}; 
\draw[kernel] (left) to (center); 
\draw[kernel] (right) to (center); 
\draw[kernel] (belowright) to (below); 
\draw[kernel] (belowleft) to (below); 
\draw[kernel1] (center) to (below); 
\draw[testfcn] (below) to (zero); 
\end{tikzpicture}
\;\;+ \phantom{1} 4
\begin{tikzpicture} [scale=0.7,baseline=-9]
\node at (-0.8,-0.3) [var] (belowleft) {}; 
\node at (-0.8,1) [var] (left) {}; 
\node at (0,0) [dot] (center) {}; 
\node at (0,-1.3) [dot] (below) {}; 
\node at (-1.5,-1.5) [root] (zero) {}; 
\draw[kernel] (left) to (center); 
\draw[kernel] (belowleft) to (below); 
\draw[kernel1] (center) to (below); 
\draw[testfcn] (below) to (zero); 
\filldraw[lightgray] (1,-0.65) ellipse (10pt and 15pt); 
\draw[kernel] (1,-0.4) to (center); 
\draw[kernel] (1,-0.9) to (below); 
\node at (1,-0.4) [dot] {}; 
\node at (1,-0.9) [dot] {}; 
\end{tikzpicture}
\;\;+ \phantom{1} 
\begin{tikzpicture} [scale=0.7,baseline=-9]
\filldraw[lightgray] (1.2,-1.3) ellipse (10pt and 15pt); 
\node at (0.8,1) [var] (right) {}; 
\node at (-0.8,1) [var] (left) {}; 
\node at (0,0) [dot] (center) {}; 
\node at (0,-1.3) [dot] (below) {}; 
\node at (-1.5,-1.5) [root] (zero) {}; 
\draw[kernel] (left) to (center); 
\draw[kernel] (right) to (center); 
\draw[kernel1] (center) to (below); 
\draw[testfcn] (below) to (zero); 
\node at (1.2,-1.1) [dot] (one) {}; 
\node at (1.2,-1.5) [dot] (two) {}; 
\draw[kernel, bend right=50] (one) to (below); 
\draw[kernel, bend left=50] (two) to (below); 
\end{tikzpicture}\\
&\phantom{11}+ 2 \phantom{1}
\begin{tikzpicture} [scale=0.7,baseline=-9]
\node at (0,1) [dot] (up) {}; 
\node at (0,-1) [dot] (below) {}; 
\node at (-1.5,-1.3) [root] (zero) {}; 
\draw[kernel1] (up) to (below); 
\draw[testfcn] (below) to (zero); 
\filldraw[lightgray] (-1,0) ellipse (10pt and 15pt); 
\filldraw[lightgray] (1,0) ellipse (10pt and 15pt); 
\node at (-1,0.2) [dot] (leftup) {}; 
\node at (-1,-0.2) [dot] (leftbelow) {}; 
\node at (1,0.2) [dot] (rightup) {}; 
\node at (1,-0.2) [dot] (rightbelow) {}; 
\draw[kernel] (up) to (leftup); 
\draw[kernel] (up) to (rightup); 
\draw[kernel] (below) to (leftbelow); 
\draw[kernel] (below) to (rightbelow); 
\end{tikzpicture}
\phantom{1} +
\begin{tikzpicture} [scale=0.7,baseline=-9]
\node at (0,1) [dot] (up) {}; 
\node at (0,-1) [dot] (below) {}; 
\node at (-1.5,-1.3) [root] (zero) {}; 
\draw[kernel1] (up) to (below); 
\draw[testfcn] (below) to (zero); 
\filldraw[lightgray] (1,0) ellipse (14pt and 18pt); 
\node at (0.8,0.2) [dot] (leftup) {}; 
\node at (0.8,-0.2) [dot] (leftbelow) {}; 
\node at (1.2,0.2) [dot] (rightup) {}; 
\node at (1.2,-0.2) [dot] (rightbelow) {}; 
\draw[kernel] (up) to (leftup); 
\draw[kernel] (below) to (leftbelow); 
\draw[kernel, bend left=50] (up) to (rightup); 
\draw[kernel, bend right=50] (below) to (rightbelow); 
\end{tikzpicture}. 
\end{align*}
The reason that there is no contraction for odd number of vertices is because the noise is symmetric so all its odd cumulants vanish, and there is no contraction between two top vertices because the kernel $K$ chosen in Section~\ref{sec:rs} annihilates constants. %the integration map $I$ kills constants. 

In order to bound the moments of the random variables $(\hPi_{0}^{\epsilon} \tau)(\varphi_{0}^{\lambda})$, we bound the moments of the term represented by {\it each} graph.
Assumption~\ref{ass:ele-graph} and Theorem~\ref{th:moment-bound} below will allow us to conclude the desired moment bounds by simply verifying certain graphical conditions. To state these conditions we introduce the following terminologies.

%In general, the homogeneous chaos decomposition of objects of the form $(\hPi_{0}^{\epsilon} \tau)(\varphi_{0}^{\lambda})$ will be represented by graphs $(H,\CE)$, 
Let $(H,\CE)$ be a graph
where $H$ is the set of vertices and $\eE$ (with an abuse use of notation as this letter also stands for the operator as multiplying by $\epsilon$) is the collection of edges. Each such graph consists of a set $H_{ex}$ of external vertices (i.e. noise vertices that have not been integrated out) and a set $H_{in}$ of internal vertices and the origin $0$, namely $H=\{0\} \cup H_{ex}\cup H_{in}$. There is a distinguished vertex $v_\star$ connected with $0$ by a line representing the test function $\varphi^\lambda$.
Write 
\begin{equ}
H_0=H\setminus \{0\} \qquad \mbox{and} \qquad H_\star=\{0,v_\star\} \;.
\end{equ}
For a sub-graph $\bar H \subset H$, we write $\bar H_{in}=\bar H \cap H_{in}$
and $\bar H_{ex}=\bar H \cap H_{ex}$.

The set of edges is decomposed as $\CE=\CE_2 \cup \CE_h$ 
where $\CE_2$ is a set of usual directed edges, which represent integration kernels, 
and $\CE_h$ is a set of hyper-edges (i.e. edges consisting of more than two vertices), which represent cumulants of order higher than two. Each edge $e = (x_{v_{-}}, x_{v_{+}}) \in \eE_{2}$ comes with a pair of numbers $(a_{e}, r_{e}) \in \R^{+} \times \R$. The number $a_{e} > 0$ measures the singularity of the kernel $J_{e}$ associated to the edge $e$. More precisely, for each edge $e$ with $a_{e}$, the kernel $J_{e}$ associated to it satisfies the bound
\begin{align*}
|D^{k}J_{e} (z)| \lesssim |z|^{-a_{e} - |k|}. 
\end{align*}
If an edge $e \in \eE_{2}$ is connected to an external vertex (noise), then the kernel $J_{e}$ has the bound
\begin{align*}
|D^{k}J_{e} (z)| \lesssim (|z| + \epsilon)^{-a_{e} - |k|}, 
\end{align*}
reflecting the fact that the input of the noise is at scale $\epsilon$. For every $\beta > 0$, we associate the kernel $J_{e}$ the norm $\| J_{e} \|_{a_{e};\beta}$ by
\begin{align*}
\| J_{e} \|_{a_{e};\beta} := \sup_{|z|\leq 1, |k|\leq \beta} |D^{k}J_{e}(z)|. 
\end{align*}
This quantity is always finite by assumption on the kernel $J_{e}$. The number $r_{e}$, on the other hand, gives the corresponding renormalisations needed for the edge: if $r_{e} > 0$, then one needs to subtract Taylor expansions up to order $r_{e}$; if $r_{e} < 0$, then one needs to define a renormalised version of the kernel by subtracting Taylor polynomials of the test function at the origin up to degree $|r_{e}|$. The orientation of an edge matters only when $r_{e} > 0$. Precise definitions of the positive/negative renormalisations could be found in \cite[Section 8]{HQ15} and \cite[Section 4]{HaiXu}. 

For each hyper-edge $e \in \eE_{h}$, we set $r_{e} = 0$, and associate it with the degree $a_{e} = \frac{5n}{2}$, where $n$ is the number of vertices in this hyper-edge. We use the same notation
 \begin{tikzpicture}[scale=0.4,baseline=0] 
 \filldraw [lightgray] (0,0.3) ellipse (22pt and 18pt); 
 \node at (-0.3,0) [dot] {}; 
 \node at (-0.3,0.6) [dot] {}; 
 \node at (0.3,0) [dot] {}; 
 \node at (0.3,0.6) [dot] {}; 
 \end{tikzpicture}
to represent a hyper-edge connecting $n$ noise vertices, and it represents the kernel $\fC_{\epsilon} (z_{1}, \cdots, z_{n})$. Note that this kernel has the correct scaling behaviour corresponding to the degree $\frac{5n}{2}$, and satisfies  \eqref{eq:cumu-close}. 

\begin{rmk}
	The degree of the hyper-edge do correspond to the correct behaviour of cumulants of the rescaled field $\zeta_{\epsilon}$. This is guaranteed by the scale set in \eqref{eq:rescaled_field} as well as \eqref{eq:cumu-close}. For normal edges $e \in \eE_{2}$, we assign their initial degree to be $a_{e} = 3$. But many of these (for those connected to external noise vertices) will be reduced later by multiplication of powers of $\epsilon$'s. All edges represented by the plain arrow \tikz[baseline=-0.2] \draw[kernel] (0,0.15) to (1,0.15); have $r_{e} = 0$, but the pair corresponding to the barred arrow \tikz[baseline=-0.2] \draw[kernel1] (0,0.15) to (1,0.15); is always $(a_{e},r_{e}) = (3,1)$. 
\end{rmk}

For a sub-graph $\bar H \subset H$, we define the sets
$\CE^\uparrow(\bar H)$, $\CE^\downarrow(\bar H)$,
$\CE_0(\bar H)$ and $\CE(\bar H)$ in the same way as \cite{HQ15, HaiXu} by
\begin{align*}
\eE^{\uparrow}(\bar H) &= \{e\in\eE: \eE \cap \bar H = e_{-}, r_{e} > 0 \}; \\
\eE^{\downarrow}(\bar H) &= \{e \in \eE: \eE \cap \bar H = e_{+}, r_{e} > 0\}; \\
\eE_{0}(\bar H) &= \{ e \in \eE: \eE \cap \bar H = e \}; \\
\eE(\bar H) &= \{ e \in \eE: \eE \cap \bar H \neq \emptyset \}. 
\end{align*}
In particular $\CE^\uparrow(\bar H)$ and $\CE^\downarrow(\bar H)$ are subsets of $\CE_2$, and a hyper-edge $e\in\CE_0(\bar H)$ only if {\it all} the vertices of $e$ are in $\bar H$. We make a few assumptions on our graphs that ensures the high moments of the corresponding object have the correct scaling behaviour. 

\begin{assumption}\label{ass:ele-graph}
	The labelled graph $(H,\CE)$ satisfies the following properties.
	\begin{itemize}\itemsep0em
		\item[1.] For every edge $e \in \CE_2$, one has $a_e + (r_{e} \wedge 0) < 5$.
		\item[2.] For every subset
		$\bar H \subset H_0$ of cardinality at least $3$, one has 
		\begin{equ}[e:aEdges]
			\sum_{e \in \CE_0(\bar H)} a_e 
			< 5\,\Big(|\bar H_{in}| +{1\over 2} (|\bar H_{ex}| -1 -\one_{\bar H_{ex}=\emptyset}) \Big) \;.
		\end{equ}
		\item[3.] For every subset
		$\bar H \subset H$ containing $0$ and of cardinality at least $2$, one has 
		\begin{equ}[e:aEdges1]
			\sum_{e \in \CE_0(\bar H)} a_e 
			+ \sum_{e \in \CE^\uparrow( \bar H)}(a_e + r_e - 1)
			- \sum_{e \in \CE^\downarrow(\bar H)} r_e 
			< 5\,\Big(|\bar H_{in} | + \frac{1}{2}| \bar H_{ex} |\Big).
		\end{equ}
		\item[4.] For every non-empty subset $\bar H \subset H_\star$,
		one has the bounds
		\begin{equ}[e:aEdges2]
			\sum_{e \in \CE(\bar H) \setminus \CE^\downarrow(\bar H)} a_e  
			+ \sum_{e \in \CE^\uparrow(\bar H)} r_e
			- \sum_{e \in \CE^\downarrow(\bar H)} (r_e-1) 
			> 5\,\Big(|\bar H_{in}| +{1\over 2} |\bar H_{ex}| \Big) \;.
		\end{equ}
	\end{itemize}
\end{assumption}

\begin{rmk}
	The $p$-th moment of the object represented by a graph could be expressed as a sum of finitely many terms, each obtained by Wick-contracting $p$ copies of the graph (that is, excluding self-contraction). We refer to \cite{KPZCLT} for more details on the Wick-contraction. 
\end{rmk}

The following theorem from \cite{AjayHao} gives a sufficient condition on the graph $H$ for its $p$-th moment to have the correct scaling behaviour. 

\begin{thm} \label{th:moment-bound}
	Suppose that a graph $(H, \eE)$ satisfies Assumption \ref{ass:ele-graph}, and the kernels $\fC_{\epsilon} (z_{1}, \cdots, z_{n})$ represented by the hyper-edges are the cumulants of the rescaled noise satisfying Assumption \ref{ass:noise}. If $I_{p,\lambda}^{H}$ denotes the $p$-th moment of the (random) object represented by $H$ with test function scaled at $\lambda$, then there exists $\beta > 0$ depending on the structure of the graph only such that
	\begin{align*}
	I_{p, \lambda}^{H} \lesssim \lambda^{\alpha p} \cdot \prod_{e} \|J_{e}\|_{a_{e};\beta}^{p}, \quad \alpha = 5 |H_{\text{in}} \setminus H_{\star}| + \frac{5}{2}|H_{\text{ex}}| - \sum_{e} a_{e}, 
	\end{align*}
	where the proportionality constant depends on the structure of graph and cumulants of the unscaled noise $\zeta(\cdot)$ only. 
\end{thm}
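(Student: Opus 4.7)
The strategy is a three-step reduction: first expand the $p$-th moment combinatorially as a sum of integrals over ``glued'' graphs; second, apply a generalised convolution bound to each glued graph; third, show that the integrability conditions needed by the generalised convolution bound on every subgraph of the glued graph reduce to Assumption \ref{ass:ele-graph} applied to the single graph $H$.

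For the first step, each random variable $I_H := (\hPi^\eps_0 \tau)(\varphi_0^\lambda)$ is by construction an integral of a Wick product $\Wick{\zeta_\eps(x_1) \cdots \zeta_\eps(x_n)}$ against a product of kernels indexed by the edges of $H$. Expanding $\E |I_H|^p$ via \eqref{e:mome2cumu} and using that the Wick ordering kills all intra-copy self-contractions, one obtains
\[
\E |I_H|^p \;=\; \sum_\pi I_{G_\pi}\,,
\]
where $\pi$ ranges over partitions of the union of noise vertices coming from the $p$ copies of $H$ such that no block lies entirely in a single copy, and $G_\pi$ is the graph formed by taking $p$ disjoint copies of $H$ and adding one hyper-edge per block $\bar B \in \pi$ with associated kernel $\fC^{(\eps)}_{|\bar B|}$. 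By symmetry of $\zeta$ only even-size blocks contribute. For the second step, one applies the generalised convolution bound for labelled graphs with cumulant hyper-edges, as developed in Section~8 of \cite{HQ15} and refined in \cite{KPZCLT}. This yields $|I_{G_\pi}| \lesssim \lambda^{\alpha(G_\pi)} \prod_e \|J_e\|_{a_e;\beta}$ with $\alpha(G_\pi) = 5|G_{\pi,\mathrm{in}} \setminus G_{\pi,\star}| + \tfrac{5}{2}|G_{\pi,\mathrm{ex}}| - \sum_e a_e$. Using that each inter-copy hyper-edge of order $n$ has scaling exactly $\tfrac{5n}{2}$, a short bookkeeping gives $\alpha(G_\pi) = p\alpha$, as desired.

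The third step is the heart of the argument: for every subgraph $\bar G \subset G_\pi$, the four inequalities analogous to Assumption \ref{ass:ele-graph} that are required by the generalised convolution theorem must be verified. The idea is to decompose $\bar G = \bigsqcup_{i=1}^p \bar G^{(i)}$ along the $p$ copies, sum the per-copy inequalities already guaranteed by Assumption \ref{ass:ele-graph}, and absorb the inter-copy hyper-edge contributions using the identity $a_e = \tfrac{5n}{2}$, which exactly cancels the $\tfrac{5}{2}$ per external-vertex contributions that become ``internal'' once the hyper-edge is inserted. The main obstacle is to carry out this bookkeeping separately for each of the four items of Assumption \ref{ass:ele-graph}, with particular care for item~(4) when $\bar G$ contains the origin and spans several copies: one must preserve strict inequality with a uniform positive margin in order to convert the gain into a power $\beta > 0$ in the final bound. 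Once all four inequalities are established, an application of the generalised convolution bound to each $G_\pi$ and summation over the finite collection of partitions $\pi$ complete the proof.
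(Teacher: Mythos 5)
The paper does not prove Theorem~\ref{th:moment-bound} at all; it explicitly cites it from \cite{AjayHao} (``The following theorem from \cite{AjayHao} gives a sufficient condition\ldots''). So there is no in-paper proof to compare against, and your proposal is effectively filling in a proof for a theorem the authors take as an imported black box.

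Your outline does identify the right general strategy, and that is essentially the one in \cite{AjayHao}: expand $\E|I_H|^p$ over inter-copy contractions, apply the labelled-graph convolution estimate of \cite[Section~8]{HQ15} to each glued graph $G_\pi$, and reduce the subgraph conditions for $G_\pi$ to per-copy conditions on $H$. Steps one and two are sound (including the observation that Wick ordering kills blocks contained in a single copy, that odd blocks vanish by symmetry, and that the global exponent $\alpha(G_\pi)=p\alpha$ follows from the $a_e=\tfrac{5n}{2}$ degree of inter-copy hyper-edges). However, step three is not a ``short bookkeeping'' and you do not actually carry it out; it is precisely the nontrivial technical content being cited. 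Concretely: the decomposition $\bar G=\bigsqcup_i \bar G^{(i)}$ is ambiguous for subgraphs that straddle an inter-copy hyper-edge (you have to decide whether to cut the hyper-edge, and the resulting per-copy subgraphs then do not see the full $a_e$ nor the full vertex set); condition \eqref{e:aEdges} contains the additive correction $-\one_{\bar H_{ex}=\emptyset}$ which does not sum across copies; and in the glued graph there is a single shared origin and (essentially) a single $v_\star$, so the set $H_\star$ is not reproduced $p$ times, which makes conditions \eqref{e:aEdges1} and \eqref{e:aEdges2} genuinely delicate when $\bar G$ contains $0$ and reaches into several copies. Your own text flags this (``the main obstacle''), but the proposal stops exactly where the real work begins; as written it would not constitute a proof that the four families of inequalities for $G_\pi$ follow from Assumption~\ref{ass:ele-graph} for $H$.
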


\subsection{Values of renormalisation constants} \label{sec:values}

With the graphic notations, we now give explicit expressions to the constants $C_{k,\ell}^{(\epsilon)}$'s that appear in the renormalisation map $M_{0}^{(\epsilon)}$ for $k, \ell$ both even and both odd. Given a tuple of two integers $(k,\ell)$, 
we say that $\pi$ is a {\it pairing} of $(k,\ell)$ if $\pi$ is a collection of $n$ tuples (each consisting of two integers):
%let $\pP(k,\ell)$ be the set of pairings of $(k,\ell)$ of the form
\begin{equ} \label{e:partition-form}
\pi = \big\{ (k_{1}, \ell_{1}), \cdots, (k_{n},\ell_{n}) \big\}
\end{equ}
for some $n$ such that $k_{j}, \ell_{j} \geq 1$ for all $j$ and
\begin{equ}
\sum_{j} k_{j} = k, \qquad \sum_{j} \ell_{j} = \ell. 
\end{equ}
We write $|\pi|=n$. Let $\pP(k,\ell)$ be the set of all the pairings of $(k,\ell)$.
For example, we have 
\begin{align*}
\pP(3,3) = \bigg\{ \big\{ (1,1),(1,1),(1,1) \big\}, \big\{ (1,1),(2,2) \big\}, \big\{ (1,2),(2,1) \big\},\big\{ (3),(3) \big\} \bigg\}. 
\end{align*}
Given a pairing $\pi \in \pP(k,\ell)$, we define
\begin{align*}
C_{k,\ell,\pi}^{(\epsilon)} \quad = \quad \epsilon^{\frac{k+\ell-4}{2}} \phantom{1}
	\begin{tikzpicture}[scale=0.8,baseline=-0.0cm]
	\node at (-1.2,0) [dot] (left){};
	\node at (1.2,0) [root] (right) {}; 
	%\filldraw [lightgray] (0,1.2) ellipse  (20pt and 10pt); 
	\node[cloud, cloud puffs=7.7, cloud ignores aspect, minimum width=1.5cm, minimum height=1.3cm,  draw=lightgray, fill=lightgray]  at (0,1.5) {};
	%\node at (-0.8,0.8) {\scriptsize $(\ell_{1})$}; 
	%\node at (0.8,0.8) {\scriptsize $(\ell_{2})$}; 
	%
	\draw[kernel,bend right=50] (left) to (right); 
	\draw[kernel,bend left=60] (-0.1,1) to (left); 
	\draw[kernel,bend right=60] (-0.3,1.3) to (left); 
	\draw[kernel,bend right=60] (0.1,1) to (right); 
	\draw[kernel,bend left=60] (0.3,1.3) to (right); 
	\node at (-0.8,0.4) {$\dots$}; 
	\node at (0.8,0.4) {$\dots$};
	%\node at (-1.3,1) {$p_1$};\node at (-0.4,0) {$q_1$};
	%\node at (1.3,1) {$p_2$};\node at (0.4,0) {$q_2$};
	\node at (-0.8,0.7) {\scriptsize $(k)$};
	\node at (0.8,0.7) {\scriptsize $(\ell)$};
	\node at (0,1.5) {$\pi$};
	\node at (-0.1,1) [dot] {}; 
	\node at (0.1,1) [dot] {}; 
	\node at (-0.3,1.3) [dot] {}; 
	\node at (0.3,1.3) [dot] {}; 
	\end{tikzpicture}
\end{align*}
In the above picture  the cloudy area represents a product of $|\pi|$ cumulants specified by $\pi$: if $\pi$ is given by \eqref{e:partition-form}
then the $i$-th  cumulant ($1\le i\le n$) in the product is a cumulant function
of order $k_i+\ell_i$ with $k_i$ variables selected from the $k$ ones on the left
and $\ell_i$ variables selected from the $\ell$ ones on the right.
Note that there is a slight abuse of notation here as $\pi$ itself depends on $k$ and $\ell$, but we choose to keep this notation for simplicity. 
Intuitively, a pairing $\pi$ is a way to ``contract" the $k$ vertices on the left
and the $\ell$ ones on the right.

As for the values of these constants, it is easy to see that $C_{2,2,\pi}^{(\epsilon)}$ diverges logarithmically when $\pi=\{(1,1),(1,1)\}$, while all other $C_{k,\ell,\pi}^{(\epsilon)}$'s converge to a finite limit. In fact, we have that if $\pi=\{(1,1),(1,1)\}$ then
\begin{equation} \label{eq:log_constant}
C_{2,2,\pi}^{(\epsilon)} \phantom{1} = 
\begin{tikzpicture}[scale=0.8,baseline=-0.0cm]
\node at (-1.2,0) [dot] (left){};
\node at (1.2,0)   [root] (right) {}; 
\filldraw [lightgray] (0,1.4) ellipse  (10pt and 5pt); 
\filldraw [lightgray] (0,0.9) ellipse  (10pt and 5pt); 
\draw[kernel,bend right=50] (left) to (right); 
\draw[kernel,bend left=50] (-0.15,0.9) to (left); 
\draw[kernel,bend right=50] (-0.15,1.4) to (left); 
\draw[kernel,bend right=50] (0.15,0.9) to (right); 
\draw[kernel,bend left=50] (0.15,1.4) to (right); 
\node at (-0.15,0.9) [dot] {}; 
\node at (0.15,0.9) [dot] {}; 
\node at (-0.15,1.4) [dot] {}; 
\node at (0.15,1.4) [dot] {}; 
\end{tikzpicture}
= C_{\log} \cdot |\log \epsilon| + \oO(1), 
\end{equation}
for some universal constant $C_{\log}$, while for $\pi'$ being the single contraction of all four points together, we have
\begin{equation} \label{eq:finite_constant}
C_{2,2,\pi'}^{(\epsilon)} \phantom{1} = 
\begin{tikzpicture}[scale=0.8,baseline=-0.0cm]
\node at (-1.2,0) [dot] (left){};
\node at (1.2,0)   [root] (right) {}; 
\filldraw [lightgray] (0,1.25) ellipse (15pt and 10pt); 
\draw[kernel,bend right=50] (left) to (right); 
\draw[kernel,bend left=50] (-0.15,1.15) to (left); 
\draw[kernel,bend right=50] (-0.15,1.44) to (left); 
\draw[kernel,bend right=50] (0.15,1.15) to (right); 
\draw[kernel,bend left=50] (0.15,1.45) to (right); 
\node at (-0.15,1.15) [dot] {}; 
\node at (0.15,1.15) [dot] {}; 
\node at (-0.15,1.45) [dot] {}; 
\node at (0.15,1.45) [dot] {}; 
\end{tikzpicture}
= C_{2,2,\pi'} + \oO(\epsilon). 
\end{equation}
For fixed $k, \ell \geq 2$ and $\pi \in \pP(k, \ell)$, we let $\pi !$ denote all the ways to contract $(k, \ell)$ vertices according to the pairing $\pi$. For example, for $(k, \ell) = (3,3)$, we have 
\begin{align*}
\pi! = 6, \qquad &\pi = \{(1,1), (1,1), (1,1)\}; \\
\pi! = 9, \qquad &\pi = \{(1,1), (2,2)\}; \\
\pi! = 9, \qquad &\pi = \{(1,2), (2,1)\} ; \\ 
\pi! = 1, \qquad &\pi = \{(3), (3)\}.
\end{align*}
With this notation, we let
\begin{equation} \label{eq:sum_constants}
C_{k,\ell}^{(\epsilon)} = \sum_{\pi \in \pP(k,\ell)} \pi! \cdot C_{k, \ell, \pi}^{(\epsilon)}, 
\end{equation}
where the sum is taken over all pairings for $(k,\ell)$. The precise values of these constants do not matter, so we do not give explicit formulae of values of $\pi !$ for general partitions $\pi$. 

As for $C_{k,\ell,\pi}^{(\epsilon, \bar{\epsilon})}$'s, they are almost the same as the $C_{k,\ell,\pi}^{(\epsilon)}$'s except that each plain arrow \tikz[baseline=-0.2] \draw[kernel] (0,0.1) to (1,0.1); connected to a noise is replaced by a dashed arrow \tikz[baseline=-0.2] \draw[kepsilon] (0,0.1) to (1,0.1);, representing a factor $K * \rho_{\bar{\epsilon}}$. This is because we have $K * \zeta_{\epsilon,\bar{\epsilon}} = K * \rho_{\bar{\epsilon}} * \zeta_{\epsilon}$. Since we have put the mollifier $\rho_{\bar{\epsilon}}$ into the kernel $K$, the notion
\begin{tikzpicture}[scale=0.4,baseline=0] 
\filldraw [lightgray] (0,0.3) ellipse (22pt and 18pt); 
\node at (-0.3,0) [dot] {}; 
\node at (-0.3,0.6) [dot] {}; 
\node at (0.3,0) [dot] {}; 
\node at (0.3,0.6) [dot] {}; 
\end{tikzpicture}
still represents the cumulants of the field $\zeta_{\epsilon}$. For example, we have
\begin{align*}
C_{k,\ell,\pi}^{(\epsilon, \bar{\epsilon})} \quad = \quad \epsilon^{\frac{k+\ell-4}{2}} \phantom{1}
\begin{tikzpicture}[scale=0.8,baseline=-0.0cm]
\node at (-1.2,0) [dot] (left){};
\node at (1.2,0) [root] (right) {}; 
%\filldraw [lightgray] (0,1.2) ellipse  (20pt and 10pt); 
\node[cloud, cloud puffs=7.7, cloud ignores aspect, minimum width=1.5cm, minimum height=1.3cm,  draw=lightgray, fill=lightgray]  at (0,1.5) {};
%\node at (-0.8,0.8) {\scriptsize $(\ell_{1})$}; 
%\node at (0.8,0.8) {\scriptsize $(\ell_{2})$}; 
%
\draw[kernel,bend right=50] (left) to (right); 
\draw[kepsilon,bend left=60] (-0.1,1) to (left); 
\draw[kepsilon,bend right=60] (-0.3,1.3) to (left); 
\draw[kepsilon,bend right=60] (0.1,1) to (right); 
\draw[kepsilon,bend left=60] (0.3,1.3) to (right); 
\node at (-0.8,0.4) {$\dots$}; 
\node at (0.8,0.4) {$\dots$};
%\node at (-1.3,1) {$p_1$};\node at (-0.4,0) {$q_1$};
%\node at (1.3,1) {$p_2$};\node at (0.4,0) {$q_2$};
\node at (-0.8,0.7) {\scriptsize $(k)$};
\node at (0.8,0.7) {\scriptsize $(\ell)$};
\node at (0,1.5) {$\pi$};
\node at (-0.1,1) [dot] {}; 
\node at (0.1,1) [dot] {}; 
\node at (-0.3,1.3) [dot] {}; 
\node at (0.3,1.3) [dot] {}; 
\end{tikzpicture}. 
\end{align*}
The sum $C_{k,\ell}^{(\epsilon,\bar{\epsilon})}$'s are defined the same way as in \eqref{eq:sum_constants}. 

\subsection{First order renormalisation bounds}

We are now ready to prove Theorem \ref{th:main_bound} for all basis vectors $\tau \in \wW$ with $|\tau| < 0$, as listed in \eqref{e:simbols}. 
% These basis vectors are of the form
%\begin{align*}
%\tau = \eE^{k-1} \Psi^{2k+1-n} \quad \text{or} \quad \tau = \eE^{\lfl (k-1)/2 \rfl} \big( \Psi^{k} \iI(\eE^{\lfl \ell/2  \rfl - 1} \Psi^{\ell}) \big), 
%\end{align*}
%where in the first case $n \in \{0,1,2,3\}$, and in the second one the situation when $k$ odd and $\ell$ even is excluded. 
We first prove the bound \eqref{eq:main_bound} for $\tau = \eE^{k-1} \Psi^{2k+1-n}$ where $n \leq 3$. For the canonical model $\Pi^{\epsilon}$, then we have
\begin{align*}
	\Pi_{z}^{\epsilon} \tau = \epsilon^{k-1} (\Pi_{z}^{\epsilon} \Psi)^{2k+1-n}. 
\end{align*}
If $\Psi_{\epsilon} = P * \zeta_{\epsilon}$ is the stationary solution to the linear heat equation driven by $\zeta_{\epsilon}$, 
%\hao{Did you use $P$ instead of $K$ intentionally?}\weijun{Just as convention $P * \xi$ is the stationary solution... it doesn't really matter here, and it would be weird to change it and call $K * \xi$ be the stationary solution...}
%\hao{I will write a short proof for this}
and $\mu_{\epsilon}$ is the distribution of $\Psi_{\epsilon}$, we have 
\begin{align*}
W_{k, \mu_{\epsilon}} (\Psi_{\epsilon}(z)) = \int P(z-x_{1}) \cdots P (z - x_{k}) \Wick{\zeta_{\epsilon}(x_{1}) \cdots \zeta_{\epsilon}(x_{k})} d x_{1} \cdots d x_{k}, 
\end{align*}
which implies
\vspace{-2em}
\begin{equation} \label{eq:first_object}
	(\hPi_{0}^{\epsilon} \tau)(\varphi_{0}^{\lambda})  =  (\Pi_{0}^{\epsilon} M_{\mu_{\epsilon}} \tau)(\varphi_{0}^{\lambda}) = \quad \epsilon^{k-1} \;
	\begin{tikzpicture}[scale=0.6,baseline=-0.5cm]
	\node at (-1,0) [var] (left){};
	\node at (1,0)  [var] (right) {};
	\node at (0,-1)   [dot] (middle) {}; 
	\node at (0, -2)  [root] (below) {}; 
	\node at (0,0) {$\cdots$}; 
	\draw[kernel] (left) to (middle); 
	\draw[kernel] (right) to (middle); 
	\draw[testfcn] (middle) to (below); 
	\draw [decorate,decoration={brace,amplitude=7pt}] (-1,0.3) to node[midway, yshift=0.5cm] {\tiny $2k+1-n$} (1,0.3); 
	\end{tikzpicture}
	\; . 
\end{equation}
Here, each plain arrow $\tikz[baseline=0.2] \draw[kernel] (0,0.15) to (1,0.15);$ has degree $a_{e} = 3$, which could be reduced by assigning suitable powers of $\epsilon$'s to it. We have the following proposition. 

\begin{prop} \label{pr:1st-order}
Both bounds in \eqref{eq:main_bound} hold for $\tau = \eE^{k-1} \Psi^{2k+1-n}$ with $k \geq 1$ and $n = 0, 1, 2, 3$. 
\end{prop}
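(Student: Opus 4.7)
The strategy is to apply Theorem~\ref{th:moment-bound} to the star graph representing $(\hPi_0^\epsilon\tau)(\varphi_0^\lambda)$, the key idea being that the positive prefactor $\epsilon^{k-1}$ must be absorbed into the edge labels in order to make the graph satisfy Assumption~\ref{ass:ele-graph}.

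\emph{Step 1 (Graph expansion).} Using that $\hPi_0^\epsilon = \Pi_0^\epsilon \circ \wM_{\mu_\epsilon}$ and that $\wM_{\mu_\epsilon}\Psi^j = W_{j,\mu_\epsilon}(\Psi)$, together with the integral representation $W_{j,\mu_\epsilon}(\Psi_\epsilon(z)) = \int \prod_{i=1}^{j} P(z-x_i)\,\Wick{\prod_i \zeta_\epsilon(x_i)}\,dx$, the random variable $(\hPi_0^\epsilon\tau)(\varphi_0^\lambda)$ equals $\epsilon^{k-1}$ times the single integral displayed in~\eqref{eq:first_object}. This is a star graph with one interior vertex $v_\star$ (integrated against $\varphi_0^\lambda$) joined by $2k+1-n$ copies of the truncated heat kernel to $2k+1-n$ noise vertices.

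\emph{Step 2 ($\epsilon$-redistribution).} Applying the elementary inequality $\epsilon^\gamma(|z|+\epsilon)^{-3} \le (|z|+\epsilon)^{-3+\gamma}$ with $\gamma := (k-1)/(2k+1-n)$, I distribute the prefactor $\epsilon^{k-1}$ evenly over the $2k+1-n$ noise-connected kernel edges. Each such edge then carries the label $a_e = 3-\gamma$, $r_e = 0$.

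\emph{Step 3 (Verification of Assumption~\ref{ass:ele-graph}).} Since $r_e = 0$ throughout, the sets $\CE^\uparrow(\bar H), \CE^\downarrow(\bar H)$ are empty and conditions (3), (4) simplify substantially. Condition (1) reduces to $3-\gamma<5$, which is trivial. For conditions (2) and (3), the only non-trivial sub-graphs are of the form $\bar H = \{v_\star\} \cup S$ (or with $0$ adjoined) where $S \subset H_{\mathrm{ex}}$ has cardinality $j$, and the inequality becomes
\begin{equation*}
j(1-2\gamma) \;=\; j\cdot\frac{3-n}{2k+1-n} \;<\; 5,
\end{equation*}
which holds uniformly in $j\le 2k+1-n$ and $n\in\{0,1,2,3\}$ (worst case $n=0$, $j=2k+1$, gives LHS $=3<5$). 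Condition (4) for sub-graphs of $H_\star$ reduces to the corresponding infrared count at the origin, which is verified similarly using that $|\tau|>-\tfrac{5}{2}$.

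\emph{Step 4 (Moment bound).} Theorem~\ref{th:moment-bound} then yields $\E|(\hPi_0^\epsilon\tau)(\varphi_0^\lambda)|^p \lesssim \lambda^{\alpha p}$ with
\begin{equation*}
\alpha \;=\; \tfrac{5}{2}(2k+1-n) - (2k+1-n)(3-\gamma) \;=\; \tfrac{n-3}{2},
\end{equation*}
so $\alpha - |\tau| = (2k+1-n)\kappa > 0$ uniformly in $\epsilon$, which proves the first bound in~\eqref{eq:main_bound} with any $\theta\in(0,\kappa]$.

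\emph{Step 5 (Difference bound).} The difference $\hPi_0^\epsilon\tau - \hPi_0^{\epsilon,\bar\epsilon}\tau$ is a telescoping sum of the same star graph in which, on one noise-connected edge at a time, $K$ has been replaced by $K-K_{\bar\epsilon}$. By~\eqref{eq:kernel_difference}, this difference obeys $|K-K_{\bar\epsilon}|(z)\lesssim \bar\epsilon^\delta(|z|+\epsilon)^{-3-\delta}$, trading a factor of $\bar\epsilon^\delta$ for an increase of $\delta$ in one label $a_e$. For $\delta$ small the verification in Step~3 persists, and Theorem~\ref{th:moment-bound} produces the second bound.

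\emph{Main obstacle.} The main subtlety is the simultaneous bookkeeping of the $\epsilon$-redistribution and the sub-graph verifications in Assumption~\ref{ass:ele-graph}. For this star-shaped tree it is clean because there is a single interior vertex and the only relevant sub-graphs are indexed by subsets $S\subset H_{\mathrm{ex}}$ of the noise leaves. For the more intricate trees treated in the next subsection, this combinatorial verification becomes genuinely delicate and is where Lemma~\ref{le:simple-verification} will be essential.
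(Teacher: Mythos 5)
Your approach is the same as the paper's in broad strokes: expand into a star graph, redistribute the prefactor $\epsilon^{k-1}$ onto the noise-connected kernel edges to lower their degrees, and then apply Theorem~\ref{th:moment-bound} after checking Assumption~\ref{ass:ele-graph}. There is, however, a genuine gap in Step~2. By distributing $\epsilon^{k-1}$ evenly with exponent $\gamma = (k-1)/(2k+1-n)$ you exhaust the entire $\epsilon$ budget, and for $n = 3$ (so $\gamma = \tfrac12$, $k \geq 2$) every noise-connected edge acquires degree $a_e = \tfrac52$ exactly. This is the critical value: taking $\bar H$ to be the full set of $2k-2$ noise vertices in condition (4), both sides of \eqref{e:aEdges2} equal $5(k-1)$, so the required strict inequality fails and Theorem~\ref{th:moment-bound} cannot be invoked. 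The paper avoids this precisely by assigning $(\tfrac12-\delta)$-powers of $\epsilon$ to only $2k-2$ of the $2k+1-n$ edges, so their degrees become $\tfrac52 + \delta > \tfrac52$, with the leftover factor $\epsilon^{(2k-2)\delta}$ simply discarded. The fix to your argument is easy (replace $\gamma$ by $\gamma - \delta$ and discard a residual power of $\epsilon$), but as written Step~2 is not correct.

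Relatedly, your Step~3 dispatches condition (4) with a single sentence (``reduces to the corresponding infrared count at the origin $\ldots$ using $|\tau| > -\tfrac52$''). Spelling it out would have revealed the tightness above: the relevant subsets $\bar H$ for (4) consist of noise vertices with $v_\star \notin \bar H$ (this is what the paper actually verifies), and these are precisely the subsets for which the degree-$\tfrac52$ tie occurs.
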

\begin{proof}
	We first prove the first bound in \eqref{eq:main_bound}, and briefly discuss how the second one follows from it immediately. In order to make use of the positive homogeneity of $\eE$, we first assign powers of $\epsilon$'s to the edges of the graph in \eqref{eq:first_object}. We do it in the following way. Let $k \geq 1$ be given. Take any $2k-2$ edges connected to the noise from the $2k+1-n$ ones, and assign a $(\frac{1}{2}-\delta)$ power of $\epsilon$ to each of these edges. Thus, in the $2k+1-n$ noise edges, $3-n$ of them have degree $a_{e} = 3$, and all the rest $2k-2$ ones have degree $\frac{5}{2} + \delta$, and the graph comes with a multiplication of $\epsilon^{(2k-2)\delta}$. 
	
	Now, since $\tau$ has homogeneity $|\tau| = -\frac{1}{2}(3-n) - (2k+1-n) \kappa$, and the homogeneity (i.e. value of $\alpha$ in Theorem~\ref{th:moment-bound}) of its associated graph (after the allocation of $\epsilon$'s) is
	\begin{align*}
	\frac{5}{2}|H_{\text{ex}}| - \sum_{e \in \eE} a_{e}  
	&=\frac{5}{2}(2k+1-n)-3\,(3-n)-(\frac52 +\delta)\,(2k-2) \\
	&= - \frac{1}{2}(3-n) - (2k-2)\, \delta, 
	\end{align*}
	it follows that if we take $\delta$ small enough, then in view of Theorem \ref{th:moment-bound}, it suffices to check all the conditions in Assumption \ref{ass:ele-graph}. 
	
	The first condition is obvious. For the rest of the conditions, we take an arbitrary sub-graph $\bar{H}$ that contains $\ell$ noise vertices: $p$ of them have degree $3$, and the rest $\ell - p$ have degree $\frac{5}{2} + \delta$. By assumption, we know $p \leq 3$. 
	
	For \eqref{e:aEdges}, if $v_{\star} \in \bar{H}$, then the condition reads
	\begin{align*}
	3p + (\frac{5}{2} + \delta) (\ell - p) < 5 (1 + \frac{\ell}{2}), 
	\end{align*}
	which certainly holds since $p \leq 3$. If $v_{\star} \notin \bar{H}$, then the right hand side of \eqref{e:aEdges} becomes $0$, which trivially make the condition satisfied. 
	
	As for \eqref{e:aEdges1}, since the graph does not contain any edge with $r_{e} \neq 0$, its left hand side is identical to that of \eqref{e:aEdges}, while the right hand side is strictly larger. Thus, the condition is automatically implied by \eqref{e:aEdges} for this $\tau$. 
	
	We finally turn to \eqref{e:aEdges2}, where we only consider the case $v_{\star} \notin \bar{H}$. The RHS of \eqref{e:aEdges2} is $5 \ell /2$, while the LHS is a sum of $\ell$ elements, each being at least $\frac{5}{2} + \delta$, so the condition also holds. 
	
	By Theorem \ref{th:moment-bound}, we have already shown the first bound in \eqref{eq:main_bound}. 
	As for the second one, the expression turns out to be a sum of graphs of the same type, but in each term exactly one instance of $K$ is replaced by $K - K_{\bar{\epsilon}}$. The bound for the difference of the kernels together with the first bound in \eqref{eq:main_bound} immediately imply the second one. 
\end{proof}

\begin{rmk}
	Condition \eqref{e:aEdges} fails only when $p \geq 5$. Since we are in a regime where $p=3$, this reflects the sub-criticality of our equation. It also indicates that the equation becomes critical when the nonlinear term is $u_{\epsilon}^{5}$ but without any $\epsilon$'s in front of it. 
\end{rmk}

\begin{rmk}
	There is a multiple of $\epsilon^{(2k-2)\delta}$ of the graph after the $\epsilon$-allocation. This gives
	\begin{align*}
	\E |(\hPi_{0}^{\epsilon} \eE^{k-1} \Psi^{2k+1-n})(\varphi_{0}^{\lambda})|^{p} \lesssim \epsilon^{(2k-2)\delta p} \lambda^{p(|\tau| + \theta)}. 
	\end{align*}
	In the case $k \geq 2$, the power of $\epsilon$ is strictly positive, which corresponds to the stronger bound in \eqref{eq:ideal_bound} with $\hPi_{z} \tau = 0$. 
%\hao{The ``stronger bound" you are referring to seems to be something else..}
\end{rmk}

\subsection{Bounds for second order objects in higher homogeneous chaos} \label{sec:second_order_bounds}

The second order objects are homogeneous chaos decomposition of basis vectors of the form
\begin{equation} \label{eq:second_object}
\tau = \eE^{\lfl (k-1)/2 \rfl} \big( \Psi^{k} \iI(\eE^{\lfl \ell/2  \rfl - 1} \Psi^{\ell}) \big). 
\end{equation}
Note that by the expansion of the formal right hand side of the abstract equation \eqref{eq:abstract}, the only situation that will \textit{not} appear is when $k$ is odd and $\ell$ is even. The action of the Wick renormalised model 
%\hao{Should it be just ``renormalized model" or you really wanted to say ``Wick"?}
on this object yields a sum of terms in different homogeneous chaos. Each of them can be expressed by a graph of the type
\begin{equation} \label{eq:graph}
	\epsilon^{\frac{1}{2}(p+q+n-4-\delta_{\tau})} \phantom{1}
	\begin{tikzpicture}[scale=0.7,baseline=-0.2cm]
	\node at (-1.5,0) [dot] (left){};
	\node at (-1.3,-0.5) {$v^\star$}; 
	\node at (1.5,0) [dot] (right) {}; 
	\node at (1.6,-0.7) {$v_{\star}$};  
	\node at (.5,-1.5) [root] (below) {}; 
	\node at (-2.4,1.2) [var] (aboveleft) {}; 
	\node at (-2.4,-1.2) [var] (belowleft) {}; 
	\node at (2.4,1.2) [var] (aboveright) {}; 
	\node at (2.4,-1.2) [var] (belowright) {}; 
	\node at (-2.4,-0.3) {$\vdots$}; 
	\node at (-2.4,0.7) {$\vdots$}; 
	\node at (2.4,-0.3) {$\vdots$}; 
	\node at (2.4,0.7) {$\vdots$}; 
	\node at (-1,0.8) {$\vdots$}; 
	\node at (1,0.8) {$\vdots$}; 
	\draw[kernel1,bend right=30] (left) to (right); 
	\draw[testfcn] (right) to (below); 
	\draw[kernel] (aboveleft) to (left); 
	\draw[kernel] (belowleft) to (left); 
	\draw[kernel] (aboveright) to (right); 
	\draw[kernel] (belowright) to (right); 
	%\filldraw[lightgray] (-0,1.2) ellipse (15pt and 20pt); 
	\node[cloud, cloud puffs=7.7, cloud ignores aspect, minimum width=1.3cm, minimum height=1.4cm,  draw=lightgray, fill=lightgray]  at (0,1.3) {};
	\node at (-0.2,1.6) [dot] (a) {}; 
	\node at (-0.2,0.9) [dot] (b) {}; 
	\node at (0.2,1.6) [dot] (c) {}; 
	\node at (0.2,0.9) [dot] (d) {}; 
	\node at (0,1.2) {\scriptsize $n$}; 
	\draw[kernel, bend right=40] (a) to (left); 
	\draw[kernel, bend left=30] (b) to (left); 
	\draw[kernel, bend left=40] (c) to (right); 
	\draw[kernel, bend right=30] (d) to (right); 
	\draw [decorate,decoration={brace,amplitude=7pt}] (-2.6,-1.2) to node[midway, xshift=-0.5cm] {\scriptsize $p$} (-2.6,1.2); 
	\draw [decorate,decoration={brace,amplitude=7pt}] (2.6,1.2) to node[midway, xshift=0.5cm] {\scriptsize $q$} (2.6,-1.2); 
	\end{tikzpicture}, 
\end{equation}
where $\delta_{\tau} = 0$ if $k + \ell$ is even, and $\delta_{\tau} = 1$ if $k+\ell$ is odd. 
The gray cloudy area again stands for a product of cumulant functions over totally $n$ points.
The above graph represents the Wick renormalised object, and the effect of the mass renormalisation $M_{0}^{(\epsilon)}$ is not included, but when $p+q \geq 2$, the graphs will be the same with or without $M_{0}^{(\epsilon)}$. In this section, we focus on the case $p+q \geq 2$ (i.e. ``higher chaos"). 

We first introduce some notations. Let $v^\star$ and $v_{\star}$ denote the starting and end nodes of the barred arrow, respectively, and $v_{\star}$ is connected to the node $0$ by a test function. We let 
\begin{equs}
\pP&=\{\mbox{external (non-contracted noise) vertices connected to } v^\star\}\;, \\
\pP'&=\{\mbox{contracted (noise) vertices connected to } v^\star\}\;.
\end{equs}
%$\pP$ denote the set of external (non-contracted noise) vertices connected to $v^{*}$, and $\pP'$ denote the set of contracted (noise) vertices connected to $v^{*}$. 
We define similarly $\qQ$ and $\qQ'$ except that the vertices are connected to $v_{\star}$, and let $\nN = \pP' \cup \qQ'$. Thus, $\nN$ is the set of all contracted vertices. If we let $p,p',q,q'$ denote the cardinalities of the corresponding sets, then we have 
\begin{equ}
p' + q' = n \;, \qquad p+q+n = k + \ell \;,
\end{equ}
and the total number of external (noise) vertices is $p+q$. 

For simplicity of notations, we will sometimes use the same letter $\nN$ also for edges that connects vertices in $\nN$ with $v^\star$ or $v_{\star}$. We also make similar use of the notations $\pP, \pP', \qQ$ and $\qQ'$. 
%\hao{just realized you use $v^*$ while I use $v^\star$; should correct later}

Similar as before, in order to make use of the fact that each occurrence of $\eE$ increases the homogeneity by $1$, we need to assign powers of $\epsilon$'s to the edges connecting to the noise vertices (including the contracted ones) to reduce their degrees.\footnote{By reducing degree of the kernel represented by an edge we mean implementing the bound $\eps^\alpha (|x|+\eps)^{-\beta} \lesssim (|x|+\eps)^{-\beta+\alpha}$.}
 If $(k,\ell) = (1,3)$, $(2,2)$ or $(2,3)$, these are standard $\Phi^4_3$ graphs and there will be no powers of $\epsilon$ to assign. In all other cases, there is always positive powers of $\epsilon$ to assign to the edges, and we do it in the following way. 
\begin{enumerate}
	\item Divide the total powers of $\epsilon$'s into $(p+q+n-4-\delta_{\tau})$ pieces, each with power $(\frac{1}{2} - \delta)$ for some sufficiently small $\delta$. We further divide these pieces into two groups such that
	\begin{align*}
	p+q+n-4-\delta_{\tau} = \big(p+p'-2-\1_{\{\ell \phantom{1} \text{odd}\}} \big) + \big( q+q'-1-\1_{\{k \phantom{1} \text{even}\}} \big).  
	\end{align*}
	This always holds since $n=p'+q'$, and $\1_{\{k \phantom{1} \text{even}\}} + \1_{\{\ell \phantom{1} \text{odd}\}} = 1 + \delta_{\tau}$ since we cannot have the situation when $k$ is odd and $\ell$ is even. If $k+\ell-4-\delta_{\tau} \geq 1$, then there will always be positive powers of $\epsilon$'s left. 	
	\item We assign the $\big(p+p'-2-\1_{\{\ell \phantom{1} \text{odd}\}} \big)$ pieces of $(\frac{1}{2}-\delta)$-power of $\epsilon$ to edges in $\pP$ and $\pP'$ in the following way. Assign one piece to each edge in $\pP'$ until using up all $\big(p+p'-2-\1_{\{\ell \phantom{1} \text{odd}\}} \big)$ pieces. If there are still pieces left, we continue assigning one to each of the edges in $\pP$ until finished. 
	
	\item We assign the rest $\big( q+q'-1-\1_{\{k \phantom{1} \text{even}\}} \big)$ to edges in $\qQ \cup \qQ'$ in the same way. 
\end{enumerate}

We then have the following proposition. 

\begin{prop} \label{pr:final_graph}
	The above way of assigning $\epsilon$'s yields the object with the graph representation of the same type as in \eqref{eq:graph} but such that
\begin{enumerate}
	\item Among the $p+p'$ edges from $\pP \cup \pP'$, $2 + \1_{\{\ell \phantom{1} \text{odd}\}}$ of them have degree $a_{e} = 3$, and all the rest have degree $\frac{5}{2} + \delta$. In addition, if there is some edge from $\pP$ that has degree $\frac{5}{2}+\delta$, then all edges from $\pP'$ have degree $\frac{5}{2}+\delta$. 
	
	\item Among the $q+q'$ edges from $\qQ \cup \qQ'$, $1 + \1_{\{k \phantom{1} \text{even}\}}$ of them have degree $3$, and all the rest have degree $\frac{5}{2}+\delta$. In addition, if any edge from $\qQ$ has degree $\frac{5}{2}+\delta$, then all edges from $\qQ'$ have degree $\frac{5}{2}+\delta$. 
	
	\item There is a quantity $\epsilon^{\theta}$ for some $\theta \geq 0$ multiplying the graph, and $\theta = 0$ if and only if $p+q+n-4-\delta_{\tau}=0$. 
\end{enumerate}
\end{prop}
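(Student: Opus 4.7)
The proposition is essentially a bookkeeping verification; my plan is to prove the three claims by direct arithmetic on the quantities produced by Steps 1--3 of the $\epsilon$-allocation.

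First I would verify that the split of the $\epsilon$-budget between the two sides is consistent with the total. The number of $\epsilon^{1/2-\delta}$ pieces is $p+q+n-4-\delta_\tau$, which should equal
$$
(p+p'-2-\1_{\{\ell\text{ odd}\}}) + (q+q'-1-\1_{\{k\text{ even}\}}).
$$
Reading off \eqref{eq:second_object}, the inner $\Psi^\ell$ sits at $v^\star$ and the outer $\Psi^k$ sits at $v_\star$, so $p+p'=\ell$ and $q+q'=k$; together with $p'+q'=n$ this reduces the required identity to the parity statement $\1_{\{\ell\text{ odd}\}} + \1_{\{k\text{ even}\}} = 1+\delta_\tau$, which I would dispatch by examining the three admissible parity cases (the case $k$ odd, $\ell$ even being excluded in \eqref{e:simbols}).

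Next I would verify that the two partial shares are non-negative, i.e.\ that $\ell \geq 2+\1_{\{\ell\text{ odd}\}}$ and $k \geq 1+\1_{\{k\text{ even}\}}$. Both follow from the shape of the trees in \eqref{e:simbols}: the inner $\eE^{\lfloor\ell/2\rfloor-1}\Psi^\ell$ forces $\ell\geq 3$ when $\ell$ is odd, while the outer factor $\Psi^k\iI(\cdot)$ forces $k\geq 2$ when $k$ is even. Once this is in place, claims (1) and (2) are immediate: every edge that does not receive a piece retains degree $3$, while each edge receiving one $\epsilon^{1/2-\delta}$ piece has its degree reduced from $3$ to $5/2+\delta$ via $\eps^{1/2-\delta}(|z|+\eps)^{-3-|m|}\lesssim (|z|+\eps)^{-5/2-\delta-|m|}$. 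Subtraction gives precisely $2+\1_{\{\ell\text{ odd}\}}$ unreduced edges on the left and $1+\1_{\{k\text{ even}\}}$ on the right. The ``ordering'' halves are a direct consequence of the priority rule in Step 2: pieces are fed into $\pP'$ before $\pP$, so the existence of any reduced edge in $\pP$ forces $\pP'$ to have been fully saturated, and likewise for $\qQ$ and $\qQ'$.

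For (3), the original prefactor contributes $\epsilon^{(1/2)(p+q+n-4-\delta_\tau)}$, while the edge-degree reductions absorb only $\epsilon^{(1/2-\delta)(p+q+n-4-\delta_\tau)}$, leaving
$$
\theta \;=\; \delta\,(p+q+n-4-\delta_\tau)
$$
as a leftover multiplicative factor. This is trivially $\geq 0$ and vanishes precisely when $p+q+n=4+\delta_\tau$, which corresponds exactly to the standard $\Phi^4_3$ trees $(k,\ell)\in\{(1,3),(2,2),(2,3)\}$ singled out earlier. I do not expect any serious obstacle; the argument is combinatorial accounting, with the only delicate point being the parity identity and the non-negativity check at small $k,\ell$, both immediate once one restricts to the parities actually allowed by \eqref{e:simbols}.
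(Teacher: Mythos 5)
Your proof is correct and takes essentially the same approach as the paper's --- the paper's own proof is terser (one sentence for Properties 1 and 2, ``obvious'' for Property 3), because the arithmetic facts you rederive, namely $p+p'=\ell$, $q+q'=k$, and the parity identity $\1_{\{\ell\;\text{odd}\}}+\1_{\{k\;\text{even}\}}=1+\delta_\tau$, are already recorded in the allocation setup preceding the proposition. Your explicit check that both partial shares are non-negative (using $\ell\geq 2+\1_{\{\ell\;\text{odd}\}}$ and $k\geq 1+\1_{\{k\;\text{even}\}}$ from the tree structure) is a small but worthwhile addition that the paper leaves implicit.
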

\begin{proof}
	Property $3$ is obvious. Properties $1$ and $2$ come from the fact that we assign $\epsilon$'s to edges in $\pP$ (or $\qQ$) only after all edges in $\pP'$ (or $\qQ'$) are assigned. 
\end{proof}

Note that after the allocation of $\epsilon$'s with the above procedure, the degrees of edges in the graph satisfies
\begin{align*}
\phantom{11} 5|H_{\text{in}} \setminus H_{\star}| + \frac{5}{2} |H_{\text{ex}}| - \sum_{e} a_{e} = -\frac{1}{2} \delta_{\tau} - (p+q+n-4-\delta_{\tau}) \delta, 
\end{align*}
which is at the correct homogeneity (since $|\tau|$ is below $-\frac{1}{2} \delta_{\tau}$, so the above quantity is slightly bigger than $|\tau|$ if $\delta$ is small enough). Similar to the case for first-order objects, in view of Theorem \ref{th:moment-bound}, it now suffices to check that the graph in \eqref{eq:graph} with the properties in Proposition \ref{pr:final_graph} does satisfy Assumption \ref{ass:ele-graph}. 
In the following we still call this graph $(H,\CE)$.

But this time, the graph $H$ is more complicated, and it is hard to check all the conditions in Assumption \ref{ass:ele-graph} for all sub-graphs in a straightforward way. The following lemma gives a simpler procedure in the verification of this assumption. It roughly states that it suffices to check a very small set of sub-graphs $\bar{H}$, and all other sub-graphs will automatically satisfy the assumption if that small set does. 
%\begin{lem} \label{le:simple-verification}
%	To check Assumption \ref{ass:ele-graph}, it suffices to consider sub-graphs $\bar{H}$ that contain (or not) all vertices in $\pP, \qQ$ or $\nN$ as a whole. In other words, one only needs to consider the case that either all vertices in $\pP$ are in $\bar{H}$, or no vertex from $\pP$ is in $\bar{H}$. The same is true for $\qQ$ and $\nN$. In addition, for specific conditions in that assumption, we have the following. 
%	\begin{enumerate}
%		\item Condition \eqref{e:aEdges}. If $v^{*} \notin \bar{H}$, then we do not need to include $\pP$ in $\bar{H}$. Similarly, if $v_{\star} \notin \bar{H}$, then we do not need to include $\qQ$ in $\bar{H}$. As for $\nN$, we include $\nN$ in $\bar{H}$ if and only if both $v^{*}$ and $v_{\star}$ are in $\bar{H}$. 
%		
%		\item Condition \eqref{e:aEdges1}. Include $\pP$ (and respectively, $\qQ$) in $\bar{H}$ if and only if $v^{*}$ (and respectively, $v_{\star}$) is in $\bar{H}$. As for the set of contracted (noise) vertices $\nN$, include it in $\bar{H}$ if and only if both $v^{*}$ and $v_{\star}$ are in $\bar{H}$. 
%		
%		\item Condition \eqref{e:aEdges2}. This is simpler as we do not consider $v_{\star}$ in $\bar{H}$, so $\qQ$ will not be included. Also, include both $\pP$ and $\nN$ in $\bar{H}$ if $v^{*} \in \bar{H}$. Otherwise neither should be included. 
%	\end{enumerate}
%\end{lem}
To state the lemma we define the following sets of vertices
\[
\pP^\star = \{v^\star\}\cup \pP \;, \quad
\qQ_\star = \{v_\star\}\cup \qQ \;, \quad
\nN_\star^\star = \{v_\star,v^\star\}\cup \nN \;.
\]
\begin{lem} \label{le:simple-verification}
%	Given a labelled graph $(H,\CE)$  as above, suppose that the subgraphs $\bar H$
%	such that $\bar H \cap \mathcal A \in\{\emptyset,\mathcal A\}$
%	for each $\mathcal A\in\{\pP,\qQ,\nN\}$ all
%	satisfy item 2-4 of Assumption~\ref{ass:ele-graph},
%	then all the subgraphs\footnote{Here by subgraphs we implicitly refer to subgraphs $\bar H \subset H_0$ with $|\bar H|\ge 3$ for item 2, and $\bar H \subset H$ with $0\in\bar H$ and $|\bar H|\ge 2$ for item 3, and non-empty subgraph $\bar H\subset H_\star$ for item 4.} satisfy item 2-4 of Assumption~\ref{ass:ele-graph}.
%	 In addition, for specific conditions in that assumption, we have the following. 
Let $(H,\CE)$ be the labelled graph  defined above.
\begin{itemize}
\item If for every $\bar H \subset H_0$ of cardinality at least 3 such that
\[
\bar H\cap \pP^\star
\in \{\emptyset, \{v^\star\}, \pP^\star  \} \;,\quad
\bar H\cap \qQ_\star
\in  \{\emptyset, \{v_\star\}, \qQ_\star  \} \;, \quad
\bar H\cap \nN^\star_\star
\in \{\emptyset, \nN^\star_\star  \}
\]
satisfies \eqref{e:aEdges}, then item 2 of Assumption~\ref{ass:ele-graph} is satisfied.
\item If for every $\bar H \subset H$ containing $0$ of cardinality at least 2 such that
\[
\bar H\cap \pP^\star
\in \{\emptyset,  \pP^\star  \} \;,\quad
\bar H\cap \qQ_\star
\in  \{\emptyset, \qQ_\star  \} \;, \quad
\bar H\cap \nN^\star_\star
\in \{\emptyset, \nN^\star_\star  \}
\]
satisfies \eqref{e:aEdges1}, then item 3 of Assumption~\ref{ass:ele-graph} is satisfied.
\item If for every non-empty $\bar H \subset H_\star$ such that
\[
\bar H\cap \qQ_\star
=  \emptyset  \;, \quad
\bar H\cap (\pP^\star \cup \nN^\star_\star)
\in \{\emptyset, \pP^\star \cup \nN^\star_\star \}
\]
satisfies \eqref{e:aEdges2}, then item 4 of Assumption~\ref{ass:ele-graph} is satisfied.
\end{itemize}
\end{lem}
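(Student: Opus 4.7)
The plan is a reduction argument. For each of the three items of Assumption~\ref{ass:ele-graph}, the goal is to show that verifying the corresponding inequality for the restricted family of subgraphs listed in the lemma is sufficient. This will be achieved via monotonicity: starting from an arbitrary $\bar H$, we would modify $\bar H\cap\pP^\star$, $\bar H\cap\qQ_\star$, and $\bar H\cap\nN^\star_\star$ into one of the allowed configurations, while controlling the change in both sides of the inequality so that validity at the modified subgraph $\bar H'$ implies validity at $\bar H$.

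First I would catalogue, vertex by vertex, the effect of adding or removing a single noise vertex to $\bar H$ on the two sides of each inequality. For \eqref{e:aEdges}, adding $v\in\pP$ to $\bar H$ increases the RHS by $5/2$; it increases the LHS by $a_e\in\{3,\,5/2+\delta\}$ if $v^\star\in\bar H$, and by zero otherwise. A hyper-edge over $n$ vertices contributes $5n/2$ to the LHS only when all its endpoints lie in $\bar H$; partial inclusion of such a hyper-edge's vertex set leaves the LHS unchanged while still adding $5/2$ per vertex to the RHS. The barred edge between $v^\star$ and $v_\star$ (with $r_e=1$) contributes through $\CE^\uparrow$ or $\CE^\downarrow$ depending on which of $v^\star,v_\star$ lie in $\bar H$, and this must be tracked separately for \eqref{e:aEdges1} and \eqref{e:aEdges2}.

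The key combinatorial input is Proposition~\ref{pr:final_graph}: the degree-$3$ edges appear in $\pP$ only after $\pP'$ has been ``filled'' with light degree-$(5/2+\delta)$ edges, and analogously for $\qQ,\qQ'$. Consequently, given an arbitrary $\bar H$, I would fill or empty $\bar H\cap\pP$ (and $\bar H\cap\qQ$) to one of the allowed options through a sequence of single-vertex additions/removals. Each step involving a light edge contributes $5/2+\delta$ to the LHS versus $5/2$ to the RHS, a slack change of order $\delta$ in the favourable direction for items~2 and~3; this shows that the extremes $\emptyset$ and the full set are ``no easier'' to satisfy than any intermediate configuration. The heavy (degree-$3$) edges that resist this monotonicity are exactly the ones controlled by the intermediate ``$\{v^\star\}$-only'' option in item~2 and by the all-or-nothing choice for $\nN^\star_\star$ (which forces the hyper-edges to be all-in $\CE_0$ or all-out). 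For item~3, the absence of the $\{v^\star\}$ option in the allowed family is compensated by the fact that $\bar H$ contains $0$, which already absorbs the analogous borderline case through the $r_e$ contributions of the barred edge.

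For item~4 the monotonicity is reversed, since \eqref{e:aEdges2} is a lower bound: here one ensures that the allowed reductions only lower the LHS or raise the RHS, so that the inequality at the small family of allowed $\bar H$'s implies it at arbitrary $\bar H\subset H_\star$. The main obstacle in all three cases will be verifying that the $\oO(\delta)$-margin produced by the light edges is genuinely in the ``correct'' direction for each inequality simultaneously, and that the contribution of the barred edge through $\CE^\uparrow$, $\CE^\downarrow$ does not destroy the monotonicity when $v^\star$ or $v_\star$ enters or leaves $\bar H$. I expect the $\nN^\star_\star$ reduction to be the delicate step: partial inclusion of $\nN$ leaves the hyper-edges outside of $\CE_0(\bar H)$ while the vertices still count toward $|\bar H_{\mathrm{ex}}|$, producing mixed effects that force one to commit to either the fully-included or fully-excluded extreme.
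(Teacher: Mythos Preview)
Your overall strategy matches the paper's: reduce an arbitrary $\bar H$ to one of the listed extremal configurations by adding or removing noise vertices and tracking the change in both sides of each inequality. However, several of the details you rely on are wrong and would cause the argument to break down if carried out as written.

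First, you misclassify the vertices of $\nN$. The contracted noise vertices in $\nN=\pP'\cup\qQ'$ are \emph{internal} vertices, not external ones; each contributes $5$ (not $5/2$) to the right-hand side. This is precisely why the $\nN$ reduction is clean: removing a single vertex of a partially-included hyper-edge $N$ lowers the left side by at most $3$ (the kernel edge to $v^\star$ or $v_\star$) but lowers the right side by $5$, so emptying $N$ worsens \eqref{e:aEdges} and \eqref{e:aEdges1}; symmetrically, filling $N$ worsens \eqref{e:aEdges2}. There are no ``mixed effects'' once this is observed.

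Second, your claim that the degree-$3$ edges ``resist the monotonicity'' is backwards. Adding a $\pP$-vertex with $v^\star\in\bar H$ increases the left side by $a_e\in\{3,\,5/2+\delta\}$ and the right side by $5/2$; both cases tighten the inequality, and the heavy edge tightens it \emph{more}, not less. The monotonicity toward the full $\pP^\star$ configuration is uniform across edge degrees.

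Third, your explanation for why item~3 omits the $\{v^\star\}$ option is incorrect. It has nothing to do with $0\in\bar H$ or with $r_e$ terms. The point is the indicator $\one_{\bar H_{\mathrm{ex}}=\emptyset}$ in \eqref{e:aEdges}: there, adding $\pP$ when $v^\star\in\bar H$ increases the right side by $\tfrac{5}{2}(p+1)$ rather than $\tfrac{5p}{2}$, which can exceed $\sum_{e\in\pP}a_e$, so the $\{v^\star\}$-only case is not dominated by $\pP^\star$ and must be checked separately. In \eqref{e:aEdges1} that indicator is absent, the right-side increment is only $\tfrac{5p}{2}<\sum_{e\in\pP}a_e$, and the full $\pP^\star$ is always the worst.
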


%\begin{proof}
%We first prove that for item 2-4 of the Assumption, one only needs to consider the case that either all vertices in $\pP$ are in $\bar{H}$, or no vertex from $\pP$ is in $\bar{H}$; and that the same is true for $\qQ$ and $\nN$......... {\bf put your first part here in the proof.}
%\end{proof}

\begin{rmk}
	The above lemma should be understood in the following way: in each case, the description of the set $\bar{H}$ given above is the ``worst" case corresponding to the condition, and if the bound is satisfied for that ``worst" case, then it will automatically be satisfied for all other sub-graphs. For example, for condition \eqref{e:aEdges1}, the lemma states that if $v^\star \in \bar{H}$ but $v_{\star}$ not, and the sub-graph $\{v^\star\} \cup \pP$ satisfies bound \eqref{e:aEdges1}, then all other sub-graphs $\bar{H}$ that contains $v^\star$ but not $v_{\star}$ automatically satisfies \eqref{e:aEdges1}. 
	
	Note that in the Item $1$ above (for Condition \eqref{e:aEdges}), we impose further restrictions only for the case when $v^\star$ or $v_{\star}$ is not in $\bar{H}$, but not the case that they are in $\bar{H}$ (unlike Item $2$ for Condition \eqref{e:aEdges1}); this is due to the additional term $\1_{\bar{H}_{\text{ex}} = \emptyset}$ in \eqref{e:aEdges}. 
\end{rmk}
\begin{proof}
	We first claim that if some but not all vertices from $\pP$ (resp. $\qQ$, or one connected component of $\nN$) are in $\bar{H}$, then we can always worsen the bounds for the corresponding conditions by adding to $\bar{H}$ or removing from $\bar{H}$ vertices of $\pP$ (resp. $\qQ$, or one connected component of $\nN$). 
	(By ``worsening" a bound or an inequality, we mean adding or subtracting numbers on both sides so that the difference between the values of the two sides becomes smaller.)
	This will imply that one \textit{never} needs to consider the case when $\bar{H}$ contains some but not all of the vertices from $\pP$, or $\qQ$, or a connected component of $\nN$. 
	
	To see this, we first consider the set $\pP$. 
	%Suppose there are \textit{some but not all vertices} from $\pP$ in $\bar{H}$. 
	Suppose that $\bar H\cap \pP\notin \{\emptyset,  \pP\}$. For \eqref{e:aEdges} and \eqref{e:aEdges1}:
\begin{itemize}
\item
	If $v^\star \in \bar{H}$, then adding one more vertex from $\pP$ into $\bar{H}$ will increase the left hand sides of \eqref{e:aEdges} and \eqref{e:aEdges1} by $3$ or $\frac{5}{2}+\delta$, while increase the right hand sides by only $\frac{5}{2}$. Thus, adding more vertices of $\pP$ into $\bar{H}$ makes both bounds worse. 
\item
If $v^\star \notin \bar{H}$, then removing one vertex in $\pP$ from $\bar{H}$ will not change the left hand sides, but decrease both right hand sides (by at least $\frac{5}{2}$), thus also worsen the bounds.
\end{itemize}
	%either all of $\pP$ is in $\bar{H}$ or no vertex from $\pP$ is in it. 
	For the bound \eqref{e:aEdges2}, 
\begin{itemize}
\item
if $v^\star \in \bar{H}$, then adding each other vertex in $\pP$ into $\bar{H}$ does not increase the left hand side, but increases the right hand side by $\frac{5}{2}$; 
\item
if $v^\star \notin \bar{H}$, then removing each vertex in $\pP$ from $\bar{H}$ decreases the left hand side by $3$ or $\frac{5}{2} + \delta$, but decreases the right hand side  by  $\frac{5}{2}$ only. 
\end{itemize}
So in either case the bound becomes worse. 
	%Thus, the intermediate case that part but not all of the vertices from $\pP$ are in $\bar{H}$ always yields a better bound than one of the two extreme cases, so we do not need to consider it. 
	This shows that for all the three conditions, the worst case is $\bar H\cap \pP\in \{\emptyset,  \pP\}$.
	The conclusion for $\qQ$ follows from exactly the same argument. 
	
	For the set of contracted vertices $\nN$, let $N$ be one of the  hyper-edges in $\nN$, and
	%suppose there are \text{some but not all} of the vertices from $\nN$ in $\bar{H}$.
	suppose that $\bar H\cap N\notin \{\emptyset,  N\}$.
	 Then, for bounds \eqref{e:aEdges} and \eqref{e:aEdges1}, removing one vertex of $N$ from $\bar{H}$ decreases both left hand sides by at most $3$, but decreases the right hand sides by $5$, which worsens the bounds. For \eqref{e:aEdges2}, adding additional vertices from $N$ into $\bar{H}$ will increase the left hand side by at most $3$ (depending on whether $v^\star \in \bar{H}$ or not), but will increase the right hand side by $5$, which also worsens the bound. 
	 Therefore, for each hyper-edge in $\nN$, we also only need to consider the whole set instead of part of it. 
	 %The assumption that some but not all of $\nN$ are in $\bar{H}$ is essential here as it prevents a sudden change of $\frac{5n}{2}$ for the left hand sides of the conditions if vertices are added or removed as described above. 
	
%	We now turn to the second part of the lemma, that is, to find the worst case(s) for each of the conditions. 
With the above claimed fact we proceed the proof as follows.

	For \eqref{e:aEdges}, it suffices to show that when $v^\star \notin \bar{H}$
	the bound without $\pP$ is worse than the one with $\pP$.
	Indeed if $v^\star \notin \bar{H}$, then adding all of $\pP$ does not make a difference to the left hand side, but yields an increment of $\frac{5}{2}(p+1)$ on the right hand side. (Note that the increment of the right hand side is $\frac{5}{2} (p+1)$ rather than $\frac{5p}{2}$ due to the additional term $\1_{\bar{H}_{\text{ex}} = \emptyset}$.) Similarly, one can verify that it suffices to check the case $\qQ$ is not in $\bar{H}$ if $v_{\star} \notin \bar{H}$. 
	
	As for noise vertices $\nN$, there are two situations:
\begin{itemize}
\item
	If $\{v^\star,v_\star\} \subset \bar{H}$, then adding $\nN$ into $\bar{H}$ increases the LHS of \eqref{e:aEdges} by
	\begin{equation} \label{eq:increment}
		\sum_{e \in \nN} a_{e} + \frac{5n}{2}, 
	\end{equation}
	where we have used the notation $\nN$ also for edges that connect contracted vertices in $\nN$ with $v^\star$ and $v_{\star}$. Since $|\nN| = n$, and each edge in it has degree at least $\frac{5}{2} + \delta$, \eqref{eq:increment} is clearly larger than the increment  of the RHS (which is $5n$). So the bound with $\nN$ in $\bar H$ implies the one with  $\nN$ not in $\bar H$.
	%we should include $\nN$ in this case. 
\item
	If either $v^\star$ or $v_{\star}$ is not in $\bar{H}$, then the increment of the RHS of \eqref{e:aEdges} is still $5n$, but that of the LHS takes the form \eqref{eq:increment} with $\sum_{e \in \nN}$ replaced by $\sum_{e \in \tilde\nN}$
	%the sum taken over a set of at most $(n-1)$ edges. 
	where $\tilde\nN$ is a strict subset of $\nN$ (so $|\tilde\nN|\le n-1$).
	Now, each edge has degree either $\frac{5}{2} + \delta$ or $3$, and there are at most $5$ edges with degree $3$ according to \ref{pr:final_graph},	so the increment of the LHS would be at most 
\begin{equ}
5\cdot 3 +(n-6) \cdot(\frac52+\delta) +\frac{5n}{2}=
5n + (n-6) \delta \;.
\end{equ}
If condition \eqref{e:aEdges} holds with $\nN$ not in $\bar{H}$, then adding a multiple of $\delta$ which can be chosen arbitrarily small to the LHS will not change the validity of the bound, and therefore \eqref{e:aEdges} also holds with all vertices of $\nN$ being in $\bar{H}$.
%	 If the increment of LHS is smaller than $5n$, then adding $\nN$ improves the bound so we should not include $\nN$. 
%	On the other hand, even if the increment of LHS exceeds $5n$, it is at most $(n-1) \delta$ more than that. If the condition \eqref{e:aEdges} holds with $\nN \notin \bar{H}$, the left hand side should be smaller than the right hand side by a quantity that is at least very close to $1$ (up to a constant multiple of $\delta$), and adding $\nN$ to $\bar{H}$ will not change the validity of the bound. Thus, in the case either $v^\star$ or $v_{\star}$ is not in $\bar{H}$, it suffices to consider the sub-graphs $\bar{H}$ such that none of the vertices from $\nN$ are in it. 
\end{itemize}
	
	The case for \eqref{e:aEdges1} is essentially the same, except for the noise vertices $\pP$ and $\qQ$ when $v^\star$ or $v_{\star}$ are in $\bar{H}$. If $v^\star \in \bar{H}$, then putting $\pP$ into $\bar{H}$ yields an increment of $\sum_{e \in \pP} a_{e}$ on the left hand side, which is strictly larger than $\frac{5p}{2}$. Thus, the bound is worse if $\pP \subset \bar{H}$. Note that the conclusion here does \text{not} hold for \eqref{e:aEdges} since the increment on the right hand side would be $\frac{5}{2} (p+1)$ instead of $\frac{5p}{2}$. The case for $v_{\star}$ is the same. 
	
	We finally turn to \eqref{e:aEdges2}. Since $v_{\star} \notin \bar{H}$, including $\qQ$ in $\bar{H}$ would yield an increment of $\sum_{e \in \qQ} a_{e}$ on the left hand side, which is strictly bigger than $\frac{5q}{2}$. Since the inequality in this condition is reversed, the bound actually becomes better. So we should only consider the worst case that $\qQ$ is not in $\bar{H}$. One can argue in the same way that we should include $\pP$ in $\bar{H}$ if and only if $v^\star \in \bar{H}$. For contracted vertices $\nN$, if $v^\star \in \bar{H}$, then adding $\nN$ increases the left hand side by $\frac{5n}{2}$ but the right hand side by $5n$. If $v^\star \notin \bar{H}$, then the increment of the left hand side by adding $\nN$ is exactly the quantity \eqref{eq:increment}, which is bigger than $5n$. Thus, we should also include $\nN$ into $\bar{H}$ if and only if $v^\star \in \bar{H}$. 
	
	This completes the proof. 
\end{proof}

\begin{prop}
	If $p+q \geq 2$, then the graph \eqref{eq:graph} satisfies Assumption \ref{ass:ele-graph}. 
\end{prop}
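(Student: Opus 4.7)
The strategy is to apply Lemma~\ref{le:simple-verification} so as to reduce the verification of Assumption~\ref{ass:ele-graph} to a small finite list of ``worst case'' sub-graphs, and then to check each such case by direct computation using the edge-degree bookkeeping established in Proposition~\ref{pr:final_graph}. Condition~1 is immediate: the barred edge has $(a_e,r_e)=(3,1)$ so $a_e+(r_e\wedge 0)=3<5$, every plain edge has degree either $3$ or $\tfrac{5}{2}+\delta$ with $r_e=0$, and only regular (not hyper-) edges are constrained by this condition.

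For Conditions~2 and~3, Lemma~\ref{le:simple-verification} tells us that the sub-graphs $\bar H$ we must check are those for which $\bar H\cap\pP^\star$, $\bar H\cap\qQ_\star$, and $\bar H\cap\nN^\star_\star$ are each either empty, the full set, or (for $\pP^\star$, $\qQ_\star$ only, in Condition~2) the singleton $\{v^\star\}$ or $\{v_\star\}$. This leaves a bounded number of configurations, parametrised by which of $v^\star,v_\star$ lie in $\bar H$, whether all of $\pP$ and all of $\qQ$ are adjoined, and whether the contracted cluster $\nN$ is adjoined. For each such $\bar H$ the LHS is a sum of explicit edge degrees (three for a degree-$3$ plain edge, $\tfrac{5}{2}+\delta$ for a degree-$(5/2+\delta)$ plain edge, $3$ for the barred edge, and $\tfrac{5n_i}{2}$ for each hyper-edge of size $n_i$), while the RHS is a simple count of $|\bar H_{\mathrm{in}}|$ and $|\bar H_{\mathrm{ex}}|$. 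Using Proposition~\ref{pr:final_graph} I know exactly how many edges in $\pP\cup\pP'$ (resp.\ $\qQ\cup\qQ'$) have degree $3$ versus $\tfrac{5}{2}+\delta$, which gives sharp upper bounds on these sums. In each configuration the bound reduces to a one-line inequality between a positive multiple of $\delta$ and a strictly positive constant coming from the deficit introduced by the $\epsilon$-allocation; since $\delta$ was chosen arbitrarily small, all inequalities hold.

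For Condition~4 the restricted sub-graphs are those with $\bar H\cap\qQ_\star=\emptyset$ and $\bar H\cap(\pP^\star\cup\nN^\star_\star)\in\{\emptyset,\pP^\star\cup\nN^\star_\star\}$, so here we only need to consider a handful of cases: $\bar H=\{0\}$, $\bar H=\{0,v_\star\}$, $\bar H=\{0,v^\star\}\cup\pP\cup\nN_\star^\star$, and the same with $v_\star$ adjoined. In each case I estimate the LHS from below using that every edge going out of $\bar H$ contributes its full degree, plus $r_e$ from the barred edge when it is directed outward (contributing the extra $1$), and check that this strictly exceeds $5(|\bar H_{\mathrm{in}}|+|\bar H_{\mathrm{ex}}|/2)$. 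The strict inequality comes precisely because the $\epsilon$-reduction lowered some edge degrees only to $\tfrac{5}{2}+\delta$ rather than below $\tfrac{5}{2}$, together with the fact that hyper-edges of size $n_i$ contribute $\tfrac{5n_i}{2}$ which saturates the bound per attached noise vertex.

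The main obstacle is purely combinatorial: the parity distinctions $\1_{\{k\text{ even}\}}$ and $\1_{\{\ell\text{ odd}\}}$, together with the possibility that $p=0$ or $q=0$ (which is allowed since only $p+q\geq 2$ is assumed), generate many cases for how the degree-$3$ edges are distributed between $\pP,\pP',\qQ,\qQ'$. The hypothesis $p+q\geq 2$ is what ensures that enough $\epsilon$ can be absorbed into these edges; in each case one must check that even in the ``worst'' placement of degree-$3$ edges (namely as many as possible attached to $v^\star$ or $v_\star$ inside $\bar H$), the bounds still hold by a margin proportional to $\delta$ times the number of edges reduced. Once this case analysis is laid out systematically, each individual verification is a one-line arithmetic check, and the proposition follows.
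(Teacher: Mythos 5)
Your plan follows essentially the same route as the paper: use Lemma~\ref{le:simple-verification} to reduce to a small set of ``worst case'' sub-graphs, read off the degree distribution from Proposition~\ref{pr:final_graph}, and verify each configuration by arithmetic. Condition~1 is indeed immediate, and your treatment of Conditions~2 and~3 as a finite case check with margins that survive for small $\delta$ matches the paper's argument in spirit.

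However, your claim that ``in each configuration the bound reduces to a one-line inequality \dots since $\delta$ was chosen arbitrarily small, all inequalities hold'' hides a genuine gap in Condition~4. There is one borderline case you would discover on actually carrying out the arithmetic: for $\tau = \Psi^2\iI(\Psi^2)$ (i.e.\ $k=\ell=2$) in the $4$-th homogeneous chaos, one has $p+q+n-4-\delta_\tau=0$, so no positive power of $\epsilon$ is allocated, all edges in $\pP\cup\nN$ keep degree $3$, and with the worst sub-graph $\bar H=\{v^\star\}\cup\pP\cup\nN$ both sides of \eqref{e:aEdges2} equal $10$. The condition requires a \emph{strict} inequality, so the assumption genuinely fails with the standing degree assignments. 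The paper handles this by reassigning the barred edge a degree $3+\delta$ rather than $3$, which is harmless for the model bound and recovers the strict inequality. Your proposal, as written, would not notice this and would incorrectly conclude. Separately, your enumeration of Condition~4 sub-graphs (including $\bar H=\{0,v_\star\}$ ``and the same with $v_\star$ adjoined'') is inconsistent with the restriction $\bar H\cap\qQ_\star=\emptyset$, which forces $v_\star\notin\bar H$; the worst cases are $\bar H=\{0\}$ and $\bar H=\{0,v^\star\}\cup\pP\cup\nN$.
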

\begin{proof}
	The first condition of Assumption \ref{ass:ele-graph} is trivial as each edge has degree at most $3$. We now give some details in the verification of the other three. 
	
	\begin{enumerate}
		\item Condition \eqref{e:aEdges}. We first consider the case $\{v^\star, v_\star\} \subset \bar{H}$. According to Lemma \ref{le:simple-verification}, we only need to look at the situation when $\nN \subset \bar{H}$. But Item $1$ in that lemma does not give specifications in this case whether to include $\pP$, $\qQ$ or not, so we need to consider all four possibilities for $\pP$ and $\qQ$. If both $\pP$ and $\qQ$ are in $\bar{H}$, then condition \eqref{e:aEdges} reads
		\begin{align*}
			\sum_{e \in \pP \cup \qQ \cup \nN} a_{e} + 3 + \frac{5n}{2} < 5 \big( n + 2 + \frac{1}{2} (p+q-1) \big), 
		\end{align*}
		which certainly holds since $|\pP \cup \qQ \cup \nN| = p + q + n$ and all $a_{e}$ in that sum have degree $\frac{5}{2} + \delta$ except at most $5$ of them which have degree $3$. If neither $\pP$ nor $\qQ$ is in $\bar{H}$, then if we let $r$ denote the number of contracted edges that have degree $3$, the condition reads
		\begin{equation} \label{eq:1stcondition}
			(\frac{5}{2} + \delta)n + (\frac{1}{2} - \delta)r + 3 + \frac{5n}{2} < 5 (n+1). 
		\end{equation}
		Since $p+q \geq 2$, by the assumption on the $\epsilon$-allocation, we necessarily have $r \leq 3$, so the condition holds for all small enough $\delta$. If one of $\pP$ and $\qQ$ is in $\bar{H}$ but the other not, then it is easy to see that the increment of the right hand side of \eqref{eq:1stcondition} is larger than that of the left hand side (the right hand side has an additional increment $\frac{5}{2}$ since $\bar{H}_{\text{ex}} \neq \emptyset$), so Condition \eqref{e:aEdges} also holds. 
		
We now turn to the situation $\{v^\star,v_{\star}\} \cap \bar{H}=\emptyset$. 
In this case, the ``worst" sub-graph $\bar{H}$ according to Lemma \ref{le:simple-verification} is that $\bar{H} = \emptyset$. Note that the bound \eqref{e:aEdges} only requires $|\bar{H}| \geq 3$, and in this case adding any three or more vertices into $\bar{H}$ keeps the left hand side $0$ but yields a positive quantity on the right hand side (except adding all of $\nN$, but then the increments are $\frac{5n}{2} < 5n$, which is still fine). Finally, the case when either $v^\star$ or $v_{\star}$ is in $\bar{H}$ but the other not is easy to verify. Thus, we conclude that the condition \eqref{e:aEdges} is satisfied for all sub-graphs $\bar{H}$ with at least three vertices. 
		
		\item Condition \eqref{e:aEdges1}. 
		The case  $\{v^\star, v_{\star} \}\subset \bar{H}$ is automatically implied by the verified bound \eqref{e:aEdges}, as here the left hand sides for both conditions are the same since neither term involving $r_{e}$ is counted, but the right hand side of \eqref{e:aEdges1} is larger than that of \eqref{e:aEdges}. If neither $v^\star$ nor $v_{\star}$ is in $\bar{H}$, then as long as $\bar{H}$ contains one vertex other than $0$, the right hand side is strictly positive while the left hand side is still $0$, so the bound is also verified. 
		
		If $v^\star \in \bar{H}$ but $v_{\star} \notin \bar{H}$, then by Lemma \ref{le:simple-verification}, we only need to include $\pP$ in $\bar{H}$, so the condition reads
		\begin{align*}
			\sum_{e \in \pP} a_{e} + (3+1-1) < 5(1 + \frac{p}{2}). 
		\end{align*}
		Since the number of edges in $\pP$ of degree $3$ is at most three, the above bound obviously holds. 
		The case $v_{\star} \in \bar{H}$ but $v^\star \notin \bar{H}$ can be checked in the same way. This completes the verification for \eqref{e:aEdges1}. 
		
		\item We finally turn to Condition \ref{e:aEdges2}. Here we assume $v_{\star} \notin \bar{H}$. If $v^\star \notin \bar{H}$, then the worst situation is $\bar{H} = \emptyset$, and we have $0=0$. Since any other case will yield strictly better bound than this one, the bound then holds for any non-empty $\bar{H}$ that does not contain $v^\star$ and $v_{\star}$. If $v^\star \in \bar{H}$, then the worst case is that
		\begin{align*}
			\bar{H} = \{v^\star\} \cup \pP \cup \nN, 
		\end{align*}
		so the condition becomes
		\begin{align*}
			\sum_{e \in \pP \cup \nN} a_{e} + \frac{5n}{2} + 3 + 1 > 5 (n+1 + \frac{p}{2}). 
		\end{align*}
		Again, let $r$ denote the total number of edges in the above sum (in $\pP \cup \nN$) that have degree $3$, then the left hand side above is
		\begin{align*}
			(\frac{5}{2} + \delta)(p+n) + (\frac{1}{2} - \delta) r + \frac{5n}{2} + 4. 
		\end{align*}
		If $\ell \geq 3$, then $r \geq 3$ so the bound always holds. The bound also holds if $\ell = 2$ but $p+n \geq 3$. In the case of the $4$-th homogeneous chaos of $\Psi^{2} \iI (\Psi^{2})$, we have $p+n = r = 2$, so one gets an equality instead of a strict inequality in Condition \eqref{e:aEdges2}. However, we can treat the barred arrow as having degree $a_{3} = 3+\delta$ rather than $3$. This will not violate any of the assumption on our model and give us a strictly inequality in \eqref{e:aEdges2}. 
	\end{enumerate}
	We have now completed the verification for second order objects with chaos order $p+q \geq 2$. 
\end{proof}

\begin{rmk}
	The assumption $p+q \geq 2$ is essential, as Condition \eqref{e:aEdges} is indeed violated when $p+q \leq 1$. For example, for the graph representing the logarithmic divergence in \eqref{eq:log_constant}, if we take $\bar{H} = H_{0}$, then $|\bar{H}_{\text{in}}| = 6$ and $\bar{H}_{\text{ex}} = \emptyset$, so both sides of \eqref{e:aEdges} are $25$, and the \textit{inequality} does not hold. In fact, the same is true for all such graphs with $p+q \leq 1$, and the mass renormalisation $M_{0}$ is needed in order for them to satisfy Assumption \ref{ass:ele-graph}. 
\end{rmk}

\subsection{Bounds for second order objects in $0$-th and $1$-st homogeneous chaos}
%\weijun{It might be better to briefly state Lemma 4.7 in your paper at the beginning of this subsection, and then cite it below, as we need it many times. What do you think?}
We now turn to bounding the objects in the $0$-th and $1$-st order chaos (that is, when $p+q \leq 1$). 
We will need the following lemma taken from Lemma~4.7 in \cite{KPZCLT}.
\begin{lem}  \label{lem:collapse}
Given space-time points $y_1,...,y_n$ and $-5<\alpha_i<0$, one has
\begin{equ} [e:reduce-JM]
\int
\prod_{i=1}^n |y_i-x_i|^{\alpha_i}
|\fC_n^{(\epsilon)} (x_1,...,x_n)|
  \,dx_1...dx_n
\lesssim \eps^{5(n/2-1)}
\int_{\R^5}
 	\prod_{i=1}^n  
	\big(| y_i-x |+\eps\big)^{\alpha_i}
\, dx \,.
\end{equ}
\end{lem}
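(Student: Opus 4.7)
The plan is to exploit two structural features of the cumulant $\fC_n^{(\epsilon)}$. First, by stationarity and the rescaling $\zeta_\eps(z) = \eps^{-5/2}\zeta(z/\eps)$, the joint cumulant satisfies
\[\fC_n^{(\epsilon)}(x_1,\ldots,x_n) = \eps^{-5n/2}\,\fC_n(x_1/\eps,\ldots,x_n/\eps).\]
The unrescaled cumulant $\fC_n$ is bounded, which follows from continuity of $\zeta$, finiteness of all moments, and the Möbius-type formula \eqref{e:mome2cumu}, and it is supported on configurations where all pairwise distances $|w_i-w_j|$ are at most some constant, by Assumption~\ref{ass:noise}(3) together with stationarity. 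Consequently I obtain the a priori bound
\[|\fC_n^{(\epsilon)}(x_1,\ldots,x_n)| \;\lesssim\; \eps^{-5n/2}\,\one_{\{|x_i-x_j|\le C\eps\,\forall i,j\}}.\]

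The main idea is then to introduce an auxiliary space-time point $x$ that plays the role of a common centre. The geometric observation is that whenever all $|x_i-x_j|\le C\eps$, the set $\{x:|x-x_i|\le C\eps\text{ for every }i\}$ contains $x_1$ and so has (parabolic) Lebesgue measure at least $c\eps^5$. This yields the lifting estimate
\[\one_{\{|x_i-x_j|\le C\eps\,\forall i,j\}} \;\lesssim\; \eps^{-5}\int_{\R^5} \prod_{i=1}^n \one_{\{|x-x_i|\le C\eps\}}\,dx,\]
and substituting it into the left-hand side of \eqref{e:reduce-JM} allows Fubini to fully decouple the $x_i$-integrals.

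What remains is the single-variable bound
\[\int |y_i-x_i|^{\alpha_i}\,\one_{\{|x-x_i|\le C\eps\}}\,dx_i \;\lesssim\; \eps^5\,(|y_i-x|+\eps)^{\alpha_i},\]
which is where the hypothesis $\alpha_i>-5$ enters through local integrability of $|\cdot|^{\alpha_i}$. A dichotomy according to whether $|y_i-x|\ge 2C\eps$ (in which case $|y_i-x_i|\sim |y_i-x|$ throughout the ball) or $|y_i-x|<2C\eps$ (in which case the change of variables $v=y_i-x_i$ reduces the integral to $\int_{|v|\lesssim\eps}|v|^{\alpha_i}\,dv\sim \eps^{5+\alpha_i}$) yields both estimates. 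Multiplying these together supplies a factor $\eps^{5n}$, which combined with the $\eps^{-5n/2-5}$ from the first two steps produces exactly the claimed prefactor $\eps^{5(n/2-1)}$.

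The real content of the proof is the auxiliary-variable trick in the second paragraph: without it, the mutual-distance constraint in the support of $\fC_n^{(\epsilon)}$ couples the $x_i$'s in a way that prevents one from separating the integrals. Once that decoupling is achieved, the remaining work is elementary, the only mild subtlety being that all distances, measures and Jacobians are parabolic (time counting as two dimensions) so that powers of $\eps$ must be tracked with some care.
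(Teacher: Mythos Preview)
The paper does not actually prove this lemma; it simply imports it verbatim from \cite{KPZCLT} (Lemma~4.7 there), so there is no in-paper argument to compare against. Your proof is correct and is, as far as I can tell, the natural one (and very likely the one in \cite{KPZCLT}): bound $|\fC_n^{(\epsilon)}|$ by $\eps^{-5n/2}$ times an indicator of mutual $\eps$-closeness, lift that indicator to an integral over an auxiliary point $x$, decouple by Fubini, and bound each one-variable integral by a case split on $|y_i-x|\gtrless\eps$.

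One small wrinkle in your write-up: the sentence ``the set $\{x:|x-x_i|\le C\eps\text{ for every }i\}$ contains $x_1$ and so has (parabolic) Lebesgue measure at least $c\eps^5$'' is not quite a valid inference as stated---containing the single point $x_1$ does not by itself give a lower bound on the measure. The clean fix is to enlarge the constant: under the hypothesis $|x_i-x_j|\le C\eps$ for all $i,j$, the set $\{x:|x-x_i|\le 2C\eps\text{ for every }i\}$ contains the full parabolic ball $B(x_1,C\eps)$ by the triangle inequality, and that ball has measure $\sim\eps^5$. Replacing $C$ by $2C$ in the lifting indicator only affects constants in the subsequent steps, so nothing else changes.
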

The $0$-th order objects only occur when $k+\ell$ is even. After renormalisations by $M_{0}^{(\epsilon)}$ (subtraction of the constants $C_{k,\ell}^{(\epsilon)}$'s as defined in \eqref{eq:sum_constants}), they are given by
\begin{equation} \label{eq:zero}
	\epsilon^{\frac{1}{2}(\ell + k - 4)} \bigg(
	\begin{tikzpicture} [scale=0.7,baseline=-0.2cm]
	%\filldraw[lightgray] (0,1.2) ellipse (18pt and 12pt); 
	\node[cloud, cloud puffs=7.7, cloud ignores aspect, minimum width=1.1cm, minimum height=1.2cm,  draw=lightgray, fill=lightgray]  at (0,1.3) {};
	\node at (0.5,-1) [root] (below) {}; 
	\node at (-1.5,0) [dot]  (left) {}; 
	\node at (1.5,0)  [dot]  (right) {}; 
	\node at (-0.8,0.65) {$\cdots$}; 
	\node at (0.9,0.65) {$\cdots$}; 
	\node at (-1,1.5) {\scriptsize $(\ell)$}; 
	\node at (1,1.5) {\scriptsize $(k)$}; 
	\draw[kernel1,bend right =20] (left) to (right); 
	\draw[kernel, bend left=30] (-0.3,1) to (left); 
	\draw[kernel, bend right=40] (-0.3,1.4) to (left); 
	\draw[kernel, bend left = 40] (0.3,1.4) to (right); 
	\draw[kernel, bend right=30] (0.3,1) to (right); 
	\draw[testfcn] (right) to (below); 
	\node at (-0.3,1) [dot] {}; 
	\node at (0.3,1) [dot] {}; 
	\node at (-0.3,1.4) [dot] {}; 
	\node at (0.3,1.4) [dot] {}; 
	\node at (0,1.2) {\scriptsize $\pi$}; 
	\end{tikzpicture}
	-
	\begin{tikzpicture} [scale=0.7,baseline=-0.2cm]
	%\filldraw[lightgray] (0,1.2) ellipse (18pt and 12pt); 
	\node[cloud, cloud puffs=7.7, cloud ignores aspect, minimum width=1.1cm, minimum height=1.2cm,  draw=lightgray, fill=lightgray]  at (0,1.3) {};
	\node at (0.5,-1) [root] (below) {}; 
	\node at (-1.5,0) [dot]  (left) {}; 
	\node at (1.5,0)  [dot]  (right) {}; 
	\node at (-0.8,0.65) {$\cdots$}; 
	\node at (0.9,0.65) {$\cdots$}; 
	\node at (-1,1.5) {\scriptsize $(\ell)$}; 
	\node at (1,1.5) {\scriptsize $(k)$}; 
	\draw[kernel,bend right =20] (left) to (right); 
	\draw[kernel, bend left=30] (-0.3,1) to (left); 
	\draw[kernel, bend right=40] (-0.3,1.4) to (left); 
	\draw[kernel, bend left = 40] (0.3,1.4) to (right); 
	\draw[kernel, bend right=30] (0.3,1) to (right); 
	\draw[testfcn] (right) to (below); 
	\node at (-0.3,1) [dot] {}; 
	\node at (0.3,1) [dot] {}; 
	\node at (-0.3,1.4) [dot] {}; 
	\node at (0.3,1.4) [dot] {}; 
	\node at (0,1.2) {\scriptsize $\pi$}; 
	\end{tikzpicture}
	\bigg) \phantom{1} = \phantom{1} \epsilon^{\frac{1}{2}(\ell+k-4)} \phantom{1} 
	\begin{tikzpicture} [scale=0.7,baseline=-0.2cm]
	%\filldraw[lightgray] (0,1.2) ellipse (18pt and 12pt); 
	\node[cloud, cloud puffs=7.7, cloud ignores aspect, minimum width=1.1cm, minimum height=1.2cm,  draw=lightgray, fill=lightgray]  at (0,1.3) {};
	\node at (0,-1.2) [root] (below) {}; 
	\node at (-1.5,0) [dot]  (left) {}; 
	\node at (1.5,0)  [dot]  (right) {}; 
	\node at (-0.8,0.65) {$\cdots$}; 
	\node at (0.9,0.65) {$\cdots$}; 
	\node at (-1,1.5) {\scriptsize $(\ell)$}; 
	\node at (1,1.5) {\scriptsize $(k)$}; 
	\draw[kernel, bend left=30] (-0.3,1) to (left); 
	\draw[kernel, bend right=40] (-0.3,1.4) to (left); 
	\draw[kernel, bend left = 40] (0.3,1.4) to (right); 
	\draw[kernel, bend right=30] (0.3,1) to (right); 
	\draw[testfcn] (right) to (below); 
	\draw[kernel] (left) to (below); 
    \node at (-0.3,1) [dot] {}; 
    \node at (0.3,1) [dot] {}; 
    \node at (-0.3,1.4) [dot] {}; 
    \node at (0.3,1.4) [dot] {}; 
	\node at (0,1.2) {\scriptsize $\pi$}; 
	\end{tikzpicture}, 
\end{equation}
which represents a deterministic quantity. We have the following bound for it. This in particular implies the first bound of \eqref{eq:main_bound}. 

\begin{prop} \label{pr:zero}
	The quantity on the right hand side of \eqref{eq:zero} is bounded by $\epsilon^{\theta} \lambda^{-\delta}$ for some $\theta \geq 0$ and all sufficiently small $\delta$. Moreover, $\theta = 0$ if and only if $k=\ell=2$ and $\pi$ is the pair-wise contraction. %In all other cases, we have $\theta > 0$. 
\end{prop}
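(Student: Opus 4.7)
The plan is to unfold the RHS of \eqref{eq:zero} as an explicit deterministic integral (since all noise vertices are contracted inside the cumulants), apply Lemma~\ref{lem:collapse} to collapse the cumulants, and then carry out a scaling and power-counting argument.

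Writing $\pi = \{(k_1,\ell_1),\ldots,(k_{|\pi|},\ell_{|\pi|})\}$ with block $B_i$ containing $k_i$ right-noise and $\ell_i$ left-noise vertices, the RHS of \eqref{eq:zero} reads
\[
I = \epsilon^{(k+\ell-4)/2}\int K(-y_1)\,\varphi_0^\lambda(y_2)\prod_{i,j}K(y_1-z_i^L)K(y_2-z_j^R)\prod_{B\in\pi}\fC^{(\epsilon)}_{|B|}(z_B)\,dy\,dz.
\]
Applying Lemma~\ref{lem:collapse} block by block extracts a prefactor $\epsilon^{5((k+\ell)/2-|\pi|)}$, collapses each $B_i$ to a single variable $x_i$, and bounds each attached kernel by $(|y-x_i|+\epsilon)^{-3}$; combining with the outer $\epsilon^{(k+\ell-4)/2}$ yields
\[
|I|\lesssim \epsilon^{\alpha}\int |K(-y_1)|\,|\varphi_0^\lambda(y_2)|\prod_{i=1}^{|\pi|}(|y_1-x_i|+\epsilon)^{-3\ell_i}(|y_2-x_i|+\epsilon)^{-3k_i}\,dy\,dx,
\]
with $\alpha := 3(k+\ell)-2-5|\pi|$.

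Next, rescale $(y,x)=\lambda(u,w)$ and set $\tilde\eps = \eps/\lambda$. A direct count of $\lambda$-powers ($\lambda^{-3}$ from $K(-\lambda u_1)$, $\lambda^{-5}$ from $\varphi^\lambda(\lambda u_2)$, $\lambda^{-3m}$ from each regularized factor of exponent $3m$, and $\lambda^{5(2+|\pi|)}$ from the measure) combines with $\epsilon^{\alpha}=\lambda^{\alpha}\tilde\eps^{\alpha}$ so that all overall $\lambda$-factors cancel, leaving $|I|\lesssim \tilde\eps^{\alpha}\,\tilde I(\tilde\eps,1/\lambda)$, where $\tilde I$ is the rescaled integral with $|u_1|$ cut off at $1/\lambda$ (from the support of $K$) and $|u_2|\le 1$ (from the support of $\varphi$). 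Since every block has size at least $2$, $|\pi|\le (k+\ell)/2$, so $\alpha\ge (k+\ell)/2-2\ge 0$ (the right-hand side is non-negative as $k+\ell\ge 4$ in all admissible cases), with equality if and only if every block has size exactly $2$ and $k+\ell=4$. Under the constraint $k_i,\ell_i\ge 1$ this forces $k=\ell=2$ and $\pi=\{(1,1),(1,1)\}$, i.e.\ the borderline case.

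The last step is to bound $\tilde I$ by iterating the standard $\R^5$ convolution estimate for $(|u-w|+\tilde\eps)^{-a}(|v-w|+\tilde\eps)^{-b}$, which produces $(|u-v|+\tilde\eps)^{5-a-b}$ when both exponents are below $5$, and a $\tilde\eps^{5-a}$-gain for each exponent $a>5$. In the borderline case $\alpha=0$ and two applications of this estimate reduce $\tilde I$ to $\int_{|u_1|\le 1/\lambda}|u_1|^{-3}|\varphi(u_2)|(|u_1-u_2|+\tilde\eps)^{-2}\,du_1\,du_2$, which picks up only a logarithmic tail $|\log\lambda|$ from the region $|u_1|\to\infty$; this is absorbed into $\lambda^{-\delta}$ and gives $\theta=0$. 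In every other case $\alpha>0$, and a case-by-case check shows that the $\tilde\eps^{-1}$-contributions arising from blocks with $k_i\ge 2$ or $\ell_i\ge 2$ are strictly dominated by $\tilde\eps^{\alpha}$, producing $|I|\lesssim \tilde\eps^{\theta'}|\log(\cdot)|^c$ for some $\theta'>0$; using $\eps,\lambda\in(0,1)$ one rewrites $(\eps/\lambda)^{\theta'}\le \eps^{\theta}\lambda^{-\delta}$ for appropriate $\theta>0$ and small $\delta$, while polylogarithmic factors are absorbed into $\lambda^{-\delta}$ via $|\log x|\lesssim x^{-\delta}$. The main technical obstacle is the bookkeeping in this last step, namely verifying that the net exponent $\alpha$ always strictly exceeds the total $\tilde\eps$-singularity accumulated in $\tilde I$ outside the borderline case; this requires examining how the $\tilde\eps^{5-3k_i}$ and $\tilde\eps^{5-3\ell_i}$ gains combine after iterated convolution, and confirming they never exhaust the prefactor $\tilde\eps^{\alpha}$.
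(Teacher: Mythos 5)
Your approach tracks the paper's quite closely. The key formula $\alpha = 3(k+\ell)-2-5|\pi|$ for the total $\epsilon$-exponent after applying Lemma~\ref{lem:collapse} is exactly what the paper obtains (there it appears as the combination of the outer $\epsilon^{(k+\ell-4)/2}$ with $\prod_j\epsilon^{5(|B_j|/2-1)}$), and the observation that $\alpha\geq \tfrac12(k+\ell)-2\geq 0$ with equality precisely when $k=\ell=2$ and $\pi$ is the pair-wise contraction is exactly the paper's identification of the borderline case.

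Where you diverge is in the mechanics of the final step. The paper integrates out the collapsed variables $x_j$ in one shot by invoking a strengthened version of [Hairer, Lemma~10.14] that is valid for all (including large) exponents, producing a single $(|y_1-y_2|+\epsilon)^{-(3(k+\ell)-5n)}$ factor with \emph{no} additional $\epsilon$-loss; the whole prefactor $\epsilon^{\alpha}$ is then available to lower the degree to $2+\delta$, leaving $\epsilon^{\delta}$. You instead rescale by $\lambda$ (a valid but unnecessary move — the paper works directly, and the $\lambda$-balance you compute is indeed exact), and then apply the \emph{standard} $\R^5$ convolution estimate, which produces a $\tilde\epsilon^{5-a}$ defect whenever an exponent $a$ exceeds $5$. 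You then have to show these defects never exhaust $\tilde\epsilon^{\alpha}$. This is the part you leave as a sketch (``a case-by-case check shows\ldots''), and you are right to flag it as the nontrivial point of your route. It does close: each block $B_j$ contributes $3(\ell_j+k_j)-5$ to $\alpha$ while its potential $\tilde\epsilon$-loss is at most $(3\ell_j-5)_+ + (3k_j-5)_+$, so the per-block net is always $\geq 1$, and the total net $\alpha - (\text{loss})$ equals $\sum_j(\text{per-block net})-2$, which is strictly positive except exactly when $n=2$ and both blocks are $(1,1)$, i.e.\ the pair-wise contraction of $(2,2)$. Filling in that short computation would make your argument complete; without it the proposal has a (small, fixable) gap precisely where you flagged it.
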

\begin{proof}
	Suppose $\ell + k \geq 6$, or $\ell + k = 4$ but $\pi$ is the partition that contracts all four points together. Then, we have
	\begin{align*}
		\pi = B_{1} \cup \cdots \cup B_{n}
	\end{align*}
	such that either $|B_{j}| \geq 4$ for some $j$ or $\sum_{j}|B_{j}| \geq 6$.
	By consecutively applying Lemma~\ref{lem:collapse} % Lemma~4.7 in \cite{KPZCLT}
	 and  \cite[Lemma~10.14]{Hai14a}\footnote{We actually need a modified version of Lemma~10.14 in \cite{Hai14a}: $\int  (|x-y|+\epsilon)^{-a} (|y|+\epsilon)^{-b} dy \lesssim (|x|+\epsilon)^{-a-b+|s|}$ for all large $a$ and $b$ (no need to assume that $a$ and $b$ are smaller than the dimension).}, 
%\hao{Seems slightly sloppy: we first collapse every cumulant into a point, then convolve out those points, right? But when we use Lemma~10.14 in \cite{Hai14a} to do convolutions, the condition is kernels must have homogeneities bigger than $-|s|$, which may be violated here. I think correct order would be first use eps then convolve. -Actually, can we also use the idea in the next page?}
%\weijun{Hmmm you are right, but I think it does not matter here, since the noise is smooth at scale $\epsilon$, so one can think of the kernels already carrying $\epsilon$ such that each kernel satisfies a bound like $K(x) \lesssim (|x| + \epsilon)^{-a}$ (except the barred arrow, which is not in consideration here). In this case, if we convolute two kernels, we do not require the exponents are integrable (because of the existence of $\epsilon$). In fact, we have $\int  (|x-y|+\epsilon)^{-a} (|y|+\epsilon)^{-b} dy \lesssim (|x|+\epsilon)^{-a-b+|s|}$ for all large $a$ and $b$.} 
the object can then be bounded by
	\begin{equation} \label{eq:zero_bound}
		\epsilon^{\frac{1}{2}(\ell+k-4)} \phantom{1}
		\begin{tikzpicture} [scale=0.7,baseline=-0.2cm]
		%\filldraw[lightgray] (0,1.2) ellipse (18pt and 12pt); 
		\node[cloud, cloud puffs=7.7, cloud ignores aspect, minimum width=1.1cm, minimum height=1.3cm,  draw=lightgray, fill=lightgray]  at (0,1.3) {};
		\node at (0,-1) [root] (below) {}; 
		\node at (-1.5,0) [dot]  (left) {}; 
		\node at (1.5,0)  [dot]  (right) {}; 
		\node at (-0.8,0.65) {$\cdots$}; 
		\node at (0.9,0.65) {$\cdots$}; 
		\node at (-1,1.5) {\scriptsize $(\ell)$}; 
		\node at (1,1.5) {\scriptsize $(k)$}; 
		\draw[kernel, bend left=30] (-0.3,1) to (left); 
		\draw[kernel, bend right=40] (-0.3,1.4) to (left); 
		\draw[kernel, bend left = 40] (0.3,1.4) to (right); 
		\draw[kernel, bend right=30] (0.3,1) to (right); 
		\draw[testfcn] (right) to (below); 
		\draw[kernel] (left) to (below); 
		\node at (-0.3,1) [dot] {}; 
		\node at (0.3,1) [dot] {}; 
		\node at (-0.3,1.4) [dot] {}; 
		\node at (0.3,1.4) [dot] {}; 
		\node at (0,1.2) {\scriptsize $\pi$}; 
		\end{tikzpicture}
		\phantom{1} \lesssim \phantom{1} 
		\epsilon^{\frac{1}{2}(\ell + k - 4)} \cdot 
	\Big( \prod_{j=1}^{n} \epsilon^{5(|B_{j}|/2 - 1)} \Big)
		\begin{tikzpicture} [scale=0.7,baseline=-0.3cm]
		\node at (0,-1) [root] (below) {}; 
		\node at (-1.5,0.7) [dot] (left) {}; 
		\node at (1.5,0.7) [dot] (right) {}; 
		\node at (0,1) {\tiny $3(\ell+k) - 5n$}; 
		\draw[generic] (below) to node[labl]{\tiny $3$} (left); 
		\draw[dist] (below) to (right); 
		\draw[gepsilon] (left) to (right); 
		\end{tikzpicture}
		, 
	\end{equation}
	where the dotted line \tikz[baseline=0] \draw[gepsilon] (0,0.15) to (1,0.15); with degree $a_{e}$ denotes a kernel that is bounded by $(|z|+\epsilon)^{-a_{e}}$, where $z$ is the difference between two end points of the line. 
	
	If $\ell + k \geq 6$, then the right hand side above is bounded by
	\begin{align*}
		\epsilon^{\frac{1}{2}(\ell + k - 4)} 
		\begin{tikzpicture} [scale=0.6,baseline=-0.3cm]
		\node at (0,-1) [root] (below) {}; 
		\node at (-1.5,0.7) [dot] (left) {}; 
		\node at (1.5,0.7) [dot] (right) {}; 
		\node at (0,1) {\tiny $\frac{1}{2}(\ell+k)$}; 
		\draw[generic] (below) to node[labl]{\tiny $3$} (left); 
		\draw[dist] (below) to (right); 
		\draw[gepsilon] (left) to (right); 
		\end{tikzpicture}
		\quad \lesssim \quad \epsilon^{\delta}
		\begin{tikzpicture} [scale=0.6,baseline=-0.3cm]
		\node at (0,-1) [root] (below) {}; 
		\node at (-1.5,0.7) [dot] (left) {}; 
		\node at (1.5,0.7) [dot] (right) {}; 
		\node at (0,1) {\tiny $2+\delta$}; 
		\draw[generic] (below) to node[labl]{\tiny $3$} (left); 
		\draw[dist] (below) to (right); 
		\draw[gepsilon] (left) to (right); 
		\end{tikzpicture}
		, 
	\end{align*}
	where the first quantity is obtained by putting the product $\prod \epsilon^{5(|B_{j}|/2-1)}$ into the top edge of the graph and using $\sum_{j} |B_{j}| = \ell + k$, and the bound uses the assumption that $\ell + k  \geq 6$. This gives the convergence to $0$ at the desired topology when $\ell + k \geq 6$. 
	
	If $\ell + k = 4$ but $\pi$ is the single partition that contracts all four points together, then $\pi = B$ with $|B| = 4$. The right hand side of \eqref{eq:zero_bound} could then be reduced to
	\begin{align*}
		\epsilon^{5}
		\begin{tikzpicture} [scale=0.6,baseline=-0.3cm]
		\node at (0,-1) [root] (below) {}; 
		\node at (-1.5,0.7) [dot] (left) {}; 
		\node at (1.5,0.7) [dot] (right) {}; 
		\node at (0,1) {\tiny $7$}; 
		\draw[generic] (below) to node[labl]{\tiny $3$} (left); 
		\draw[dist] (below) to (right); 
		\draw[gepsilon] (left) to (right); 
		\end{tikzpicture}
		\quad \lesssim \quad \epsilon^{\delta}
		\begin{tikzpicture} [scale=0.6,baseline=-0.3cm]
		\node at (0,-1) [root] (below) {}; 
		\node at (-1.5,0.7) [dot] (left) {}; 
		\node at (1.5,0.7) [dot] (right) {}; 
		\node at (0,1) {\tiny $2+\delta$}; 
		\draw[generic] (below) to node[labl]{\tiny $3$} (left); 
		\draw[dist] (below) to (right); 
		\draw[gepsilon] (left) to (right); 
		\end{tikzpicture}
		, 
	\end{align*}
	which also gives the convergence to $0$ at the expected scale. 
	
	The case when $k=\ell=2$ and $\pi$ being the pair-wise contraction is the same, except that no positive power of $\epsilon$ could be created. This completes the proof. 
\end{proof}

We finally turn to the objects in the first chaos. They are obtained from the terms of the form
\begin{equation} \label{eq:kl}
	\begin{tikzpicture} [scale=0.5,baseline=-0.3cm]
	\node at (-1,0) [dot] (left) {}; 
	\node at (1,0) [dot] (right) {}; 
	\node at (-2,1) [dot] (aboveleft) {}; 
	\node at (-2,-1) [dot] (belowleft) {}; 
	\node at (2,-1) [dot] (belowright) {}; 
	\node at (2,1) [dot] (aboveright) {}; 
	\node at (-2,0.2) {$\vdots$}; 
	\node at (2,0.2) {$\vdots$}; 
	\node at (-2.5,0) {\scriptsize $\ell$}; 
	\node at (2.5,0) {\scriptsize $k$}; 
	\draw[kernel1] (left) to (right); 
	\draw[generic] (aboveleft) to (left); 
	\draw[generic] (aboveright) to (right); 
	\draw[generic] (belowleft) to (left); 
	\draw[generic] (belowright) to (right); 
	\end{tikzpicture}
	, 
\end{equation}
multiplied by $\epsilon^{\frac{1}{2}(\ell + k - 5)}$. Here, $\ell$ is odd, $k$ is even, and $\ell \geq 3$, $k \geq 2$. The elements in the first chaos of this object are obtained either from contracting $\ell$ vertices on the left with $k - 1$ vertices on the right, or contracting $\ell-1$ vertices on the left with $k$ vertices on the right. It turns out that we have a similar statement as in the $0$-th chaos: after mass renormalisation, the first chaos of the quantity $(\hPi^{\epsilon}_{0} \tau)(\varphi_{0}^{\lambda})$ has $p$-th moment bounded by $\epsilon^{\theta} \lambda^{(|\tau|+\delta)}$ for $\theta\ge 0$. Furthermore, $\theta = 0$ if and only if $k=2$, $\ell=3$ and $\pi$ is the pair-wise contraction (which could only happen in contraction of the second type). 

To see this, we first consider the contraction of the first type. The object in the first chaos is given by
\begin{equation} \label{eq:first1}
\epsilon^{\frac{1}{2}(\ell+k-5)} \phantom{1}
\begin{tikzpicture} [scale=0.6,baseline=-0.2cm]
%\filldraw[lightgray] (0,1.2) ellipse (18pt and 12pt); 
\node[cloud, cloud puffs=7.7, cloud ignores aspect, minimum width=1cm, minimum height=1.1cm,  draw=lightgray, fill=lightgray]  at (0,1.4) {};
\node at (0,1.2) {\scriptsize $\pi$}; 
\node at (0.5,-1.5) [root] (below) {}; 
\node at (-1.5,-.2) [dot]  (left) {}; 
\node at (1.5,-.2)  [dot]  (right) {}; 
\node at (-0.8,0.65) {$\cdots$}; 
\node at (0.9,0.65) {$\cdots$}; 
\node at (-1.2,1.3) {\tiny $(\ell)$}; 
\node at (1.5,1.3) {\tiny $(k-1)$}; 
\node at (1.5,-1.5) [var] (noise) {}; 
\draw[kepsilon] (noise) to (right); 
\draw[kernel1,bend right=20] (left) to (right); 
\draw[kernel, bend left=30] (-0.3,1) to (left); 
\draw[kernel, bend right=40] (-0.3,1.4) to (left); 
\draw[kernel, bend left = 40] (0.3,1.4) to (right); 
\draw[kernel, bend right=30] (0.3,1) to (right); 
\draw[testfcn] (right) to (below); 
\node at (-0.3,1) [dot] {}; 
\node at (0.3,1) [dot] {}; 
\node at (-0.3,1.4) [dot] {}; 
\node at (0.3,1.4) [dot] {}; 
\end{tikzpicture}
\phantom{1} - \phantom{1} C_{k-1,\ell,\pi}^{(\epsilon)} \phantom{1}
\begin{tikzpicture} [scale=0.7,baseline=-0.2cm]
\node at (0,0) [dot] (center) {}; 
\node at (0,1.2) [var] (above) {}; 
\node at (0,-1.2) [root] (below) {}; 
\draw[kepsilon] (above) to (center); 
\draw[testfcn] (center) to (below); 
\end{tikzpicture}
\phantom{1} = \phantom{1} - \epsilon^{\frac{1}{2}(\ell+k-5)} \phantom{1}
\begin{tikzpicture} [scale=0.6,baseline=-0.2cm]
%\filldraw[lightgray] (0,1.2) ellipse (18pt and 12pt); 
\node[cloud, cloud puffs=7.7, cloud ignores aspect, minimum width=1cm, minimum height=1.1cm,  draw=lightgray, fill=lightgray]  at (0,1.4) {};
\node at (0,1.2) {\scriptsize $\pi$}; 
\node at (0,-1.5) [root] (below) {}; 
\node at (-1.5,-0.2) [dot]  (left) {}; 
\node at (1.5,-0.2)  [dot]  (right) {}; 
\node at (-0.8,0.65) {$\cdots$}; 
\node at (0.9,0.65) {$\cdots$}; 
\node at (-1.2,1.3) {\tiny $(\ell)$}; 
\node at (1.5,1.3) {\tiny $(k-1)$}; 
\node at (1.5,-1.5) [var] (noise) {}; 
\draw[kepsilon] (noise) to (right); 
\draw[kernel] (left) to (below); 
\draw[kernel, bend left=30] (-0.3,1) to (left); 
\draw[kernel, bend right=40] (-0.3,1.4) to (left); 
\draw[kernel, bend left = 40] (0.3,1.4) to (right); 
\draw[kernel, bend right=30] (0.3,1) to (right); 
\draw[testfcn] (right) to (below); 
\node at (-0.3,1) [dot] {}; 
\node at (0.3,1) [dot] {}; 
\node at (-0.3,1.4) [dot] {}; 
\node at (0.3,1.4) [dot] {}; 
\end{tikzpicture}
. 
\end{equation}
Again, by applying  Lemma~\ref{lem:collapse} above %Lemma 4.7 in \cite{KPZCLT} 
and  \cite[Lemma~10.14]{Hai14a}, it is not hard to check that the sub-graph 
%\weijun{See the explanations here and the statement of Theorem \ref{th:moment-bound}. The sub-graph (with contracted cumulants) is actually a kernel, and the norm of the kernel appears as the proportionality constant in the general bound. In fact, we do not claim the graph with the noise is bounded by the one in \eqref{eq:first_first}, but rather  incorporate the $\epsilon$'s into the norm of that kernel, and get a bound with $\epsilon$.}
%\hao{This margin comment was very helpful to me (more helpful than what's written in the main content). I think it'd better to put some of these words into the content, especially point out Theorem \ref{th:moment-bound}.}\weijun{How does it look now?}
\begin{align*}
\epsilon^{\frac{1}{2}(\ell + k -5)} \phantom{1}
\begin{tikzpicture} [scale=0.8,baseline=0.6cm]
%\filldraw[lightgray] (0,1.2) ellipse (18pt and 12pt); 
\node[cloud, cloud puffs=7.7, cloud ignores aspect, minimum width=1cm, minimum height=1.1cm,  draw=lightgray, fill=lightgray]  at (0,1.2) {};
\node at (0,1.2) {\scriptsize $\pi$}; 
\node at (-2.4,1.2) [dot]  (left) {}; 
\node at (2.4,1.2)  [dot]  (right) {}; 
\draw[kernel, bend left=60] (-0.3,1) to (left); 
\draw[kernel, bend right=60] (-0.3,1.4) to (left); 
\draw[kernel, bend left = 60] (0.3,1.4) to (right); 
\draw[kernel, bend right=60] (0.3,1) to (right); 
\node at (-0.3,1) [dot] {}; 
\node at (0.3,1) [dot] {}; 
\node at (-0.3,1.4) [dot] {}; 
\node at (0.3,1.4) [dot] {}; 
\node at (-1.5,1.2) {\scriptsize $(\ell$)}; 
\node at (1.5,1.2) {\scriptsize $(k-1)$}; 
\node at (-1,1.3) {$\vdots$}; 
\node at (0.8,1.3) {$\vdots$}; 
\node at (-2.8,1.2) {$v^{\star}$}; 
\node at (2.8,1.2) {$v_{\star}$}; 
\end{tikzpicture}
\end{align*}
represents a kernel $J(v^{\star} - v_{\star})$ such that
\begin{equation} \label{eq:small_kernel}
\| J_{e} \|_{2+\delta; \beta} \lesssim \epsilon^{\theta}
\end{equation}
for all $\delta, \beta > 0$ and some $\theta > 0$ depending on $\delta$. The degree of the kernel $2+\delta$ is relatively easy to check. To see that the norm is bounded by some positive power of $\epsilon$, we first note if $k+\ell \geq 7$, then there is already some $\epsilon$'s multiplying the graph, and one could allocate these powers to the edges in such a way that there is always some positive power left. In the case $k=2$ and $\ell = 3$, then $\pi$ must be the contraction that groups together all three vertices from one side and one from the other side. In this case, one could also create a positive power of $\epsilon$ as in Proposition \ref{pr:zero}. Thus, in all cases, the right hand side of \eqref{eq:first1} could be reduced to the graph
\begin{equation} \label{eq:first}
\begin{tikzpicture} [scale=0.6,baseline=-0.3cm]
\node at (0,-1) [root] (below) {}; 
\node at (-1.5,0.7) [dot] (left) {}; 
\node at (1.5,0.7) [dot] (right) {}; 
\node at (1.5,-1) [var] (noise) {}; 
\node at (0,1) {\tiny $J_{e}$}; 
\draw[generic] (below) to node[labl]{\tiny $3$} (left); 
\draw[dist] (below) to (right); 
\draw[gepsilon] (left) to (right); 
\draw[kepsilon] (noise) to (right); 
\end{tikzpicture}, 
\end{equation}
where the $J_{e}$ satisfies the bound \eqref{eq:small_kernel}. Since its degree $a_{e}$ is $2+\delta$, it is straightforward to check Assumption \ref{ass:ele-graph}. By Theorem \ref{th:moment-bound} and the positive power of $\epsilon$ in front of $\|J_{e}\|_{2+\delta}$, we see that the $L^{p}$ norm of the quantity on the right hand side of \eqref{eq:first1} is bounded by $\epsilon^{\theta} \lambda^{-\frac{1}{2}-\delta}$. Again, we can take $\delta$ small enough so that it vanishes with the correct homogeneity. 

\begin{rmk} \label{rm:representation}
In what follows, we will represent the graph in \eqref{eq:first} and similar graphs by
\begin{equation} \label{eq:first_first}
\epsilon^{\theta} \phantom{1}
\begin{tikzpicture} [scale=0.6,baseline=-0.3cm]
\node at (0,-1) [root] (below) {}; 
\node at (-1.5,0.7) [dot] (left) {}; 
\node at (1.5,0.7) [dot] (right) {}; 
\node at (1.5,-1) [var] (noise) {}; 
\node at (0,1) {\tiny $2+\delta$}; 
\draw[generic] (below) to node[labl]{\tiny $3$} (left); 
\draw[dist] (below) to (right); 
\draw[gepsilon] (left) to (right); 
\draw[kepsilon] (noise) to (right); 
\end{tikzpicture}. 
\end{equation}
The use of this graph here is of course ambiguous as it \textit{does not} suggest that the quantity in \eqref{eq:first1} is bounded by such a graph. In fact, the correct bound is \eqref{eq:first}, which is different from the one above. But for simplicity of notations, we choose to write $\epsilon^{\theta}$ outside the graph and to regard the norm of the upper-edge being $\oO(1)$. 
\end{rmk}

We now turn to the second type of contractions. For those contractions, we have
%\hao{First graph: for the arrow pointing to origin, should it be the barred arrow actually?}\weijun{You are right, even the end point of the kernel was wrong... now corrected.}
\begin{equation} \label{eq:first2}
\begin{split}
&\phantom{111} \epsilon^{\frac{1}{2}(\ell+k-5)} \phantom{1}
\begin{tikzpicture} [scale=0.6,baseline=-0.2cm]
%\filldraw[lightgray] (0,1.2) ellipse (18pt and 12pt); 
\node[cloud, cloud puffs=7.7, cloud ignores aspect, minimum width=1cm, minimum height=1.1cm,  draw=lightgray, fill=lightgray]  at (0,1.3) {};
\node at (0,1.2) {\scriptsize $\pi$}; 
\node at (1.5,-1.5) [root] (below) {}; 
\node at (-1.5,-0.2) [dot]  (left) {}; 
\node at (1.5,-0.2)  [dot]  (right) {}; 
\node at (-0.8,0.65) {$\cdots$}; 
\node at (0.9,0.65) {$\cdots$}; 
\node at (-1.6,1.3) {\tiny $(\ell-1)$}; 
\node at (1.4,1.3) {\tiny $(k)$}; 
\node at (-1.5,-1.5) [var] (noise) {}; 
\draw[kepsilon] (noise) to (left); 
\draw[kernel1, bend right=20] (left) to (right); 
\draw[kernel, bend left=30] (-0.3,1) to (left); 
\draw[kernel, bend right=40] (-0.3,1.4) to (left); 
\draw[kernel, bend left = 40] (0.3,1.4) to (right); 
\draw[kernel, bend right=30] (0.3,1) to (right); 
\draw[testfcn] (right) to (below); 
\node at (-0.3,1) [dot] {}; 
\node at (0.3,1) [dot] {}; 
\node at (-0.3,1.4) [dot] {}; 
\node at (0.3,1.4) [dot] {}; 
\end{tikzpicture}
\phantom{1} - \phantom{1} C_{k,\ell-1,\pi}^{(\epsilon)} \phantom{1}
\begin{tikzpicture} [scale=0.7,baseline=-0.2cm]
\node at (0,0) [dot] (center) {}; 
\node at (0,1.2) [var] (above) {}; 
\node at (0,-1.2) [root] (below) {}; 
\draw[kepsilon] (above) to (center); 
\draw[testfcn] (center) to (below); 
\end{tikzpicture}
\phantom{1} \\
&= \phantom{1} \epsilon^{\frac{1}{2}(\ell+k-5)} \phantom{1} \bigg( \phantom{1}
\begin{tikzpicture} [scale=0.6,baseline=-0.2cm]
\node at (-1,1) [dot] (left) {}; 
\node at (1,1) [dot] (right) {}; 
\node at (-1,-1) [var] (noise) {}; 
\node at (1,-1) [root] (zero) {}; 
\draw[kernelBig] (left) to (right); 
\draw[kepsilon] (noise) to (left); 
\draw[testfcn] (right) to (zero); 
\end{tikzpicture}
\phantom{1} - \phantom{1}
\begin{tikzpicture} [scale=0.6,baseline=-0.2cm]
%\filldraw[lightgray] (0,1.2) ellipse (18pt and 12pt); 
\node[cloud, cloud puffs=7.7, cloud ignores aspect, minimum width=1cm, minimum height=1.1cm,  draw=lightgray, fill=lightgray]  at (0,1.3) {};
\node at (0,1.2) {\scriptsize $\pi$}; 
\node at (1.5,-1.5) [root] (below) {}; 
\node at (-1.5,-0.2) [dot]  (left) {}; 
\node at (1.5,-0.2)  [dot]  (right) {}; 
\node at (-0.8,0.65) {$\cdots$}; 
\node at (0.9,0.65) {$\cdots$}; 
\node at (-1.6,1.3) {\tiny $(\ell-1)$}; 
\node at (1.4,1.3) {\tiny $(k)$}; 
\node at (-1.5,-1.5) [var] (noise) {}; 
\draw[kepsilon] (noise) to (left); 
\draw[kernel] (left) to (below); 
\draw[kernel, bend left=30] (-0.3,1) to (left); 
\draw[kernel, bend right=40] (-0.3,1.4) to (left); 
\draw[kernel, bend left = 40] (0.3,1.4) to (right); 
\draw[kernel, bend right=30] (0.3,1) to (right); 
\draw[testfcn] (right) to (below); 
\node at (-0.3,1) [dot] {}; 
\node at (0.3,1) [dot] {}; 
\node at (-0.3,1.4) [dot] {}; 
\node at (0.3,1.4) [dot] {}; 
\end{tikzpicture}
\phantom{1} \bigg), 
\end{split}
\end{equation}
where the symbol \tikz \draw[kernelBig] (0,1) to (1,1); denotes the renormalised distribution
\begin{align*}
Q^{(\epsilon)} (\varphi) := \int Q^{(\epsilon)} (z) \big( \varphi(z) - \varphi(0) \big) dz, 
\end{align*}
where the kernel $Q$ is given by
\begin{align*}
Q^{(\epsilon)}(z-z') = Q_{k,\ell-1,\pi}^{(\epsilon)} (z-z') := \phantom{1}
\begin{tikzpicture} [scale=0.8,baseline=-0.0cm]
%\filldraw[lightgray] (0,1.2) ellipse (18pt and 12pt); 
\node[cloud, cloud puffs=7.7, cloud ignores aspect, minimum width=1cm, minimum height=1.1cm,  draw=lightgray, fill=lightgray]  at (0,1.3) {};
\node at (-1.5,0) [dot]  (left) {}; 
\node at (1.5,0)  [dot]  (right) {}; 
\node at (0,1.2) {\scriptsize $\pi$}; 
\node at (-0.8,0.65) {$\cdots$}; 
\node at (0.9,0.65) {$\cdots$}; 
\node at (-1.6,1.3) {\tiny $(\ell-1)$}; 
\node at (1.4,1.3) {\tiny $(k)$}; 
\node at (-1.5,-0.5) {$z$}; 
\node at (1.5,-0.5) {$z'$}; 
\draw[kernel,bend right=20] (left) to (right); 
\draw[kernel, bend left=30] (-0.3,1) to (left); 
\draw[kernel, bend right=40] (-0.3,1.4) to (left); 
\draw[kernel, bend left = 40] (0.3,1.4) to (right); 
\draw[kernel, bend right=30] (0.3,1) to (right); 
\node at (-0.3,1) [dot] {}; 
\node at (0.3,1) [dot] {}; 
\node at (-0.3,1.4) [dot] {}; 
\node at (0.3,1.4) [dot] {}; 
\end{tikzpicture}
. 
\end{align*}
It is clear that the waved line \tikz[baseline=-0.2] \draw[kernelBig] (0,0.15) to (1,0.15); carries the renormalisation label $r_{e} = -1$. If $\ell + k \geq 7$, or $\ell = 3, k = 2$ but $\pi$ contracts four points altogether, then either the graph itself carries $\epsilon$'s (when $\ell + k \geq 7$), or we can create a positive power of $\epsilon$ using the same argument as in Proposition \ref{pr:zero}. As a consequence, we can assign powers of $\epsilon$'s to the edge \tikz[baseline=-0.2] \draw[kernelBig] (0,0.15) to (1,0.15); so that it has degree $a_{e} = 5 + \delta$. It then follows that the second line of \eqref{eq:first2} is bounded by
\begin{align*}
\epsilon^{\theta} \phantom{1} \bigg( \phantom{1}
\begin{tikzpicture} [scale=0.5,baseline=-0.2cm]
\node at (-1,1) [dot] (left) {}; 
\node at (1,1) [dot] (right) {}; 
\node at (1,-1) [root] (below) {}; 
\node at (-1,-1) [var] (noise) {}; 
\node at (0,1.3) {\tiny $5+\delta,-1$}; 
\draw[generic] (left) to (right); 
\draw[kepsilon] (noise) to (left); 
\draw[testfcn] (right) to (below); 
\end{tikzpicture}
\phantom{1} - \phantom{1} 
\begin{tikzpicture} [scale=0.5,baseline=-0.2cm]
\node at (-1,1) [dot] (left) {}; 
\node at (1,1) [dot] (right) {}; 
\node at (1,-1) [root] (below) {}; 
\node at (-1,-1) [var] (noise) {}; 
\node at (0,1.3) {\tiny $2+\delta$}; 
\draw[generic] (left) to (right); 
\draw[kepsilon] (noise) to (left); 
\draw[generic] (left) to node[labl]{\tiny $3$} (below); 
\draw[testfcn] (right) to (below); 
\end{tikzpicture}
\phantom{1} \bigg)
\end{align*}
for some $\theta > 0$. It is straightforward to check that both graphs satisfy Assumption \ref{ass:ele-graph}, so their $p$-th moments are bounded by $\epsilon^{\theta} \lambda^{-(\frac{1}{2}+\delta) p}$. The case when $\ell = 3, k = 2$ and $\pi$ being the pair-wise contraction is the same except that $\theta = 0$, since there would be no room to create any positive power of $\epsilon$. Finally, we can choose $\delta$ small enough such that $-(\frac{1}{2} + \delta) > |\tau|$ so these bounds imply the first bound in Theorem \ref{th:main_bound}. The second bound for these objects can be obtained in the same way by considering the difference of the kernels $K_{\bar{\epsilon}} - K$ and applying \eqref{eq:kernel_difference}.

\section{Convergence to the limit} \label{sec:limit}

In this section, we collect all the results from the previous sections to identify the limiting process for $u_{\epsilon}$. Recall that our rescaled process $u_{\epsilon}$ satisfies the equation
\begin{equation} \label{eq:identification_eq}
\partial_{t} u_{\epsilon} = \Delta u_{\epsilon} - \sum_{j=0}^{m} \ha_{j}(\theta) \epsilon^{j-1} W_{2j+1, \mu_{\epsilon}}(u_{\epsilon}) + \zeta_{\epsilon}, 
\end{equation}
and we would like to show that $u_{\epsilon}$ converges to the solution of $\Phi^4_3 (\ha_{1})$ family. 

We first give a convergence result for renormalised models, which is the following theorem. 

\begin{thm} \label{th:model_convergence}
Let $\fM_{\epsilon} = M_{\epsilon} \sL_{\epsilon} (\zeta_{\epsilon})$ as built in the previous sections, and $\fM$ be the $\Phi^4_3$ model as built in \cite{Hai14a, HaiXu}. Then, we have $|\!|\!| \fM_{\epsilon}; \fM |\!|\!|_{\epsilon;0} \rightarrow 0$. 
\end{thm}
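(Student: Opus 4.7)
The plan is to exploit the moment bounds of Theorem \ref{th:main_bound} via Kolmogorov's continuity criterion and to identify the limit on each basis element of $\wW$, using a diagonal argument with an intermediate family of mollified models $\fM_{\epsilon,\bar{\epsilon}}$ to handle the symbols where the ideal bound \eqref{eq:ideal_bound} is unavailable. I would begin by observing that $|\!|\!|\fM_\epsilon;\fM|\!|\!|_{\epsilon;0}$ decomposes, symbol by symbol, into bounds on $\hPi^\epsilon_z\tau - \hPi_z\tau$ acting on rescaled test functions, on the renormalised characters $\widehat{f}^\epsilon_z(\sE^k_\ell\tau)$, and on the linear process $\rR^\epsilon\iI(\Xi)$ in $\cC^\eta$. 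Applying Kolmogorov's criterion to $L^p$ bounds with $p$ large converts the moment estimates of Theorem \ref{th:main_bound} into the required pathwise convergence, provided the limit on each $\tau$ has been identified.

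I would then split $\wW$ into two classes. If $\tau$ contains at least one instance of $\eE^k$ ($k\ge 1$), the proofs of Propositions \ref{pr:1st-order} and \ref{pr:final_graph} actually produce an extra factor $\epsilon^{\vartheta}$ with $\vartheta>0$, so that
\[
 \E|(\hPi^\epsilon_z\tau)(\varphi^\lambda_z)|^p \lesssim \epsilon^{\vartheta p}\,\lambda^{p(|\tau|+\theta)}.
\]
Since $\fM\in\sM_0$ vanishes on such symbols by construction, Kolmogorov gives the desired convergence to zero. The same argument handles $\widehat{f}^\epsilon(\sE^k_\ell\tau)$, while the convergence of $\rR^\epsilon\iI(\Xi)$ in $\cC^\eta$ reduces to a standard Kolmogorov estimate on $K*\zeta_\epsilon$ using the second cumulant $\fC_2^{(\epsilon)}$.

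For the remaining (standard $\Phi^4_3$) symbols, built solely from $\Xi$, $\iI$ and multiplication, no positive power of $\epsilon$ is available and I would invoke the diagonal argument from \cite{KPZCLT}. The second bound of Theorem \ref{th:main_bound} yields
\[
 |\!|\!|\fM_\epsilon;\fM_{\epsilon,\bar{\epsilon}}|\!|\!|_{\epsilon;0} \lesssim \bar{\epsilon}^{\theta}\qquad \text{uniformly in } \epsilon.
\]
For fixed $\bar{\epsilon}>0$, the smoothed rescaled field $\zeta_{\epsilon,\bar{\epsilon}}$ converges as $\epsilon\to 0$ to $\xi*\rho_{\bar{\epsilon}}$, with $\xi$ a space-time white noise, by a CLT for stationary fields with integrable cumulants under Assumption \ref{ass:noise}; the higher-order cumulants $\fC_n^{(\epsilon)}$ ($n\ge 3$) smoothed at the fixed scale $\bar{\epsilon}$ vanish in the limit thanks to \eqref{eq:cumu-close}. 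Continuity of $\sL_\epsilon$ on smooth noise, together with the fact that $M_{\epsilon,\bar{\epsilon}}$ reduces to its Gaussian Wick/mass counterpart in this limit, then gives $\fM_{\epsilon,\bar{\epsilon}}\to\fM_{\bar{\epsilon}}$, where $\fM_{\bar{\epsilon}}$ is the Gaussian $\Phi^4_3$ model built from $\xi*\rho_{\bar{\epsilon}}$. The Gaussian convergence result of \cite{HaiXu} gives $\fM_{\bar{\epsilon}}\to\fM$ as $\bar{\epsilon}\to 0$, and a triangle inequality along a diagonal $\bar{\epsilon}=\bar{\epsilon}(\epsilon)\to 0$ closes the argument.

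The main obstacle is the intermediate convergence $\fM_{\epsilon,\bar{\epsilon}}\to\fM_{\bar{\epsilon}}$ at fixed $\bar{\epsilon}$ in the full $|\!|\!|\cdot|\!|\!|_{\epsilon;0}$ topology rather than merely in finite-dimensional distributions: it is a functional CLT for polynomial functionals of a stationary mixing field in a strong topology. The decisive inputs are the uniform tightness on each labelled graph provided by Theorem \ref{th:main_bound}, the quantitative vanishing of all higher-cumulant sub-graphs at fixed $\bar{\epsilon}$ controlled via \eqref{eq:cumu-close}, and the triangle/diagonal structure which entirely bypasses the non-Gaussian ideal bound \eqref{eq:ideal_bound}.
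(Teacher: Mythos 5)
Your proposal follows essentially the same approach as the paper: decompose $|\!|\!|\fM_\epsilon;\fM|\!|\!|_{\epsilon;0}$ into its constituent pieces, dispatch the $\eE$-type symbols directly via the strong bound carrying an extra power of $\epsilon$, and handle the pure $\Phi^4_3$ symbols by the diagonal argument through the mollified family $\fM_{\epsilon,\bar\epsilon}$, combining the uniform-in-$\epsilon$ estimate from Theorem~\ref{th:main_bound}, the Gaussian convergence from \cite{HaiXu}, and the CLT-type convergence $\fM_{\epsilon,\bar\epsilon}\to\fM_{0,\bar\epsilon}$ at fixed $\bar\epsilon$ (which the paper delegates to Section~6 of \cite{KPZCLT}, exactly the functional CLT step you flag as the main obstacle). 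The only cosmetic difference is that you organize by symbol type while the paper organizes by norm component; the substance and the key citations are the same.
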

\begin{proof}
By definition, we have
\begin{equation} \label{eq:model_convergence}
\begin{split}
|\!|\!| \fM_{\epsilon}; \fM |\!|\!|_{\epsilon;0} &= |\!|\!| \fM_{\epsilon}; \fM |\!|\!| + \| \fM_{\epsilon} \|_{\epsilon} \\
&+ \sup_{t \in [0,1]} \| \rR^{\epsilon} \iI(\Xi) (t, \cdot) - \rR \iI(\Xi) (t, \cdot) \|_{\cC^{\eta}}. 
\end{split}
\end{equation}
We first deal the second term on the right hand side above. By the same calculations as in Section 4 of \cite{HaiXu}, we know there exists $\theta > 0$ such that
\begin{equation} \label{eq:group_bound}
\E |\widehat{f}^{\epsilon}_{z}(\sE^{k}_{\ell}\tau)| \lesssim \epsilon^{|\tau| + k - |\ell| + \theta}
\end{equation}
for every $\tau \in \wW$ such that $|\eE^{k}(\tau)| > 0$, and
\begin{equation} \label{eq:positive_bound}
\E |\hPi^{\epsilon}_{z}(\psi_{z}^{\lambda})| \lesssim \lambda^{\beta + \theta} \epsilon^{|\tau| - \beta}, \qquad \beta = \frac{6}{5}
\end{equation}
for all test function $\psi$ that annihilates affine functions. Note that here non-Gaussianity of the noise does not create additional difficulty as we need the first moment only. In addition, both bounds \eqref{eq:group_bound} and \eqref{eq:positive_bound} are uniform over all space-time points $z$ in compact sets and all $\epsilon, \lambda < 1$. By definition of $\|\cdot\|_{\epsilon}$ (\eqref{eq:e_partial_norm}), it follows immediately that
\begin{equation}
\|\fM_{\epsilon}\|_{\epsilon} \rightarrow 0
\end{equation}
in probability. For the third term on the right hand side of \eqref{eq:model_convergence}, we have the desired bound if the models are Gaussian (\cite{HaiXu}). In order to make use of the Gaussian case, we use a diagonal argument as in \cite{KPZCLT}. Recall that for $\epsilon, \bar{\epsilon} \geq 0$, we have set
\begin{align*}
\zeta_{\epsilon, \bar{\eps}} := \zeta_{\epsilon} * \rho_{\bar{\eps}}
\end{align*}
for some space-time mollifier $\rho$ at scale $\bar{\epsilon}$. If $\widehat{\rR}$ denotes the reconstruction operator associated to the limiting $\Phi^4_3$ model (as in \cite{Hai14a} and \cite{HaiXu}) and $\widehat{\rR}^{\epsilon}$ is that associated to the model $M_{\epsilon} \sL_{\epsilon}(\zeta_{\epsilon})$, then by triangle inequality, we have for every $\eta < -\frac{1}{2}$ the bound
\begin{equation} \label{eq:diagonal_noise}
\begin{split}
&\phantom{1} \| \rR^{\epsilon} \iI (\Xi) (t, \cdot) - \rR \iI (\Xi) (t, \cdot) \|_{\cC^{\eta}} \leq \| K*\zeta_{\epsilon}(t, \cdot) - K * \zeta_{\epsilon, \bar{\epsilon}}(t, \cdot) \|_{\cC^{\eta}} \\
&+ \| K * \zeta_{\epsilon, \bar{\epsilon}}(t, \cdot) - K * \zeta_{0, \bar{\epsilon}}(t, \cdot) \|_{\cC^{\eta}} + \| K * \zeta_{0, \bar{\epsilon}}(t, \cdot) - K * \xi (t, \cdot) \|_{\cC^{\eta}}
\end{split}. 
\end{equation}
The last term above only involve Gaussian objects, and is independent of $\epsilon$. Thus, it follows from Proposition 9.5 in \cite{Hai14a} that this term is bounded by $\bar{\epsilon}^{\theta}$ for some positive $\theta$. Also, since $K * \zeta_{\epsilon, \bar{\epsilon}} = \rho_{\bar{\epsilon}} * K * \zeta_{\epsilon}$, it follows from (9.16) in \cite{Hai14a} that
\begin{align*}
\| K*\zeta_{\epsilon}(t, \cdot) - K * \zeta_{\epsilon, \bar{\epsilon}}(t, \cdot) \|_{\cC^{\eta}} \lesssim \bar{\epsilon}^{\theta} \| K * \zeta_{\epsilon} \|_{\xX}, 
\end{align*}
where $\xX = \cC^{\frac{\theta}{2}} (\R, \cC^{\eta + \theta}(\T^{3}))$, and the proportionality constant is independent of $\epsilon < 1$ and $\zeta_{\epsilon}$. By taking $\theta$ small enough, we get
\begin{align*}
\E \| K*\zeta_{\epsilon}(t, \cdot) - K * \zeta_{\epsilon, \bar{\epsilon}}(t, \cdot) \|_{\cC^{\eta}} \lesssim \bar{\epsilon}^{\theta}, 
\end{align*}
uniformly over all $\epsilon < 1$. As for the middle term $\| K * \zeta_{\epsilon, \bar{\epsilon}}(t, \cdot) - K * \zeta_{0, \bar{\epsilon}}(t, \cdot) \|_{\cC^{\eta}}$, for any fixed $\bar{\epsilon}$, the convergence to $0$ in $\cC^{\eta}$ as $\epsilon \rightarrow 0$ is obvious since everything is smooth once $\bar{\epsilon}$ is fixed. Thus, by sending $\epsilon \rightarrow 0$ first, and then $\bar{\epsilon} \rightarrow 0$, we deduce from the above arguments and the form of the right hand side of \eqref{eq:diagonal_noise} that
\begin{align*}
\E \| \rR^{\epsilon} \iI (\Xi) (t, \cdot) - \rR \iI (\Xi) (t, \cdot) \|_{\cC^{\eta}} \rightarrow 0. 
\end{align*}
We finally turn to the first term on the right hand side of \eqref{eq:model_convergence}, $|\!|\!| \fM_{\epsilon}; \fM |\!|\!|$. For this term, we also use the diagonal argument as above and apply the known results for Gaussian models. Let
\begin{align*}
\fM_{\epsilon, \bar{\epsilon}} := M_{\epsilon, \bar{\eps}} \sL_{\epsilon} (\zeta_{\epsilon, \bar{\eps}}), 
\end{align*}
then we have
\begin{equation} \label{eq:diagonal}
\E |\!|\!| \fM_{\epsilon}; \fM |\!|\!| \leq \E |\!|\!| \fM_{\epsilon}; \fM_{\epsilon, \bar{\eps}} |\!|\!| + \E |\!|\!| \fM_{\epsilon,\bar{\eps}}; \fM_{0,\bar{\eps}} |\!|\!| + \E |\!|\!| \fM_{0,\bar{\epsilon}}; \fM |\!|\!|. 
\end{equation}
Note that all of the norms above are usual norms on modelled distributions, and they do not depend on $\epsilon$. The last term is the distance between the Gaussian model $\fM_{0, \bar{\epsilon}}$ and $\fM$, and is bounded by $\bar{\eps}^{\theta}$ for some positive $\theta$ by the convergence results in \cite{HaiXu}. For the first term, by Theorem \ref{th:main_bound} and then the argument for Theorem 10.7 in \cite{Hai14a}, we can show that it is uniformly (in $\epsilon$) bounded by $\bar{\eps}^{\theta}$ for some positive $\theta$. Thus, by sending $\epsilon \rightarrow 0$, \eqref{eq:diagonal} reduces to
\begin{align*}
\limsup_{\epsilon \rightarrow 0} \E |\!|\!| \fM_{\epsilon}; \fM |\!|\!| < C \bar{\epsilon}^{\theta} + \limsup_{\epsilon \rightarrow 0} \E |\!|\!| \fM_{\epsilon,\bar{\eps}}; \fM_{0,\bar{\eps}} |\!|\!|, 
\end{align*}
where $C$ is independent of $\epsilon$. Now, the only remaining term involves two smooth models. For any fixed $\bar{\epsilon}$, by the argument from Section 6 in \cite{KPZCLT}, this term also vanishes as $\epsilon \rightarrow 0$. By further sending $\bar{\epsilon}$ to $0$, we deduce that
\begin{align*}
\E |\!|\!| \fM_{\epsilon}; \fM |\!|\!| \rightarrow 0. 
\end{align*}
This implies the convergence of the solutions to the $\Phi^4_3$ family. 
\end{proof}

We are now ready to state and prove our final result. 

\begin{thm} \label{th:identify_limit}
Let $\phi_{0}^{(\epsilon)} \in \cC^{\gamma,\eta}_{\epsilon}$ such that $\|\phi_{0}^{(\epsilon)}; \phi_{0}\|_{\gamma,\eta;\epsilon} \rightarrow 0$ for some $\phi_{0} \in \cC^{\eta}$. Let $u_{\epsilon}$ solves the PDE \eqref{eq:identification_eq} with initial condition $\phi_{0}^{(\epsilon)}$, where $V_{\theta}$ is a potential satisfying \eqref{eq:pitchfork_v} with. Then, there exists $a < 0$ such that for $\theta = a \epsilon |\log \epsilon| + \oO(\epsilon)$, $u_{\epsilon}$ converges in probability to the solution of $\Phi^4_3 (\ha_{3})$ with initial data $\phi_{0}$, where $\ha_{3} = \frac{\partial^{4} \scal{V}}{\partial x^{4}} (0,0)$, and the convergence takes place in $\cC([0,T], \cC^{\eta}(\T^{3}))$. 
\end{thm}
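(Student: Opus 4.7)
The plan is to lift equation \eqref{eq:identification_eq} to the abstract regularity-structure framework of Section~\ref{sec:setup}, use Theorem~\ref{th:renormalised_eq} to read off the renormalised PDE satisfied by the reconstructed solution, and then combine the model-convergence Theorem~\ref{th:model_convergence} with the continuous dependence statement of Theorem~\ref{th:abstract} to identify the limit.

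First I would set $\lambda_0^{(\epsilon)}=0$ and $\lambda_j^{(\epsilon)} = \ha_j(\theta)$ for $j = 1, \ldots, m$ in the abstract fixed-point equation \eqref{eq:abstract} and solve it in $\dD^{\gamma,\eta}_{\epsilon}$ with initial data $\phi_0^{(\epsilon)}$, using the renormalised model $\fM_\epsilon = M_\epsilon\sL_\epsilon(\zeta_\epsilon)$ from Section~\ref{sec:renormalisation} with the mass constants $C_{k,\ell}^{(\epsilon)}$ fixed as in Section~\ref{sec:values}. By Theorem~\ref{th:renormalised_eq}, the reconstruction then satisfies
\[
\d_t u = \Delta u - \sum_{j=1}^m \ha_j(\theta)\,\epsilon^{j-1}\,W_{2j+1,\mu_\epsilon}(u) - C^{(\epsilon)} u + \zeta_\epsilon,
\]
with $C^{(\epsilon)}$ the quadratic combination of the $\lambda_k^{(\epsilon)}$'s given there. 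Since $W_{1,\mu_\epsilon}(u) = u$ by symmetry of $\mu_\epsilon$, matching this equation against \eqref{eq:identification_eq} reduces to the single scalar identity $\ha_0(\theta) = \epsilon\, C^{(\epsilon)}$.

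Next I would solve this identity for $\theta$. The pitchfork assumption \eqref{eq:pitchfork} gives $\ha_0(\theta) = \ha_0'(0)\theta + \oO(\theta^2)$. On the other side, Section~\ref{sec:values} tells us that the only logarithmic divergence in $C^{(\epsilon)}$ comes from the single graph \eqref{eq:log_constant}, equal to $C_{\log}|\log\epsilon| + \oO(1)$; all other $C_{k,\ell,\pi}^{(\epsilon)}$ converge to finite limits by \eqref{eq:finite_constant}. Since $\lambda_j^{(\epsilon)} \to \ha_j$ as $\epsilon \to 0$, only the $(k,\ell) = (1,1)$ term in the formula for $C$ in Theorem~\ref{th:renormalised_eq} contributes a divergent part, and the combinatorial prefactors from that formula together with $\pi!$ in \eqref{eq:sum_constants} give $C^{(\epsilon)} = B\,\ha_1^2\,|\log\epsilon| + \oO(1)$ for an explicit $B > 0$. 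Inverting the scalar matching yields
\[
\theta = \frac{B\,\ha_1^2}{\ha_0'(0)}\,\epsilon|\log\epsilon| + \oO(\epsilon) =: a\,\epsilon|\log\epsilon| + \oO(\epsilon),
\]
and the signs $\ha_1^2 > 0$, $B > 0$, $\ha_0'(0) < 0$ force $a < 0$. The flexibility in the $\oO(\epsilon)$ term tunes the finite renormalisation and thereby selects the intended member of the $\Phi^4_3(\ha_1)$ family in the limit.

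With this choice of $\theta$, Theorem~\ref{th:model_convergence} gives $|\!|\!|\fM_\epsilon;\fM|\!|\!|_{\epsilon;0} \to 0$ in probability, where $\fM \in \sM_0$ is the limiting $\Phi^4_3$ model. Together with $\lambda_j^{(\epsilon)} \to \ha_j$ and $\|\phi_0^{(\epsilon)};\phi_0\|_{\gamma,\eta;\epsilon} \to 0$ by hypothesis, Theorem~\ref{th:abstract} supplies a uniform-in-$\epsilon$ local existence time $T > 0$ and a limit $\Phi \in \dD^{\gamma,\eta}_0$ with $|\!|\!|\Phi^{(\epsilon)};\Phi|\!|\!|_{\gamma,\eta;\epsilon} \to 0$. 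For $\fM \in \sM_0$ one has $f_z(\sE^k_\ell \tau) = 0$, so any basis element containing $\eE^k$ for $k \geq 1$ has vanishing realisation under $\fM$; hence only the $j=1$ term of the abstract equation survives in the limit, and $\Phi$ satisfies the abstract $\Phi^4_3(\ha_1)$ fixed-point problem with initial data $\phi_0$. Reconstructing yields $u_\epsilon \to u$ in $C([0,T],\cC^\eta(\T^3))$ in probability, and a standard continuation argument, using global well-posedness of the $\Phi^4_3$ limit equation together with the uniform local existence time, extends the convergence to any fixed horizon.

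The main obstacle will be the matching step of paragraph two: isolating the precise leading $|\log\epsilon|$ coefficient of $C^{(\epsilon)}$, verifying via \eqref{eq:finite_constant} that no other graph contributes a logarithmic divergence, and checking that the combinatorial signs combined with $\ha_0'(0) < 0$ force $a < 0$. The heavier analytic work, namely the uniform moment bounds and the convergence of the renormalised model, has already been carried out in Theorems~\ref{th:main_bound} and \ref{th:model_convergence}, so Theorem~\ref{th:identify_limit} reduces to this scalar calculation together with a standard application of the abstract machinery.
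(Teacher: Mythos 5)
Your overall route is the same as the paper's: lift to the abstract equation, invoke Theorem~\ref{th:renormalised_eq} to read off the reconstructed PDE, match coefficients against \eqref{eq:identification_eq}, choose $\theta$ to absorb the logarithmic divergence of $C^{(\epsilon)}$, and conclude via Theorems~\ref{th:abstract} and \ref{th:model_convergence}. Two points are worth correcting.

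First, you set $\lambda_0^{(\epsilon)}=0$ and require the exact identity $\ha_0(\theta)=\epsilon\,C^{(\epsilon)}$, but then claim that the $\oO(\epsilon)$ part of $\theta$ remains free to select the member of the $\Phi^4_3$ family. These two statements are incompatible: if you solve $\ha_0(\theta)=\epsilon\,C^{(\epsilon)}(\theta)$ exactly (which you can, by the implicit function theorem using $\ha_0'(0)\neq 0$), then $\theta$ is pinned down and you land on the specific member with limiting linear coefficient $\lambda_0=0$. The paper instead keeps $\lambda_0^{(\epsilon)}$ as the free slack, choosing $\lambda_0^{(\epsilon)}=\epsilon^{-1}\ha_0(\theta)-C_\epsilon$, and only demands that this quantity \emph{converges}. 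Then varying the $\oO(\epsilon)$ part of $\theta$ changes the finite limit $\lambda_0$ and thereby does pick out different family members. Both versions prove the theorem as stated (your $\lambda_0=0$ member is a perfectly good member of the family), but you should not present the $\oO(\epsilon)$ term as tunable while simultaneously forcing $\lambda_0^{(\epsilon)}=0$; either drop the flexibility claim or drop the exact matching.

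Second, your final continuation step, invoking global well-posedness of $\Phi^4_3$ to push the convergence to arbitrary horizons, goes beyond what the theorem asserts and beyond what this paper establishes. The $T$ in the statement is the uniform local existence time supplied by Theorem~\ref{th:abstract}, and the paper's proof stops there; global well-posedness of $\Phi^4_3$ is not proven or cited here, so you should not rely on it. On the sign question you have the right reasoning: $a = B\,\ha_1^2/\ha_0'(0)<0$ because $B=9C_{\log}>0$ comes from the single logarithmically divergent graph \eqref{eq:log_constant} (all other $C^{(\epsilon)}_{k,\ell,\pi}$ are $\oO(1)$ by \eqref{eq:finite_constant}) and $\ha_0'(0)<0$ from the pitchfork assumption.
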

\begin{proof}
Let $C_{k,\ell}^{(\epsilon)}$'s be the renormalisation constants defined in the previous section, and $M_{\epsilon}$ be the renormalisation map defined with these constants. By Theorem \ref{th:renormalised_eq}, if $\Phi \in \dD^{\gamma,\eta}_{\epsilon}$ solves the fixed point equation \eqref{eq:abstract} and $\widehat{\rR}^{\epsilon}$ is the reconstruction map associated with the renormalised model $\fM_{\epsilon} = M_{\epsilon} \sL_{\epsilon} (\zeta_{\epsilon})$, then the function $v_{\epsilon} = \rR^{\epsilon} \Phi^{(\epsilon)}$ solves the equation
\begin{equation} \label{eq:reconstructed_eq}
\partial_{t} v_{\epsilon} = \Delta v_{\epsilon} - \sum_{j=1}^{m} \lambda_{j}^{(\epsilon)} \epsilon^{j-1} W_{2j+1, \mu_{\epsilon}}(v_{\epsilon}) - \lambda_{0}^{(\epsilon)} v_{\epsilon} - C_{\epsilon} v_{\epsilon} + \zeta_{\epsilon}
\end{equation}
with initial data $\phi_{0}^{(\epsilon)}$, and
\begin{equation} \label{eq:mass_constant}
C_{\epsilon} = \sum_{k, \ell = 1}^{m} \lambda_{k}^{(\epsilon)} \lambda_{\ell}^{(\epsilon)} \bigg( (2k+1)(2k) C_{2k-1,2\ell+1}^{(\epsilon)} + (2k+1)(2\ell+1) C_{2k,2\ell}^{(\epsilon)} \bigg). 
\end{equation}
Comparing \eqref{eq:identification_eq} and \eqref{eq:reconstructed_eq}, we see that as long as we choose
\begin{align*}
\lambda_{j}^{(\epsilon)} = \ha_{j}(\theta) \quad (j \geq 1), \qquad \lambda_{0}^{(\epsilon)} = \epsilon^{-1} \ha_{0}(\theta) - C_{\epsilon}, 
\end{align*}
then the two right hand sides are identical, and we have $u_{\epsilon} = v_{\epsilon} = \widehat{\rR}^{\epsilon} \Phi^{(\epsilon)}$. Also, with the above choice of $\lambda_{j}^{(\epsilon)}$'s, it is straightforward to see that $\lambda_{j}^{(\epsilon)} \rightarrow \ha_{j}$ for each $j \geq 1$ as long as $\theta = \theta(\epsilon) \rightarrow 0$. As for $\lambda_{0}^{(\epsilon)}$, we see from \eqref{eq:mass_constant}, the convergence of $\lambda_{1}^{(\epsilon)}$ to $\ha_{1}$, and the behaviour of $C_{k,\ell}^{(\epsilon)}$'s as in \eqref{eq:log_constant} and \eqref{eq:finite_constant} that we have $C_{\epsilon} = 9 \ha_{1}^{2} C_{\log} |\log \epsilon| + \oO(1)$. Thus, if we take
\begin{align*}
\theta = \frac{9 \ha_{1}^{2} C_{\log}}{\ha_{0}'} \epsilon |\log \epsilon| + \oO(\epsilon),  
\end{align*}
the pre-factors of logarithmic divergences cancel out, so $\lambda_{0}^{(\epsilon)}$ converges to a finite limit. It then follows from Theorems \ref{th:abstract} and \ref{th:model_convergence} that $u_{\epsilon} = \widehat{\rR}^{\epsilon} \Phi^{(\epsilon)}$ converges to the $\Phi^4_3 (\ha_{1})$ family of solutions. 
\end{proof}

\begin{rmk}
	If $\zeta$ is not symmetric, then the object \eqref{eq:kl} (after multiplication of suitable $\epsilon$'s) with $k$ even and $\ell$ odd will have a divergent $0$-th chaos component of order $\epsilon^{-\frac{1}{2}}$. Thus, in order for the renormalised models to converge, one needs to further subtract a constant of the same order. However, such a direct subtraction cannot be attained by merely adjusting the value $\theta$, and must will change the assumption on the potential $V$. 
	
	We expect that in such a case, similar to \cite{HaiXu} where the authors consider Gaussian noise but asymmetric potential, one needs to re-center and rescale the process $u$ (at a scale depending on $\theta$, but in general smaller than $\epsilon^{-1}$) to kill that divergence and obtain either $\Phi^3_3$ or O.U. process in the limit. 
\end{rmk}

\bibliographystyle{Martin}
\bibliography{Refs}

\end{document}